\documentclass[12pt]{article}

\usepackage{subfigure}

 \usepackage[utf8]{inputenc}
  \usepackage[noTeX]{mmap}
  \usepackage[T1]{fontenc}
\usepackage{authblk}

\usepackage{latexsym}
\usepackage{graphics}
\usepackage{amsmath}
\usepackage{xspace}
\usepackage{amssymb}
\usepackage{psfrag}
\usepackage{pst-all}
\usepackage{color}

\definecolor{Red}{rgb}{1,0,0}
\definecolor{Blue}{rgb}{0,0,1}
\definecolor{Olive}{rgb}{0.41,0.55,0.13}
\definecolor{Yarok}{rgb}{0,0.5,0}
\definecolor{Green}{rgb}{0,1,0}
\definecolor{MGreen}{rgb}{0,0.8,0}
\definecolor{DGreen}{rgb}{0,0.55,0}
\definecolor{Yellow}{rgb}{1,1,0}
\definecolor{Cyan}{rgb}{0,1,1}
\definecolor{Magenta}{rgb}{1,0,1}
\definecolor{Orange}{rgb}{1,.5,0}
\definecolor{Violet}{rgb}{.5,0,.5}
\definecolor{Purple}{rgb}{.75,0,.25}
\definecolor{Brown}{rgb}{.75,.5,.25}
\definecolor{Grey}{rgb}{.5,.5,.5}

\newcommand{\ignore}[1]{\relax}

\definecolor{Red}{rgb}{1,0,0}
\definecolor{Blue}{rgb}{0,0,1}
\definecolor{Olive}{rgb}{0.41,0.55,0.13}
\definecolor{Green}{rgb}{0,1,0}
\definecolor{MGreen}{rgb}{0,0.8,0}
\definecolor{DGreen}{rgb}{0,0.55,0}
\definecolor{Yellow}{rgb}{1,1,0}
\definecolor{Cyan}{rgb}{0,1,1}
\definecolor{Magenta}{rgb}{1,0,1}
\definecolor{Orange}{rgb}{1,.5,0}
\definecolor{Violet}{rgb}{.5,0,.5}
\definecolor{Purple}{rgb}{.75,0,.25}
\definecolor{Brown}{rgb}{.75,.5,.25}
\definecolor{Grey}{rgb}{.5,.5,.5}
\definecolor{Pink}{rgb}{1,0,1}
\definecolor{DBrown}{rgb}{.5,.34,.16}
\definecolor{Black}{rgb}{0,0,0}

\usepackage{bbm}
\usepackage{mathrsfs}
\usepackage{epsfig}
\usepackage{soul,color}
\usepackage{graphicx}
\usepackage{amsfonts}
\usepackage{amsthm}
\usepackage[figuresright]{rotating}
\usepackage{dcolumn}
\usepackage{physics}
\usepackage{float}
\usepackage{bm}
\usepackage{verbatim}
\usepackage[normalem]{ulem}
\usepackage[colorlinks,linkcolor=blue,anchorcolor=blue,citecolor=blue,urlcolor=blue]{hyperref}
\usepackage{mathtools}

\usepackage{algorithm}
\usepackage{algorithmic}

\usepackage{xcolor}
\newtheorem{theorem}{Theorem}
\newtheorem{lemma}[theorem]{Lemma}
\newtheorem{conjecture}[theorem]{Conjecture}
\newtheorem{proposition}[theorem]{Proposition}
\newtheorem{corollary}[theorem]{Corollary}
\newtheorem{definition}{Definition}[section]

\newcommand{\E}{\mathbb{E}}

\newcommand{\C}{\mathbb{C}}

\newcommand{\R}{\mathbb{R}}
\newcommand{\Z}{\mathbb{Z}}

\definecolor{ao}{rgb}{0.0, 0.0, 1.0}

\renewcommand{\P}{\mathbb{P}}
\newcommand{\B}{\mathbb{B}}

\newcommand{\indicator}{{\bf{1}}}

\newcommand{\mcA}{\mathcal{A}}
\newcommand{\mcB}{\mathcal{B}}

\newcommand{\mcE}{\mathcal{E}}
\newcommand{\mcF}{\mathcal{F}}

\newcommand{\mcH}{\mathcal{H}}
\newcommand{\mcI}{\mathcal{I}}
\newcommand{\mcJ}{\mathcal{J}}

\newcommand{\mcS}{\mathcal{S}}

\newcommand{\tvar}{\operatorname{Var}}

\begin{document}

\title{Circuit complexity lower bounds for quantum spin glasses}

\author{
Omar Al-Ghattas\thanks{Eric and Wendy Schmidt Center, Broad Institute of MIT and Harvard, \href{mailto:oag@mit.edu}{oag@mit.edu}}
\qquad
David Gamarnik\thanks{Operations Research Center and Center for Statistics and Data Sciences, IDSS, Massachusetts Institute of Technology,
\href{mailto:gamarnik@mit.edu}{gamarnik@mit.edu}
}}

\maketitle

\begin{abstract}
A central question in quantum information theory concerns the circuit-complexity of states arising from standard models of quantum many-body systems. We study this question for quantum \(p\)-spin glasses, random Hamiltonians whose interaction terms act on \(p\)-tuples of qubits through Pauli strings. Recent work~\cite{anschuetz2025bounds} showed that, for these models, the optimum energy is separated from the best energy achievable by unentangled, or product, states. This leaves unresolved whether even depth-one circuits could prepare near-ground states, since such circuits can already generate entanglement.

We rule out this possibility by showing that the entanglement needed to close the product-state gap cannot be generated at shallow depth. When the average interaction degree grows with $n$, we prove a logarithmic-depth lower bound: for all sufficiently large fixed $p$, any circuit preparing an $n$-qubit state whose normalized energy is within a fixed positive constant of the optimum must have depth at least $\Omega_p(\log n)$. In the bounded-average-degree regime, we prove the corresponding fixed-depth obstruction: for every fixed depth $D$, a sufficiently large degree prefactor rules out depth-$D$ preparation of near-ground states. Both results hold uniformly over circuits using an arbitrary number of ancilla qubits.

Conceptually, our results give an obstruction in the spirit of the No Low-Energy Trivial States (NLTS) problem of Freedman and Hastings~\cite{freedman2014quantum}, but for random quantum spin glasses rather than code-based Hamiltonians such as~\cite{anshu2023nlts}, for which ground states can be prepared by polynomial-size circuits. Specifically, in our context,
the Hamiltonian is a random, generally noncommuting $p$-spin glass, and there is no known efficient procedure for preparing its near-ground states. This change in setting opens a probabilistic route to NLTS-like questions: we recast state-preparation lower bounds for random quantum Hamiltonians as uniform control of Gaussian processes indexed by shallow circuits.

\vspace{.1in}

\end{abstract}

\tableofcontents

\section{Introduction}
Quantum state complexity has emerged as a central topic at the interface of theoretical computer science, quantum computing, and several areas of physics, including the physics of
black holes. This prominence is due in part to remarkable connections between quantum many-body systems, conformal field theories, and gravity theory through the AdS/CFT holographic duality~\cite{baiguera2026quantum}. The notion of quantum state complexity is conceptually simple: given a quantum state $|\psi\rangle$, it is the size of the smallest quantum circuit, measured for instance, by depth or gate count, that prepares $|\psi\rangle$ either exactly or to a prescribed accuracy.

Standard counting arguments show that, if an $n$-qubit state is chosen at random, then with overwhelming probability it requires an exponentially large circuit to prepare even approximately.

A key question is whether such complexity lower bounds can be established for physically or computationally natural families of states, such as ground states of succinctly described Hamiltonians, meaning states of minimal energy, or states obtained by Schrödinger evolution under such Hamiltonians.

For example, it is conjectured that certain quantum states arising in the AdS/CFT theory have high state complexity, and that the growth of this complexity is related to the growth of gravitational wormholes~\cite{susskind2016computational, baiguera2026quantum}. 

This notion is inherently quantum. State complexity—or more precisely state \emph{preparation} complexity, which we abbreviate as SPC—has no meaningful analogue for deterministic classical states: any fixed classical configuration can be prepared from any other by a depth-one circuit of single-qubit gates. SPC should also not be confused with \emph{algorithmic} complexity, where the task is to design a circuit or algorithm that works uniformly for an entire class of problems whose descriptions are supplied as inputs.

In theoretical computer science, much of the interest in SPC is motivated by the quantum PCP (q-PCP) conjecture, the quantum analogue of the classical PCP theorem and a major open problem in quantum complexity theory. Although we do not state the conjecture formally here, one of its consequences, if it were established, would be the existence of quantum Hamiltonians consisting of linearly many local terms for which every near-ground state has superconstant SPC.
Freedman and Hastings isolated this implication as the No Low-Energy Trivial State (NLTS) conjecture~\cite{freedman2014quantum}, which was recently proved in breakthrough work by Anshu, Breuckmann and Nirkhe~\cite{anshu2023nlts}. Their construction uses an intricate Hamiltonian associated with recent constructions of quantum LDPC codes~\cite{PanteleevKalachev2022GoodQLDPC, LeverrierZemor2022QuantumTanner}, as discussed further below. Notably, an exact ground state of this Hamiltonian can be prepared by a polynomial-size Clifford circuit.

This leaves open the question of SPC for ground states that arise more naturally, such as those associated with random optimization problems. Quantum spin glasses, which will be defined formally in the next section, and related models such as the Sachdev-Ye-Kitaev (SYK) model, provide a prominent class of examples. In this paper, we ask: what is the SPC of near-ground states of quantum $p$-spin glasses? 

In contrast to classical \(p\)-spin models, which have been studied extensively
in both physics~\cite{MezardParisiVirasoro} and
mathematics~\cite{TalagrandBook, panchenko2013Sherrington}, quantum
\(p\)-spin models are much less understood. Nevertheless, several useful
bounds on their ground-state energies are known and will play an important
role in this paper. For the mean-field quantum \(p\)-spin Hamiltonian, the
natural normalization of the maximum energy is by \(\sqrt n\). With this
normalization, known results imply that, for fixed \(p\) and large \(n\), the
ground-state energy is at least of order \(3^{p/2}/p\), up to universal
constants~\cite{anschuetz2025strongly}. In contrast, the best energy achievable
by product, or unentangled, states is at most $(1+o_p(1))\sqrt{2\log p}$ as \(p\to\infty\)~\cite{anschuetz2025bounds}. Thus the product-state scale is
much smaller than the ground-state scale for large \(p\). In particular,
near-ground states must exhibit nontrivial entanglement. The natural question
is then how much entanglement, or, more precisely, how much SPC, is required.

\subsection{Summary of models and main results}

Schematically, the quantum \(p\)-spin Hamiltonian is a random \(p\)-local Hamiltonian formed by summing independent Gaussian-weighted interaction terms over choices of \(p\) qubits. Each interaction term is a Pauli string, with one of the three nonidentity Pauli operators acting on each of the selected qubits. Our main results establish circuit-depth lower bounds in three regimes, distinguished by the number of $p$-spin interactions present in the random Hamiltonian.

Before describing the regimes, we fix a convention. We formulate the spin-glass optimization problem in maximum-energy form: throughout this paper, ``near-ground'' means close to the maximum energy of the Hamiltonian. This is only a sign convention since replacing the Hamiltonian by its negative gives the usual minimum-energy formulation. In each of these regimes, the same separation between the maximum energy and the best product-state energy described above applies. 

The first regime is the mean-field setting. Here all
$\binom{n}{p}$ possible choices of $p$ spins are present; for each such choice, the Hamiltonian includes all $3^p$ Pauli local interaction terms supported on those spins, with independent Gaussian coefficients.

The second is the diluted, or sparse-hypergraph, version of the same model.
Starting from the mean-field model, each of the $\binom{n}{p}$ possible
choices of $p$ spins is retained independently with probability $\kappa_n$,
and deleted otherwise. When such a choice of spins is retained, all local
interaction terms supported on those spins are included. Thus the expected
number of retained $p$-spin interaction supports is
$\kappa_n\binom{n}{p}$. We take $\kappa_n=n^{-\beta}$ with $\beta<p-1$.
With this choice of parameters, every spin participates in $\omega(1)$
retained interactions in expectation. Equivalently, the underlying support
hypergraph has growing average degree.

In the third regime, the same dilution mechanism is used at the
bounded-average-degree scale. Namely, we set
$\kappa_{n,\zeta}=\zeta n^{-\beta}$ with $\beta=p-1$, for a large constant $\zeta$.
With this scaling, each spin participates in a bounded number of retained
interactions in expectation, and the expected total number of retained
$p$-spin interaction supports is linear in $n$.

We now summarize our main results. In the mean-field and growing-average-degree regimes, we establish a logarithmic-in-\(n\) lower bound on circuit depth. More precisely, for all sufficiently large fixed \(p\), there
exists \(c_p>0\) such that the following holds with high probability over the
randomness defining the Hamiltonian. No circuit of depth at most a sufficiently
small constant times \(\log n\), even with arbitrary ancillas, can prepare a
state whose normalized energy is within \(c_p\) of the ground-state optimum.

At the bounded-average-degree scale, our result takes a fixed-depth form.
For every sufficiently large \(p\) and every prescribed depth \(D\), no
depth-\(D\) circuit, even with arbitrary ancillas, can prepare a state whose
normalized energy is within \(c_p\) of the ground-state optimum, with
probability tending to one in the iterated limit \(n\to\infty\) followed by
\(\zeta\to\infty\).

The common mechanism behind these statements is that shallow circuits cannot
asymptotically improve on the product-state optimum. Since the product-state
optimum is of order \(\sqrt{\log p}\), whereas the ground-state optimum is
separated from it for all sufficiently large \(p\), this yields the stated
lower bounds for near-ground-state preparation.

\subsection{Proof ideas}

We now describe the main ideas behind our proofs. It is instructive to first contrast our approach with a related result of~\cite{anschuetz2025strongly}. They show that near-ground states of the $p$-body SYK model have SPC polynomially large in the number of qubits. In particular, circuits preparing such states must have polynomially large depth. Their argument relies crucially on a special feature of the SYK model: for every fixed quantum state, its energy is a centered Gaussian random variable with variance at most $1/n^{\Theta(p/2)}$, where the implicit constant is universal. A union bound over all states preparable by circuits of size $s$ then shows that the largest energy attainable by such circuits is $o(1)$ unless $s$ is polynomially large in $n.$ 

This argument is effective only when the number of ancilla qubits is controlled. With $r$ ancillas, the circuit-counting term scales with the total number $n+r$ of qubits. Thus the resulting size lower bound yields a meaningful depth lower bound only when $r$ is not too large relative to $n$. In particular, the argument does not give an ancilla-uniform depth lower bound.

The approach of~\cite{anschuetz2025strongly} also does not extend directly to the quantum spin-glass model considered in this work. The reason is that the worst-case variance of the $\sqrt{n}$-normalized energy of a fixed state is  $\Theta(1/n)$, whereas, as we recall, it is $1/n^{\Theta(p/2)}$ for the SYK model. On the other hand, even for depth-one circuits, the relevant
discretized class of preparable states has size $n^{\Theta(n)}$. Since
$\log n^{\Theta(n)}=\Theta(n\log n)$, a union bound over this class controls
the maximum only at scale
$\sqrt{n^{-1}\log n^{\Theta(n)}}=\Theta(\sqrt{\log n})=\omega(1)$.
For fixed $p$, this is already above the $O_p(1)$ ground-state scale, so the union-bound estimate is vacuous for our purposes.

This failure identifies the main impediment to a direct counting argument: even shallow circuits form a large enough class so that the resulting union-bound estimate already lies above the ground-state scale. The main novelty of our approach is to change the object being bounded. We do not try to control the raw energy of each circuit state directly. Instead, we subtract the part of the energy determined by its one-qubit marginals and study the resulting residual process.

More concretely, consider a depth-$D$ circuit acting on $n+r$ qubits prepared initially at a trivial starting state.
The Hamiltonian is assumed to act only on the first $n$ physical qubits while the remaining $r$ qubits are ancillas. We denote by $\rho$ the reduced state of the circuit output on the physical qubits,
obtained by tracing out the ancillas. Let $m(\rho)$ be the product density operator formed from its one-qubit marginals. We write
\[
    \operatorname{Tr}(H\rho)
    =
    \operatorname{Tr}(Hm(\rho))
    +
    R(\rho),
    \qquad
    R(\rho):=\operatorname{Tr}\bigl(H(\rho-m(\rho))\bigr).
\]
The first term is bounded above by the product-state optimum, independently of the circuit depth. Thus any possible improvement
over product states must come from the residual term $R(\rho)$, which isolates
the non-product $p$-point correlations of the circuit state.

The key point is that this residual has much smaller variance than the raw
energy. This variance reduction comes from a locality principle. The backward
light-cone, or backward support, of an output qubit consists of the input
qubits that can influence it after tracing the circuit backwards. Our crucial observation is that if the backward supports of the physical qubits appearing in a $p$-spin interaction
are disjoint, then the corresponding $p$-point expectation factorizes into
one-point expectations, and the residual coefficient vanishes. Thus only
interactions involving overlapping backward supports can contribute to the
residual.

Quantitatively, while the raw normalized energy of a fixed circuit state has variance of order $1/n$, the normalized residual $R(\rho)/\sqrt n$ has variance of order $2^D/n^2$. The same backward-support structure also controls the entropy of the circuit class, uniformly over the number of ancillas. A direct cover of all depth-$D$ circuits on $n+r$ qubits would depend on $r$. Instead,
for the purpose of computing the physical energy, we replace each circuit by an equivalent pruned circuit obtained by keeping only the gates that can influence the physical output qubits. Gates outside this pruned circuit cannot affect the physical marginal, and hence cannot affect the energy. The remaining active ancillas can then be relabeled into a canonical block, so the resulting covering bound is independent of the ambient number $r$.

After this pruning and relabeling, the effective entropy of depth-$D$ circuits is of order $Dn2^D(\log n+D)$. Thus standard Gaussian supremum bounds suggest the estimate
\[
    \frac{1}{\sqrt n}
    \sup_{\rho \text{ arising from a depth-}D\text{ circuit}}
    |R(\rho)|
    \lesssim
    \sqrt{\frac{4^D D(\log n+D)}{n}}.
\]
This tends to zero for logarithmic depths with a sufficiently small prefactor. Consequently, the residual is negligible, and depth-$D$ circuits cannot asymptotically improve over the product-state energy in this regime.

The diluted models require additional work to make this residual argument compatible with random sparsification. In the growing-average-degree regime, the retained hypergraph is dense enough that the residual variance and entropy bounds can be transferred from the mean-field setting after paying a sampling cost. At bounded average degree, this global comparison is no longer available, and the proof instead localizes the variance and entropy estimates
to the part of the retained hypergraph visible to the backward-support dependency graph. The same pruning and relabeling idea keeps these bounds uniform over the number of ancillas.

\subsection{NLTS connection and open problems}\label{ssec:NLTS-connection}
Our results make a conceptual connection to the NLTS problem, which we now recall. The NLTS property asks for local Hamiltonians with the feature that all sufficiently near-ground states have nontrivial circuit complexity: for some small constant $\epsilon>0$, no state within energy $\epsilon n$ of the optimum can be prepared by a constant-depth circuit. In its standard form, the Hamiltonian is required to be a sum of $O(n)$ local terms on $n$ qubits. 

The existing NLTS landscape is quite different from the spin-glass setting considered here. The known full NLTS construction is code-based: the obstruction to shallow preparation is enforced by the structure of a quantum LDPC code~\cite{anshu2023nlts}. We give a brief self-contained overview of this construction and its contrast with our setting in
Appendix~\ref{ssec:ABN}. Earlier results of Eldar and Harrow~\cite{eldar2017local} and Anshu et al.~\cite{anshu2022construction}
establish the weaker combinatorial form of NLTS, again through code-theoretic
constructions. To our knowledge, the only non-code construction is~\cite{anschuetz2024combinatorial}, which proves only the combinatorial
version. Apart from this exception, the Hamiltonians in these works consist of
commuting local terms and are frustration-free, with ground energy zero. Their ground spaces are deliberately engineered, and exact ground states can be prepared by polynomial-size circuits.  In contrast, the Hamiltonians studied here arise from the quantum $p$-spin glass model. As such they  are random,
consist of  noncommuting local Hamiltonians, and have no known polynomial-size circuit preparation.

In our first two regimes, where the average degree grows with $n$, our results give logarithmic-depth lower bounds for near-ground states. In this sense, they provide a full, rather than merely combinatorial, NLTS-type obstruction for quantum spin-glass Hamiltonians. The caveat is that the corresponding Hamiltonians have a superlinear number of local terms, and
therefore fall outside the standard linear-size formulation of NLTS. 

In the linear-size, bounded-average-degree case, our third regime, we prove a prefactor-dependent fixed-depth obstruction: for every prescribed depth $D$, the probability that a depth-$D$ circuit prepares a near-ground state can be made arbitrarily small by first taking $n$ large and then taking the average-degree prefactor $\zeta$ large. In light of the NLTS problem, we conjecture that a stronger fixed-$\zeta$ result holds, namely that for all sufficiently large constant $\zeta$, the minimum circuit depth is in fact $\Omega(\log n)$. Such a result would yield the full NLTS property
in the linear-size regime, but appears to be out of reach of our current understanding and proof techniques.  More broadly, we conjecture that near-ground states of quantum $p$-spin glasses cannot be prepared by polynomial-size circuits, even with arbitrary ancillas. Proving such a statement, however, would go far beyond the techniques used here: it would amount to a polynomial-size state-preparation lower bound for natural low-energy quantum witnesses. Such a result is closely tied to the difficulty of separating $\mathrm{BQP}$ from $\mathrm{QMA}$; see, for example,~\cite{nielsen2010quantum}.

The proof strategy used here is also quite different from the approaches used in prior NLTS constructions. The known constructions of combinatorial or full NLTS Hamiltonians, including~\cite{eldar2017local,anshu2022construction,anshu2023nlts,anschuetz2024combinatorial}, are based on the isoperimetric framework introduced by Eldar and Harrow~\cite{eldar2017local}. In broad terms, this framework exploits the fact that probability distributions arising from shallow quantum circuits satisfy an isoperimetric property: every set must have boundary flow comparable to its interior mass. The Hamiltonians in these constructions are then designed so that near-ground states induce probability measures which cluster into well-separated regions. Such clustering produces sets with very small boundary flow, contradicting the isoperimetric behavior forced by shallow circuit preparation. This rules out shallow circuits as generators of near-ground states.

This strategy is difficult to apply directly to quantum spin glasses, because we do not currently know whether their near-ground states exhibit an analogous clustering structure. A step in this direction was obtained in~\cite{anschuetz2026quantum}, but such clustering information is not presently available in a form that would yield the lower bounds proved here. Our approach therefore opens a probabilistic route to NLTS-type questions, recasting state-preparation lower bounds for random quantum Hamiltonians in the language of spin glasses and Gaussian processes rather than code structure and isoperimetry.

\subsection{Notation and paper outline}

We finish the introduction with some notation and a brief outline of the paper.
Throughout, \(\R\) and \(\C\) denote the real and complex numbers, respectively. We write
\(\Z\) for the integers and \(\Z_+\) for the nonnegative integers. For a
positive integer \(m\), let \([m]:=\{1,\ldots,m\}\). Let $\B_2^d:=\{x\in\R^d:\|x\|_2\le 1\}$ and $\mcS^{d-1}:=\{x\in\R^d:\|x\|_2=1\}$ denote the Euclidean unit ball and unit sphere in $\mathbb{R}^d$, respectively. If \(B\) is a finite set, \(|B|\) denotes its cardinality. Given a graph \(\mathsf G\), we write \(E(\mathsf G)\) for its edge set and
\(\Delta(\mathsf G)\) for its maximum degree. For \(x>0\), we write \(\log_+ x:=\max\{\log x,0\}\).

We write \(o_n(1)\) for a quantity that vanishes as \(n\to\infty\), with all
remaining fixed parameters held fixed, and \(o_p(1)\) for a quantity that
vanishes as \(p\to\infty\). More generally, for nonnegative quantities \(a_n\)
and \(b_n\), we write \(a_n=O(b_n)\), \(a_n=\Omega(b_n)\), and
\(a_n=\Theta(b_n)\) in the usual sense. Subscripts indicate allowed
dependencies of the implicit constants, so that \(O_p(\cdot)\),
\(\Omega_p(\cdot)\), and \(\Theta_p(\cdot)\) allow constants depending on
\(p\), but not on \(n\). We also write \(a_n=o(b_n)\) if \(a_n/b_n\to0\), and
\(a_n=\omega(b_n)\) if \(a_n/b_n\to\infty\). Constants denoted by \(c,C\) may change from line to line.
Subscripts indicate allowed dependencies, so that, for instance, \(C_p\) may depend on \(p\), but not on \(n\), \(D\), \(r\), or \(\zeta\), unless explicitly stated otherwise.

The remainder of the paper is organized as follows. 
Section~\ref{sec:setup-main-results} defines the quantum \(p\)-spin models, the depth-\(D\) circuit classes with ancillas, the relevant energy benchmarks, and states the main results, Theorems~\ref{thm:main-meanfield-gap} and~\ref{thm:main-critical-gap}. Section~\ref{sec:proofs} contains all technical proofs. The Appendix collects background on classical and quantum \(p\)-spin models, the comparison with code-based NLTS constructions, and auxiliary technical estimates used in the paper.

\section{Setup and main results}
\label{sec:setup-main-results}

We now give the formal setup for the paper. We first define the mean-field quantum \(p\)-spin Hamiltonian and its two diluted variants, corresponding respectively to growing and bounded average degree. We then introduce the class of depth-\(D\) quantum circuits considered in our lower bounds.

After setting up these models, we define the normalized maximum energies
associated with unrestricted states, product states, and depth-\(D\) circuit states. These quantities provide the benchmarks for the main results: the product-state optimum is the energy achievable without entanglement, while the ground-state optimum is the target energy scale. We then state our main
theorems, Theorems~\ref{thm:main-meanfield-gap} and
\ref{thm:main-critical-gap}, which show that shallow circuits cannot generate sufficient entanglement to move beyond the product-state scale. We close the section by formulating an open conjecture: an NLTS-type strengthening in the bounded-average-degree regime.

\subsection{Quantum $p$-spin glasses}
We now define the quantum \(p\)-spin Hamiltonians studied in this paper. For readers less familiar with spin glasses, Appendix~\ref{ssec:classical-quantum-p-spin-background}
recalls the classical \(p\)-spin model and explains how the quantum model below can be viewed as its noncommutative analogue.
Consider an \(n\)-qubit physical system with Hilbert space 
$\mcH_n=(\C^2)^{\otimes n}.$ Let
\[
    \sigma^1
    =
    \begin{pmatrix}
        0&1\\
        1&0
    \end{pmatrix},
    \qquad
    \sigma^2
    =
    \begin{pmatrix}
        0&-i\\
        i&0
    \end{pmatrix},
    \qquad
    \sigma^3
    =
    \begin{pmatrix}
        1&0\\
        0&-1
    \end{pmatrix}
\]
denote the Pauli matrices. Equivalently, \(\sigma^1,\sigma^2,\sigma^3\) are the usual \(X,Y,Z\) Pauli matrices, respectively. For \(i\in[n]\) and \(a\in\{1,2,3\}\), 
\(\sigma_i^a\) denotes the operator acting as \(\sigma^a\) on the \(i\)-th
qubit and as the identity on all other qubits. For fixed \(p\ge2\), let
\[
    \mcI_p^n
    :=
    \{(i_1,\dots,i_p)\in[n]^p:1\le i_1<\cdots<i_p\le n\}
\]
be the set of increasing $p$-tuples in \([n]\). For
\(I=(i_1,\dots,i_p)\in\mcI_p^n\) and
\(a=(a_1,\dots,a_p)\in\{1,2,3\}^p\), define the Pauli string $P_I^a
    :=
    \prod_{s=1}^p \sigma_{i_s}^{a_s}$.
Since the factors act on distinct qubits, the order of the factors in $I$ 
is immaterial. The mean-field quantum $p$-spin Hamiltonian is a 
random self-adjoint operator
\(H_{n,p}:\mcH_n\to\mcH_n\) defined by
\begin{equation}
\label{eq:mean-field-Hamiltonian}
H_{n,p}
=
\binom np^{-1/2}
\sum_{I\in\mcI_p^n}
\sum_{a\in\{1,2,3\}^p}
g_{I,a}P_I^a ,
\end{equation}
where  \(\{g_{I,a}\}\) are independent standard Gaussian
random variables.

Now we introduce a diluted quantum spin glass model which is obtained from
(\ref{eq:mean-field-Hamiltonian}) by
retaining a random subset of the $p$-tuples \(I\in\mcI_p^n\), while keeping all Pauli strings supported on each retained $p$-tuple. Define 
\begin{equation}
\label{eq:sparseHamiltonian}
    H_{n,p}^{(\beta)}
    :=
    m_n^{-1/2}
    \sum_{I\in\mcI_p^n}
    \xi_I
    \sum_{a\in\{1,2,3\}^p}
    g_{I,a}P_I^a .
\end{equation}
Here, the variables \(\{\xi_I:I\in\mcI_p^n\}\) are i.i.d. Bernoulli random
variables with \(\xi_I\sim\operatorname{Bernoulli}(\kappa_n)\), independent of
the Gaussian coefficients. Set
\[
    \kappa_n:=n^{-\beta},
    \qquad
    m_n:=\kappa_n\binom np.
\]
Thus \(\kappa_n\) is the retention probability and $m_n=\Theta_p(n^{p-\beta})$ is the expected number of retained $p$-tuples.
In the growing average degree setting we take
 \(0< \beta<p-1\). Then $m_n=\omega(n)$, and the expected degree of each vertex in the hypergraph diverges.
 
 The third and final regime we consider, namely the 
 bounded-average-degree regime, 
 is obtained by introducing a constant \(\zeta>0\) and setting
\[
    \kappa_{n,\zeta}:=\zeta n^{-(p-1)},
    \qquad
    m_{n,\zeta}:=\kappa_{n,\zeta}\binom np .
\]
Then $m_{n,\zeta} =\frac{\zeta}{p!}\,n(1+o_n(1))$.
Let \(\{\xi_I:I\in\mcI_p^n\}\) be independent Bernoulli random
variables with
\(\xi_I\sim\operatorname{Bernoulli}(\kappa_{n,\zeta})\), again independent of
the Gaussian coefficients. The bounded-average-degree diluted Hamiltonian is then defined analogously by
\begin{equation}
\label{eq:zetaHamiltonian}
    H_{n,p}^{(\zeta)}
    :=
    m_{n,\zeta}^{-1/2}
    \sum_{I\in\mcI_p^n}
    \xi_I
    \sum_{a\in\{1,2,3\}^p}
    g_{I,a}P_I^a .
\end{equation}
Thus in this regime the number of local
Hamiltonians is linear, and the average number of Hamiltonians per qubit is $O(1)$.

\subsection{Quantum circuit classes}
Let
\[
    \mathsf{S}_{n,\mathrm{all}}
    :=
    \{|\psi\rangle\in(\C^2)^{\otimes n}:\|\psi\|=1\}
\]
denote the set of all pure states in $\mcH_n$. Given \(|\psi\rangle \in\mcH_n\) and any Hamiltonian $H$,
the energy of this state with respect to $H$ is defined as 
$X(\psi):= \langle \psi|H|\psi\rangle$.  
The set of all product (unentangled) 
states $\mathsf{S}_{n,\mathrm{prod}}\subset \mathsf{S}_{n,\mathrm{all}}$ is defined as
\[
    \mathsf{S}_{n,\mathrm{prod}}
    :=
    \bigg\{
    |\psi\rangle=\bigotimes_{i=1}^n|\psi_i\rangle:
    |\psi_i\rangle\in\C^2,\ \|\psi_i\|=1
    \bigg\}.
\]
Next we introduce the set of states preparable by  quantum circuits with a fixed depth and  with
ancilla qubits.
Let \(r=r_n\ge 0\) denote the number of ancilla qubits. For notational
simplicity, we assume that \(n+r\) is even, so that each layer may be described
by a perfect matching. This entails no loss of generality, since if \(n+r\) is odd,
one may add a single unused ancilla qubit and pair any otherwise unmatched
qubits using identity gates. Since all of our bounds are uniform over \(r\),
this changes none of the estimates.

We work on the full computational Hilbert space $\mcH_{n,r} := (\C^2)^{\otimes n}\otimes(\C^2)^{\otimes r}$, where the first \(n\) tensor factors correspond to the physical qubits and the
remaining $r$ tensor factors correspond to ancillas. Circuits are initialized
at $|0^n\rangle\otimes |0^r\rangle \in \mcH_{n,r}$. The Hamiltonians
$H_{n,p}, H_{n,p}^{(\beta)}, H_{n,p}^{(\zeta)}$ introduced earlier, are now assumed
to act on the first $n$ qubits. That is, for each of these Hamiltonians $H$
defined on  $n$ qubits, we identify $H$ with $H\otimes I_{\mathrm{anc}}$, 
where \(I_{\mathrm{anc}}\) is the identity operator on the $r$ ancilla qubits.

We now define quantum circuits built from two-qubit unitary gates. This restriction is without loss of generality, since allowing gates acting on any fixed number $c$ of qubits would only change the constants in our final bounds.

Let \(U(4)\) be the unitary group on \(\C^2\otimes \C^2\simeq \C^4\).
Let \(\mathrm{PM}(n+r)\) denote the set of perfect matchings on
\([n+r]\). Given a matching \(M\in\mathrm{PM}(n+r)\), a layer
supported on \(M\) is defined as any  unitary of the form
\[
    V_M
    =
    \bigotimes_{(u,v)\in M} U_{uv},
    \qquad
    U_{uv}\in U(4).
\]
 Thus \(U_{uv}\) is an arbitrary two-qubit gate acting on the qubits indexed by $u$ and $v$, and as the identity on all other qubits. Since the edges of a matching are disjoint, the gates within a layer commute. Given any positive integer $D$ and 
 any sequence of $D$ matchings
 \(M_1,\dots,M_D\in\mathrm{PM}(n+r)\), define
\[
\mathsf{U}^{(D)}_{n,r}(M_1,\dots,M_D)
:=
\bigg\{
U=V_D\cdots V_2V_1:
V_\ell=\bigotimes_{e\in M_\ell}U_e^{(\ell)},
\quad
U_e^{(\ell)}\in U(4)
\bigg\}.
\]
The set of  states preparable by such circuits is denoted by
\[
\mathsf{S}_{n,r}^{(D)}(M_1,\dots,M_D)
:=
\left\{
U(|0^n\rangle\otimes |0^r\rangle):
U\in\mathsf{U}^{(D)}_{n,r}(M_1,\dots,M_D)
\right\}.
\]

The set of all  states preparable by depth-\(D\) circuits is
defined as
\[
\mathsf{S}_{n,r}^{(D)}
:=
\bigcup_{M_1,\dots,M_D\in\mathrm{PM}(n+r)}
\mathsf{S}_{n,r}^{(D)}(M_1,\dots,M_D).
\]
Thus \(\mathsf{S}_{n,r}^{(D)}\) is the class of pure states on the joint
physical-ancilla system obtained from
\(|0^n\rangle\otimes |0^r\rangle\) by applying \(D\) layers of disjoint
two-qubit gates. When evaluating an observable \(A\) acting on the first
\(n\) physical qubits, we identify it with \(A\otimes I_{\mathrm{anc}}\) on
the full \(n+r\) qubit system. In this sense, the joint physical-ancilla output 
\(|\phi\rangle\in\mathsf{S}_{n,r}^{(D)}\)
of a circuit
 is accessed through the expectations $\langle \phi|A\otimes I_{\mathrm{anc}}|\phi\rangle $. We say that
 $|\phi\rangle$
  has state preparation complexity (SPC) at most \(D\) with \(r\) ancillas.

It is convenient to parametrize states by their circuit architecture and gate choices. Define
\[
\mathsf{T}_{n,r}^{(D)}
:=
\left\{
(M_1,\dots,M_D,U):
M_1,\dots,M_D\in\mathrm{PM}(n+r),\
U\in\mathsf{U}^{(D)}_{n,r}(M_1,\dots,M_D)
\right\}.
\]
For \(\theta=(M_1,\dots,M_D,U)\in\mathsf{T}_{n,r}^{(D)}\), we write $\phi_\theta := U(|0^n\rangle\otimes |0^r\rangle)$.

\subsection{Normalized energies}
We introduce the ground state energy as the (normalized) maximum energy achieved by states,
and maximum energies achieved by product states and states with preparation complexity
 $D$:
\[
    E_{n,\mathrm{gs}}(p)
    :=
    \frac1{\sqrt n}
    \sup_{\psi\in\mathsf{S}_{n,\mathrm{all}}}
    \langle\psi|H_{n,p}|\psi\rangle,
    \qquad
    E_{n,\mathrm{prod}}(p)
    :=
    \frac1{\sqrt n}
    \sup_{\psi\in\mathsf{S}_{n,\mathrm{prod}}}
    \langle\psi|H_{n,p}|\psi\rangle.
\]
The normalized energy is of constant order. For depth-\(D\) circuits with $r$ ancillas, let
\[
    E_{n,r}^{(D)}(p)
    :=
    \frac1{\sqrt n}
    \sup_{\phi\in\mathsf{S}_{n,r}^{(D)}}
    \langle\phi|
    H_{n,p}\otimes I_{\mathrm{anc}}
    |\phi\rangle .
\]
We note that \(E_{n,\mathrm{gs}}(p)\) and \(E_{n,\mathrm{prod}}(p)\) are
defined on the original \(n\)-qubit Hilbert space, without ancillas. These
quantities serve as benchmarks for comparison with \(E_{n,r}^{(D)}(p)\):
\(E_{n,\mathrm{gs}}(p)\) is the unrestricted optimum, while
\(E_{n,\mathrm{prod}}(p)\) is the product-state optimum. The relevant bounds
for these benchmark quantities were obtained in~\cite{anschuetz2025bounds}
and~\cite{anschuetz2025strongly}, and are summarized in Theorem~\ref{thm:benchmarks}.

We define $E_{n,\mathrm{gs}}^{(\beta)}(p)$, $E_{n,\mathrm{prod}}^{(\beta)}(p)$, 
$E_{n,r}^{(\beta,D)}(p)$, 
$E_{n,\mathrm{gs}}^{(\zeta)}(p)$, $E_{n,\mathrm{prod}}^{(\zeta)}(p)$, 
$E_{n,r}^{(\zeta,D)}(p)$ similarly when the Hamiltonian $H_{n,p}$
is replaced by $H_{n,p}^{(\beta)}$ and $H_{n,p}^{(\zeta)}$, respectively.

\subsection{Circuit complexity lower bounds}

Before stating our main results, we record the energy estimates that will
serve as benchmarks. The mean-field estimates are due to
\cite[Theorems~2 and 6]{anschuetz2025bounds} and
\cite[Corollary~D.2]{anschuetz2025strongly}. The corresponding estimates
for the two diluted models follow from
Corollary~\ref{cor:sparse-mean-field-transfer-hp}, proved in
Appendix~\ref{app:universality-transfer}.

\begin{theorem}
[\cite{anschuetz2025bounds},\cite{anschuetz2025strongly}]
\label{thm:benchmarks}
There exists a universal constant \(C>0\) such that the following hold.

First, for all  $p \ge 3$, \(0<\beta<p-1\), with high
probability as \(n\to\infty\),
\begin{align*}
\frac{3^{p/2}}{Cp}
\le
E_{n,\mathrm{gs}}(p),\,
E_{n,\mathrm{gs}}^{(\beta)}(p)
&\le
3^{p/2}\sqrt{2\log p}, \\
E_{n,\mathrm{prod}}(p),\,
E_{n,\mathrm{prod}}^{(\beta)}(p)
&\le
(1+o_p(1))\sqrt{2\log p}.
\end{align*}
Second, for every \(p\ge3\), there exists
\(\zeta_0=\zeta_0(p)\) such that, for every fixed
\(\zeta\ge\zeta_0\), with high probability as \(n\to\infty\),
\begin{align*}
\frac{3^{p/2}}{Cp}
\le
E_{n,\mathrm{gs}}^{(\zeta)}(p)
&\le
3^{p/2}\sqrt{2\log p}, \\
E_{n,\mathrm{prod}}^{(\zeta)}(p)
&\le
(1+o_p(1))\sqrt{2\log p}.
\end{align*}
Here  \(o_p(1)\) denotes a function of $p$ which converges to zero as $p\to\infty$.
\end{theorem}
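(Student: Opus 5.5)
The plan has three parts: import the mean-field bounds, prove a generic upper bound for any normalized Hamiltonian, and transfer the lower bounds to the diluted models.

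\textbf{Mean-field bounds.} The mean-field statements are imported directly. The lower bound $E_{n,\mathrm{gs}}(p)\ge 3^{p/2}/(Cp)$ is \cite[Corollary~D.2]{anschuetz2025strongly}. The product-state bound $(1+o_p(1))\sqrt{2\log p}$ and the upper bound $3^{p/2}\sqrt{2\log p}$ are \cite[Theorems~2 and~6]{anschuetz2025bounds}. So all the work is in the two diluted models, and the strategy is to transfer each estimate from the mean-field model. This is the role of Corollary~\ref{cor:sparse-mean-field-transfer-hp}.

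\textbf{Upper bounds: a generic Gaussian estimate.} I would prove the upper bounds uniformly over any Hamiltonian of the form $H=s^{-1/2}\sum_I \xi_I\sum_a g_{I,a}P_I^a$ whose normalization satisfies $\sum_I\xi_I=(1+o(1))s$. Conditional on $\xi$:
\begin{itemize}
\item For a product state with Bloch vectors $x_i\in\B_2^3$, the energy is $s^{-1/2}\sum_I\xi_I\langle g_I,\otimes_{i\in I}x_i\rangle$. This is a Gaussian process with variance at most $s^{-1}\sum_I\xi_I\le 1+o(1)$, per $\sqrt n$ normalization.
\item The product-state bound then follows by the same argument as in \cite{anschuetz2025bounds}: a first-moment or Parisi-type bound over the product of spheres. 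That argument only uses the covariance structure $\E X(x)X(y)=s^{-1}\sum_I\xi_I\prod_{i\in I}\langle x_i,y_i\rangle$.
\item The sparse covariance concentrates around its mean-field counterpart uniformly in $x,y$. This holds because the number of retained tuples is $m_n\to\infty$ times a degree factor, and Bernstein plus a net argument gives concentration when $m_n=\omega(n)$, or $m_n\ge\zeta n$ with $\zeta$ large.
\item Sudakov--Fernique comparison, or a direct Slepian-type interpolation of the free energy, then gives the same limit up to an error $\epsilon(\zeta)\to0$. In the $\zeta$-regime this error is absorbed into the $o_p(1)$ by choosing $\zeta_0(p)$.
\end{itemize}
The ground-state upper bound follows similarly from the operator-norm argument of \cite{anschuetz2025bounds}: a matrix Gaussian or moment method bound with variance proxy $\|\E H^2\|\le 3^p s^{-1}\sum_I \xi_I\cdot n$-order terms.

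\textbf{Lower bound on the ground energy (main obstacle).} The proof in \cite{anschuetz2025strongly} constructs an entangled state, or uses a trace/moment argument such as $\tr H^2$ versus $\tr H^4$ on a restricted subspace, to exhibit energy $3^{p/2}/(Cp)$. I would try to recheck that argument with the sparse coefficients. If it is a moment-ratio argument of the form $\lambda_{\max}\gtrsim \tr(H^{2k})/\tr(H^{2k-1})$, I would compute the relevant traces as sums over closed walks of Pauli strings. In the diluted setting the only change is that each tuple carries a factor $\xi_I$. Since the leading contributions come from repeated or tree-like pairings of tuples, they survive dilution with weight $\E\xi_I^{j}=\kappa$ per distinct tuple, which exactly matches the normalization $m_n$. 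The error terms from overlapping tuples are controlled by the maximum hypergraph degree, which is $O(\log n)$ at bounded average degree. Here I expect to need $\zeta$ large to dominate local fluctuations: this is why $\zeta_0(p)$ appears.

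The final step is concentration, to upgrade expectation bounds to statements that hold with high probability. Conditional on $\xi$, the maximal energy is $O(1/\sqrt n)$-Lipschitz in the Gaussians, because the per-coordinate operator-norm bound is $m^{-1/2}\sqrt{\cdot}/\sqrt n$. Gaussian concentration then gives fluctuations of order $o(1)$. Combined with concentration of $\sum_I\xi_I$ around $m$, this yields the stated high-probability bounds.
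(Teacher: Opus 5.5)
Your mean-field inputs and both upper bounds are sound. The ground-state upper bound via the Gaussian matrix-series inequality (variance proxy $3^p|\mcE_p|/m$ in dimension $2^n$, giving $\E E^{(\cdot)}_{n,\mathrm{gs}}(p)\le 3^{p/2}\sqrt{2(1+1/n)\log 2}$, then concentration) is exactly what the paper does. For the product-state bound you take a different but workable route. Conditional on $\xi$, the product-state energy is a Gaussian process with covariance $m^{-1}\sum_I\xi_I\prod_{i\in I}\langle x_i,y_i\rangle$. This covariance concentrates uniformly around the mean-field one, because product states have metric entropy $O(n\log(1/\delta))$ while $m\gg n$ (or $m\asymp\zeta n$ with $\zeta$ large). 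The closeness is only approximate, so plain Sudakov--Fernique does not apply directly. Use instead the free-energy interpolation on a finite net, which costs $\beta\epsilon+O(\log(1/\delta)/\beta)$. The paper instead runs a Lindeberg replacement on a soft maximum over such a net. Yours is purely Gaussian given $\xi$; the paper's has the advantage that the same machinery also handles the ground state, where yours does not.

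The genuine gap is the diluted ground-state lower bound $E^{(\beta)}_{n,\mathrm{gs}}(p),E^{(\zeta)}_{n,\mathrm{gs}}(p)\ge 3^{p/2}/(Cp)$. You call it the main obstacle but do not supply an argument. ``Recheck the argument of \cite{anschuetz2025strongly} with sparse coefficients'' is not a plan unless you know that argument. The inequality you suggest, $\lambda_{\max}\gtrsim\operatorname{Tr}(H^{2k})/\operatorname{Tr}(H^{2k-1})$, is unusable here: $H$ is not positive semidefinite, and since $H$ and $-H$ have the same law, $\E\operatorname{Tr}H^{2k-1}=0$. The valid bound $\|H\|_{\mathrm{op}}\ge(2^{-n}\operatorname{Tr}H^{2k})^{1/(2k)}$ reaches the $\sqrt n$ scale only for $k\asymp n$, because $2^{-n}\operatorname{Tr}H^2\approx 3^p$. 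Controlling such high moments for a sparse noncommuting Pauli ensemble is a new problem, not a reweighting by $\E\xi_I^j=\kappa$. Your Gaussian-comparison device does not rescue this either. The index set is now all of $\mathsf S_{n,\mathrm{all}}$, so there is no uniform covariance concentration. A conditional interpolation of $\Phi_t=\frac1t\log\operatorname{Tr}e^{tH/\sqrt n}$ at the needed $t\asymp n$ leaves a second-order term bounded only by $O_p(t/n)$, which is not small.

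The missing idea is universality in $\xi$ and $g$ jointly, which is how the paper proves Corollary~\ref{cor:sparse-mean-field-transfer-hp}. The blocks $\kappa^{-1/2}\xi_I(g_{I,a})_a$ are independent across $I$ and have the same mean and covariance as standard Gaussian blocks. A hyperedge-by-hyperedge Lindeberg replacement in $\Phi_t$ therefore cancels the first- and second-order terms. The block third derivative is $O_p(t^2(nN_p)^{-3/2})$ with $N_p=\binom np$, and $\E\|\kappa^{-1/2}\xi_I(g_{I,a})_a\|_2^3=O_p(\kappa^{-1/2})$. Taking $t=An$ gives
\[
\bigl|\E E^{(\kappa)}_{n,\mathrm{gs}}(p)-\E E_{n,\mathrm{gs}}(p)\bigr|\le\frac{2\log 2}{A}+C_pA^2\sqrt{n/m}.
\]
This is $o(1)$ for $m=\omega(n)$ and $O_p(A^2\zeta^{-1/2})$ for $m\asymp\zeta n$, which is where $\zeta_0(p)$ comes from. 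It transfers the mean-field lower bound without reproving it. To upgrade to high probability, the paper applies Efron--Stein over the independent blocks $\xi_I(g_{I,a})_a$, giving $\operatorname{Var}\le 3^{2p}/n$ jointly in $(\xi,g)$. Your final step (conditional Gaussian concentration plus concentration of $|\mcE_p|$) does not control how $\E_g[\,\cdot\mid\xi]$ fluctuates with the hypergraph, and a lower bound proved in expectation needs that control.
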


The asymptotic separation between the product-state scale \(\sqrt{\log p}\) and the ground-state scale \(3^{p/2}/p\), as \(p\to\infty\), is the fact that allows us to prove our circuit-depth lower bounds. Our first main result treats the mean-field and growing-average-degree regimes.

\begin{theorem}\label{thm:main-meanfield-gap}
There exists a universal constant \(\delta_0>0\) such that the following holds. For all sufficiently large \(p\), there exists \(c_p>0\) such that, for every sequence \(D_n\le \delta_0\log n\),
\[
\lim_{n\to\infty}
\sup_{r\in\mathbb Z_+}
\P\left(
    E_{n,\mathrm{gs}}(p)-E_{n,r}^{(D_n)}(p)<c_p
\right)
=0.
\]
Moreover, for every \(0<\beta<p-1\), there exists
\(\delta_{p,\beta}>0\) such that, for every sequence
\(D_n\le \delta_{p,\beta}\log n\),
\[
\lim_{n\to\infty}
\sup_{r\in\mathbb Z_+}
\P\left(
    E_{n,\mathrm{gs}}^{(\beta)}(p)
    -
    E_{n,r}^{(\beta,D_n)}(p)
    < c_p
\right)
=0.
\]
\end{theorem}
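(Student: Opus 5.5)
The plan is to follow the residual decomposition sketched in the proof-ideas subsection. For any $\phi\in\mathsf{S}^{(D)}_{n,r}$ with physical reduced state $\rho$, write
\[
\langle\phi|H\otimes I_{\mathrm{anc}}|\phi\rangle=\operatorname{Tr}(Hm(\rho))+R(\rho).
\]
Since $m(\rho)$ is a product of one-qubit density matrices and the energy is affine in each factor, $\operatorname{Tr}(Hm(\rho))\le\sqrt n\,E_{n,\mathrm{prod}}(p)$. Hence
\[
E^{(D)}_{n,r}(p)\le E_{n,\mathrm{prod}}(p)+n^{-1/2}\sup_\rho|R(\rho)|.
\]
By Theorem~\ref{thm:benchmarks}, with high probability $E_{n,\mathrm{gs}}-E_{n,\mathrm{prod}}\ge 3^{p/2}/(Cp)-(1+o_p(1))\sqrt{2\log p}\ge 2c_p$ for $p$ large. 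It therefore suffices to show that
\[
\sup_r\P\Bigl(n^{-1/2}\sup_{\rho\in\text{depth-}D_n}|R(\rho)|>c_p\Bigr)\to0.
\]
The same reduction applies verbatim to $H^{(\beta)}_{n,p}$.

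\textbf{Step 1 (variance of the residual).} Fix a circuit. For each physical qubit $i$, let $B_i$ be its backward light cone; it contains at most $2^D$ input qubits. Write $I=(i_1,\dots,i_p)$. If the sets $B_{i_1},\dots,B_{i_p}$ are pairwise disjoint, then $\operatorname{Tr}(P^a_I\rho)$ factorizes, because the input is a product state and the Heisenberg-evolved observables act on disjoint registers. In that case $\operatorname{Tr}(P_I^a(\rho-m(\rho)))=0$. Conversely, each qubit $i$ has at most $2^{D}\cdot 2^D=4^D$ partners $j$ with $B_i\cap B_j\neq\emptyset$, using forward cones of size at most $2^D$. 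So the number of tuples $I$ with some overlap is at most $n\cdot 4^D\cdot\binom{n}{p-2}$. The coefficients $\operatorname{Tr}(P_I^a(\rho-m(\rho)))$ are bounded, and the vector $(\operatorname{Tr}(P_I^a\sigma))_a$ has squared $\ell_2$ norm at most $2$ for any one-qubit-local structure (more precisely $\sum_a|\operatorname{Tr}(P^a_I\sigma)|^2\le 2^p$ by Pauli completeness). This gives
\[
\tvar(n^{-1/2}R(\rho))\lesssim_p 4^D n^{-2}.
\]
More importantly, for two circuits $\theta,\theta'$ the increment $R(\rho_\theta)-R(\rho_{\theta'})$ is a Gaussian with variance bounded by $\binom np^{-1}$ times a sum over overlapping tuples of either circuit, multiplied by a Lipschitz factor in the gate parameters.

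\textbf{Step 2 (entropy, uniform in $r$).} Prune each circuit to the gates in the backward light cones of the physical qubits. At most $n2^D$ gates and at most $n2^D$ active qubits remain. Relabel the active ancillas into a canonical block of size $n2^D$; this does not change $\rho$. The architectures then number at most $(n2^D)^{O(Dn2^D)}$. An $\epsilon$-net on each $U(4)$ gate of size $(C/\epsilon)^{16}$ perturbs $\rho$ by at most $Dn2^D\epsilon$ in the relevant local norms; more carefully, each $p$-point correlation moves by $O(p2^D D\epsilon)$. With $\epsilon=n^{-C}$, the log-cardinality is $O(Dn2^D(\log n+D))$. The standard Gaussian maximal inequality with $\sigma^2\lesssim 4^D/n^2$ gives
\[
\E\sup|R|/\sqrt n\lesssim\sqrt{4^D\cdot Dn2^D(\log n+D)}/n.
\]
The net error must also be controlled. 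For this, use the crude bound that the operator norm $\|H\|$ is at most $\mathrm{poly}(n)$ with high probability, which is absorbed by the choice $\epsilon=n^{-C}$. Borell--TIS then upgrades the bound to one holding with high probability. The resulting exponent $8^D D\log n/n\to0$ holds when $D\le\delta_0\log n$ with $\delta_0<1/\log 8$, so $\delta_0$ is universal. It may be possible to sharpen this to the paper's $4^D$, but that is not needed.

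\textbf{Step 3 (dilution), and the expected main obstacle.} For $H^{(\beta)}$, condition on $\xi$. The residual is again Gaussian, with variance $m_n^{-1}\sum_{I \text{ overlapping},\ \xi_I=1}\sum_a|c_{I,a}|^2$. I need a uniform-in-circuit bound on the number of retained tuples that contain a given pair $(i,j)$. The sampling cost comes from this bound: w.h.p. every pair lies in at most $\max(\kappa_n n^{p-2}, \log n)\cdot O(1)$ retained tuples, by Chernoff plus a union over $n^2$ pairs. Hence the variance is at most $4^D n\cdot\max(\kappa_nn^{p-2},\log n)/(m_n)$. Relative to the mean-field case this loses a factor $n^{\beta}$ at worst when $\kappa_n n^{p-2}$ is tiny, i.e.\ the variance becomes roughly $4^D\log n\,n^{-(p-1-\beta)}$. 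Combined with the entropy $n2^DD\log n$, this yields a bound $\sqrt{8^D D\log^2 n\, n^{-(p-1-\beta)}}$. This tends to zero for $D\le\delta_{p,\beta}\log n$ with $\delta_{p,\beta}$ small relative to $p-1-\beta$, using $\beta<p-1$. The main obstacle I expect is making Step 1's increment/Lipschitz bound rigorous, since the covering must be performed in the canonical metric of the residual process rather than in raw energy. The first approach I would try is to bound each $p$-point correlation's dependence on gates only within the union of the relevant light cones, which keeps the Lipschitz constant at $\mathrm{poly}(2^D)$.
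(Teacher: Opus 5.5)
Your proposal is correct. Its skeleton is the paper's: residual decomposition, light-cone factorization, the dependency-graph degree bound \(4^D\), pruning plus relabeling for \(r\)-uniform entropy, and comparison with the benchmarks. It differs from the paper in three technical ingredients.

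\emph{Mean-field variance.} You count \(p\)-tuples containing an edge of \(\mathsf G_D^{\mathrm{dep}}\), which gives \(\operatorname{Var}(R)\le C_p4^D/n\). The paper instead writes \(\chi_{I,a}\) as a telescoping sum of inner products with the localized fluctuation vectors \(v_i^b=A_i^b|0^{n+r}\rangle\). It then applies a Bessel-type inequality (Lemma~\ref{lem:bessel-type-ineq}) with multiplicity \(2^D\), which gives \(C_p2^D/n\). That sharper bound is what yields the paper's condition \(4^DD(\log n+D)=o(n)\), i.e.\ \(\delta_0<1/\log 4\); your bound yields \(\delta_0<1/\log 8\). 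Both constants are universal, so the theorem as stated is unaffected. Your counting argument is exactly the one the paper itself uses at bounded average degree (Lemma~\ref{lem:zeta-residual-variance-dependency}).

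\emph{Supremum of the residual.} The paper uses Dudley's integral in \(\mathsf d_R\), relying on each \(\chi_{I,a}\) being Lipschitz in \(\|U-U'\|_{\mathrm{op}}\). You use one net at scale \(n^{-C}\), a union bound, and deterministic control of the off-net error via \(\|H\|_{\mathrm{op}}\le\mathrm{poly}(n)\) w.h.p. You do need this control for both \(X\) and \(X_0\); the latter is \(\|H\|_{\mathrm{op}}\)-Lipschitz in each Bloch vector, so it goes through. The entropy is dominated by the architecture count \(Dn2^D(\log n+D)\), not by the resolution scale, so chaining buys nothing here. In particular, the ``main obstacle'' you flag (a Lipschitz bound in the canonical metric) is not needed on your route.

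\emph{Diluted regime.} The paper gets the uniform conditional variance bound by applying Bernstein's inequality on a finite net of pruned circuits, followed by a Lipschitz extension. You instead bound the maximal pair co-degree of the retained hypergraph w.h.p.\ and combine it with \(\Delta(\mathsf G_D^{\mathrm{dep}})\le 4^D\). This gives a variance bound valid simultaneously for all circuits without any net, which is cleaner. Keep the \(4^D/n\) contribution from the case \(\kappa_nn^{p-2}\gtrsim\log n\). The requirement is then \(\delta_{p,\beta}\log 8<\min\{1,p-1-\beta\}\), not only a condition relative to \(p-1-\beta\). This matches the paper's \(\delta_{p,\beta}\log 4<\min\{1,p-1-\beta\}\) up to the \(\log 4\to\log 8\) loss.
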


Our second theorem concerns the bounded-average-degree setting.
\begin{theorem}
\label{thm:main-critical-gap}
For all sufficiently large $p$, there exists \(c_p>0\) such that, for every
 \(D\ge1\),
\[
\lim_{\zeta\to\infty}
\limsup_{n\to\infty}
\sup_{r\in\mathbb Z_+}
\P\left(
    E_{n,\mathrm{gs}}^{(\zeta)}(p)
    -
    E_{n,r}^{(\zeta,D)}(p)
    < c_p
\right)
=0.
\]
\end{theorem}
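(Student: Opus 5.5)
Following the residual decomposition from the introduction, for each depth-$D$ circuit state I write $\rho$ for its physical reduced state, $m(\rho)=\bigotimes_i\rho_i$ for the product of its one-qubit marginals, and
\[
\operatorname{Tr}(H^{(\zeta)}_{n,p}\rho)=\operatorname{Tr}(H^{(\zeta)}_{n,p}m(\rho))+R(\rho).
\]
Mixed product states are convex combinations of pure product states, so the first term is at most $\sqrt n\,E^{(\zeta)}_{n,\mathrm{prod}}(p)$. The goal is therefore to show that, for every fixed $D$,
\[
\lim_{\zeta\to\infty}\limsup_{n\to\infty}\sup_r\P\Bigl(\sup_\rho n^{-1/2}R(\rho)>\epsilon\Bigr)=0
\]
for each $\epsilon>0$. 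I would then combine this with Theorem~\ref{thm:benchmarks}: for $\zeta\ge\zeta_0(p)$, with high probability $E^{(\zeta)}_{n,\mathrm{gs}}-E^{(\zeta,D)}_{n,r}\ge 3^{p/2}/(Cp)-(1+o_p(1))\sqrt{2\log p}-\epsilon$. For large $p$ this exceeds $c_p:=3^{p/2}/(2Cp)$ once $\epsilon$ is small.

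\textbf{Step 1 (factorization).} Let $B(i)$ be the backward light cone of qubit $i$, which has size at most $2^D$. If the sets $B(i_s)$, $s\le p$, are pairwise disjoint, then $\operatorname{Tr}(P_I^a\rho)=\prod_s\operatorname{Tr}(\sigma^{a_s}\rho_{i_s})$, because the input is a product state and the reversed circuit splits. Hence
\[
R(\rho)=m_{n,\zeta}^{-1/2}\sum_{I\in\mathcal B_\theta}\xi_I\sum_a g_{I,a}c_{I,a}(\theta),
\]
where $\mathcal B_\theta$ is the set of tuples containing two qubits $i,j$ with $B(i)\cap B(j)\neq\emptyset$, and $|c_{I,a}|\le 2$. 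Each $i$ has at most $4^D$ such partners $j$ through the forward-then-backward cone. Moreover $\sum_a c_{I,a}^2\le 4\cdot 3^p$, by Pauli-basis Parseval on the $p$-qubit marginal.

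\textbf{Step 2 (localize on the sparse hypergraph).} I condition on $\xi$. Since $|\mathcal B_\theta|\le n4^Dn^{p-2}$, the expected number of retained tuples in $\mathcal B_\theta$ is at most $\zeta 4^D$. It is not concentrated uniformly over $\theta$, however, so I bound it using the retained hypergraph directly. A retained tuple lies in $\mathcal B_\theta$ only if some pair $\{i,j\}\subset I$ is ``close''. Each circuit contributes at most $n4^D$ close pairs. The number of retained hyperedges containing a given pair is $\mathrm{Bin}(\binom{n-2}{p-2},\zeta n^{-(p-1)})$, so whp all but $O(\log n)$ pairs are covered at most once, and no pair is covered more than $O(1)$ times. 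I would choose the split more carefully so that the worst-case count over $\theta$ is controlled by $n4^D\cdot\frac{\zeta}{n}\cdot O(1)$ plus a term that is negligible because degrees are bounded by $O(\log n)$. I expect this uniform control to be the main obstacle. The count is $\sup_\theta\sum_{I}\xi_I\mathbf 1[I\in\mathcal B_\theta]$, and the adversary picks the matchings after seeing $\xi$. A cleaner route is to bound the sum over hyperedges $I$ of the number of close pairs inside $I$: the adversary can make at most $4^D$ partners per qubit, and a qubit of degree $d_i$ in the hypergraph gives at most $\min(d_i,4^D)\cdot$const close retained hyperedges. With $\sum_i d_i=pm_{n,\zeta}$, this gives $|\mathcal B_\theta\cap\{\xi=1\}|\le C_p\,\zeta n$ in the worst case, which is too crude. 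The refinement I would try is that a retained hyperedge counts only if its pair is close, and each qubit has only $4^D$ close partners. Whp the number of retained hyperedges containing both $i$ and one of $4^D$ prescribed partners is at most $C_{p,D}$ for all $i$ and all choices, by a union bound over $\binom{n}{4^D}$ partner sets against $\mathrm{Bin}$ tails of order $n^{-k}$. This gives variance $\lesssim 3^p C_{p,D}n/m_{n,\zeta}=C'_{p,D}/\zeta$ for $R/\sqrt n\cdot\sqrt n$, i.e.\ $\operatorname{Var}(R(\rho)/\sqrt n)\lesssim C_{p,D}/(\zeta n)$.

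\textbf{Step 3 (entropy, uniform in $r$).} I prune each circuit to the gates inside $\bigcup_{i\le n}B(i)$ and relabel the active ancillas into a canonical block of size at most $n2^D$. The result is a class indexed by at most $(n2^D)^{O(Dn2^D)}$ architectures times an $\eta$-net over $U(4)^{n2^D}$ with $\eta=n^{-2}$. The energy is Lipschitz in the gates with constant $\mathrm{poly}(n)\cdot\|H\|$, and $\|H\|\le$poly$(n)$ whp, so nets suffice. Dudley, or a simple union bound on the net, then gives whp
\[
\sup n^{-1/2}|R|\lesssim\sqrt{\tfrac{C_{p,D}}{\zeta n}\cdot C_D n\log n}.
\]
The $\log n$ factor here is a problem. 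I would remove it by chaining with localized increments, so that the variance of increments scales with the metric on gates. Alternatively, I would replace the union bound over architectures by conditioning: for fixed $D$ the relevant entropy per qubit is $O_D(\log n)$ only through the choice of matching partners, and that choice is already fixed by the hypergraph neighbourhood in Step 2. I would restrict partners to hypergraph-neighbourhood-relevant data, giving entropy $O_D(n\log\zeta)$. The final bound is then $\sqrt{C_{p,D}\log\zeta/\zeta}\to0$ as $\zeta\to\infty$, which is exactly the iterated-limit form of the statement.
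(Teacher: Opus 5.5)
Your reduction to the residual, the factorization in Step~1, and the final refinement in Step~2 are sound. The union bound over all $i$ and all partner sets $S$ with $|S|\le 4^D$ works: the binomial tails are $(4^D\zeta/n)^k$, and taking $k=4^D+2$ beats the $n^{1+4^D}$ choices. This gives $\sup_\theta|\mathcal B_\theta\cap\{\xi=1\}|\le C_{p,D}n$, a valid and more local substitute for the paper's union bound over all graphs of maximum degree $4^D$, and hence $\operatorname{Var}_g(R\mid\xi)\le C_{p,D}/\zeta$.

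The genuine gap is Step~3. Since $n\to\infty$ is taken before $\zeta\to\infty$, Dudley gives $n^{-1/2}\,\E\sup|R|$ of order $\sqrt{\log N/(\zeta n)}$. You therefore need metric entropy $O_{p,D}(n\log(e+\zeta))$ with no $\log n$. Any net built in circuit-parameter space pays $\Theta_D(n\log n)$. One source is the count of pruned architectures; in your version, the gate net with a global $\mathrm{poly}(n)$ Lipschitz constant adds another. Localized Lipschitz estimates (each $\chi_{I,a}$ depends on $O_{p,D}(1)$ gates) can remove the gate-net cost but not the architecture cost. Chaining cannot help with a discrete family of $\exp(\Theta(n\log n))$ architectures unless you show they collapse in the canonical metric, which is the whole difficulty.

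Your alternative fix rests on a false premise: the matching partners are not ``fixed by the hypergraph neighbourhood.'' Conditioning on the $Q$-restricted dependency graph $\mathsf G_Q=\mathsf G_D^{\mathrm{dep}}\cap Q$ does not restrict the architecture. Already for $D=1$ there are $\exp(\Theta(n\log n))$ perfect matchings of $[n]$ with the same intersection with $Q$. For $D\ge2$, correlations can also be routed through ancillas and through qubits not adjacent in $Q$. So a parameter-space net conditioned on $\mathsf G_Q$ still costs $n\log n$.

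The missing idea is to cover the image instead of the parameters. We have $\mathcal B(\mathsf G_D^{\mathrm{dep}})=\mathcal B(\mathsf G_Q)$, and $\chi_{I,a}(\theta)=0$ for $I\notin\mathcal B(\mathsf G_Q)$. Hence, for all $\theta,\theta'$ sharing the same $\mathsf G_Q=\mathsf G$, the conditional canonical metric equals $m_{n,\zeta}^{-1/2}$ times the Euclidean distance between the vectors $(\chi_{I,a}(\theta))_{I\in\mathcal B(\mathsf G),\,a\in\{1,2,3\}^p}$. These vectors lie in $[-2,2]^{k}$ with $k=3^p|\mathcal B(\mathsf G)|\le C_{p,D}n$. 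Covering this cube costs $C_{p,D}n\log_+(C_{p,D}/\eta)$, using $k/m_{n,\zeta}\le C_{p,D}$, regardless of how many circuits, architectures or ancillas map into it.

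Taking a union over the at most $\exp\{C_{p,D}n\log(e+\zeta)\}$ graphs $\mathsf G\subset Q$ with $\Delta(\mathsf G)\le4^D$ gives $\log N\le C_{p,D}n[\log(e+\zeta)+\log_+(C_{p,D}/\eta)]$. Dudley plus Borell--TIS then yield $n^{-1/2}\sup|R|\le C_{p,D}\sqrt{\log(e+\zeta)/\zeta}+\varepsilon$ with probability tending to one, uniformly in $r$. This is exactly Lemma~\ref{lem:zeta-local-residual-entropy}, and it makes pruning, relabeling and gate nets unnecessary in this regime.
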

Theorem~\ref{thm:main-critical-gap} gives a fixed-depth obstruction in the bounded-average-degree regime: for every prescribed depth \(D\), choosing the average-degree parameter \(\zeta\) sufficiently large rules out depth-\(D\) preparation of near-ground states. In light of the NLTS discussion in
Section~\ref{ssec:NLTS-connection}, it is natural to ask whether \(\zeta\) can instead be fixed once and for all, while still obtaining a depth lower bound growing with \(n\). We conjecture that this is the case.
\begin{conjecture}
For all sufficiently large $p$, there exists \(c_p>0\), $\zeta>0$ and $\delta>0$ such that if
 \(D_n\le \delta \log n\), then
\[
\limsup_{n\to\infty}
\sup_{r\in\mathbb Z_+}
\P\left(
    E_{n,\mathrm{gs}}^{(\zeta)}(p)
    -
    E_{n,r}^{(\zeta,D_n)}(p)
    < c_p
\right)
=0.
\]
\end{conjecture}

The proofs of Theorems~\ref{thm:main-meanfield-gap} and
\ref{thm:main-critical-gap} are found in Section~\ref{sec:proofs}.

\section{Proofs}\label{sec:proofs}
In this section we prove Theorems~\ref{thm:main-meanfield-gap} and
\ref{thm:main-critical-gap}. The proofs in the three regimes share the same
basic reduction. We decompose the energy of a circuit state into a product-like
term, determined only by its one-site physical marginals, and a residual term
which captures the entangled \(p\)-point contribution. The
product-like term is controlled by the product-state optimum. Thus the main
technical task is to show that, uniformly over shallow circuits and over the
number of ancillas, the residual term is negligible on the \(\sqrt n\) energy
scale.

We first introduce this decomposition in the mean-field model. For any normalized
state \(|\psi\rangle\in\mcH_{n,r}\) and any physical site \(i\in[n]\), let
\(m_i^\psi\in \B_2^3\) be the Bloch vector of the one-qubit reduced state at
site \(i\), defined by
\[
    m_i^\psi(b)
    :=
    \langle\psi|\sigma_i^b\otimes I_{\mathrm{anc}}|\psi\rangle,
    \qquad b\in\{1,2,3\}.
\]
For \(I=(i_1,\dots,i_p)\in\mcI_p^n\) and
\(a=(a_1,\dots,a_p)\in\{1,2,3\}^p\), define
\[
    \chi_{I,a}(\psi)
    :=
    \langle\psi|P_I^a\otimes I_{\mathrm{anc}}|\psi\rangle
    -
    \prod_{s=1}^p m_{i_s}^{\psi}(a_s).
\]
This coefficient is the part of the physical \(p\)-point Pauli expectation not
explained by the one-site physical marginals. When
\(\psi=\phi_\theta\) for some \(\theta\in\mathsf{T}_{n,r}^{(D)}\), we write
\(\chi_{I,a}(\theta):=\chi_{I,a}(\phi_\theta)\).

For \(\theta\in\mathsf{T}_{n,r}^{(D)}\), set $X(\theta)
    :=
    \langle\phi_\theta|
    H_{n,p}\otimes I_{\mathrm{anc}}
    |\phi_\theta\rangle $. Then $X(\theta)=X_0(\theta)+R(\theta)$, where
\[
    X_0(\theta)
    :=
    \binom np^{-1/2}
    \sum_{I\in\mcI_p^n}
    \sum_{a\in\{1,2,3\}^p}
    g_{I,a}
    \prod_{s=1}^p m_{i_s}^{\phi_\theta}(a_s),
\]
and
\[
    R(\theta)
    :=
    \binom np^{-1/2}
    \sum_{I\in\mcI_p^n}
    \sum_{a\in\{1,2,3\}^p}
    g_{I,a}\chi_{I,a}(\theta).
\]
The term \(X_0\) is bounded above by the product-state optimum, since it is the energy obtained by replacing the physical state by the product state with the same one-site marginals. Therefore any improvement of depth-\(D\) circuit states over product states must come from the residual process \(R\). The same decomposition applies to the diluted Hamiltonians by inserting the
sparsification factors \(\xi_I\) and using the corresponding normalization.
We will repeatedly use the canonical metric associated with a centered
Gaussian process. If \(\{Z_t:t\in T\}\) is a centered Gaussian process, its
canonical metric is $\mathsf{d}_Z(s,t) := \sqrt{\E[(Z_s-Z_t)^2]},$ for $s,t\in T$. In particular, for the mean-field residual process \(R\), we write $\mathsf{d}_R(\theta,\theta')
    :=
    \sqrt{\operatorname{Var}(R(\theta)-R(\theta')) }$.
For diluted models, we use the analogous conditional canonical metric, obtained
by conditioning on the sparsification variables \(\xi\) and taking variance
only over the Gaussian coefficients. For a metric space \((T,\mathsf d)\), we write \(N(T,\mathsf d,\eta)\) for the minimal number of \(\mathsf d\)-balls of radius \(\eta\) needed to cover \(T\). The corresponding metric entropy is $\log N(T,\mathsf d,\eta)$. Thus, for example, $\log N(\mathsf{T}_{n,r}^{(D)},\mathsf d_R,\eta)$ is the metric entropy of the depth-\(D\) circuit parameter space at scale
\(\eta\), measured in the canonical metric of the residual process.

The rest of the section is devoted to controlling the residual process. In Subsection~\ref{ssec:backwardsupports}, we prove the backward-support and dependency-graph estimates that encode the locality of depth-\(D\) circuits. We then use these estimates to control the residual in the mean-field regime in Subsection~\ref{ssec:mean-field-regime}, in the growing-average-degree sparse regime in Subsection~\ref{ssec:growing-average-degree-diluted-regime}, and in the bounded-average-degree diluted regime in Subsection~\ref{ssec:bounded-average-degree-diluted-model}.

\subsection{Backward supports and the dependency graph} \label{ssec:backwardsupports}
The purpose of this section is to make precise the locality constraints imposed
by a depth-$D$ circuit. These constraints are the main input in our analysis
of the residual process. Although a depth-\(D\) circuit may create nontrivial
correlations among the output qubits, such correlations are limited by the
backward geometry of the circuit. An output observable can only depend on the
input qubits reached by tracing its support backward through the \(D\) layers.

This motivates the definition of a backward support set, which records which
input sites can influence a given set of output sites. This is the
circuit-theoretic analogue of a \textit{past light-cone}. We will use
these sets to construct the dependency graph \(\mathsf G_D^{\mathrm{dep}}\).
The key consequence is that if the vertices of a $p$-tuple have pairwise
disjoint backward supports, then the corresponding $p$-point Pauli
expectation factorizes into one-site expectations. Equivalently, \(\chi_{I,a}\) vanishes whenever $I$ is an independent set of the dependency graph.

\begin{definition}[Backward support sets]
\label{def:backward-support}
Fix matchings \(M_1,\dots,M_D \in \mathrm{PM}(n+r)\). For \(S\subset[n+r]\), define $L_0(S):=S$, and recursively, for \(t=0,\dots,D-1\),
\[
L_{t+1}(S)
:=
L_t(S)\cup M_{D-t}(L_t(S)),
\]
where
\[
M_\ell(A):=\{j\in[n+r]: \text{ there exists }i\in A
\text{ such that }(i,j)\in M_\ell\}.
\]
For a singleton \(S=\{i\}\), we write \(L_D(i):=L_D(\{i\})\).
\end{definition}

Equivalently, \(L_D(S)\) is the set obtained by starting from \(S\) and, for
\(\ell=D,D-1,\dots,1\), adjoining every vertex paired by \(M_\ell\) with one of
the vertices already present. 

The set \(L_D(i)\) should be interpreted as the set of input qubits which may
influence the output qubit \(i\). 

\begin{lemma}[Backward support growth]
\label{lem:backward-support-growth}
Let $U \in \mathsf{U}^{(D)}_{n,r}(M_1,\dots, M_D)$. If an observable \(A\) is supported on \(S\subset[n+r]\), then \(U^*AU\) is
supported on \(L_D(S)\). Moreover, $|L_D(S)|\le \min\{n+r,2^D|S|\}$.
\end{lemma}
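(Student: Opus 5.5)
We argue by induction on the number of layers, peeling them off from the last one. Write $U=V_D\cdots V_1$ and set $A_0:=A$. For $t=0,\dots,D-1$ define $A_{t+1}:=V_{D-t}^*A_tV_{D-t}$, so that $A_D=U^*AU$. The inductive claim is that $A_t$ is supported on $L_t(S)$, where $L_t$ is the sequence of sets from Definition~\ref{def:backward-support}. The case $t=0$ holds by hypothesis.

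For the inductive step, suppose $A_t$ is supported on $L:=L_t(S)$, and write $V:=V_{D-t}=\bigotimes_{e\in M_{D-t}}U_e$. We split the edges of $M_{D-t}$ into two classes.
\begin{itemize}
\item Edges disjoint from $L$: the corresponding gates commute with $A_t$, since they act on tensor factors where $A_t$ is the identity. Their contributions $U_e^*U_e=I$ cancel.
\item Edges meeting $L$: call their product $W$. Then $V^*A_tV=W^*A_tW$.
\end{itemize}
Both $A_t$ and $W$ act nontrivially only on $L\cup\bigcup\{e: e\cap L\neq\emptyset\}=L\cup M_{D-t}(L)=L_{t+1}(S)$. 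Hence so does $W^*A_tW$. To make this precise, write $A_t=\tilde A\otimes I_{L^c}$ and $W=\tilde W\otimes I_{(L')^c}$ with $L'=L_{t+1}(S)$. The product is then of the form $(\cdot)\otimes I_{(L')^c}$. This completes the induction and gives $U^*AU$ supported on $L_D(S)$.

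For the size bound, the inclusion $L_D(S)\subset[n+r]$ gives $|L_D(S)|\le n+r$ trivially. Since $M_\ell$ is a perfect matching, each vertex has exactly one partner, so $|M_\ell(A)|\le|A|$. Therefore $|L_{t+1}(S)|\le |L_t(S)|+|M_{D-t}(L_t(S))|\le 2|L_t(S)|$. Iterating $D$ times yields $|L_D(S)|\le 2^D|S|$.

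No step is a genuine obstacle. The only point requiring care is the bookkeeping showing that conjugation by gates away from the support cancels exactly. This is the tensor-product commutation argument described in the first class of edges above, and it uses that the gates within a layer act on disjoint pairs and commute.
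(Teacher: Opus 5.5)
Your proof is correct and follows essentially the same route as the paper: the same induction peeling layers off from $V_D$ downward, the same step of discarding gates on edges disjoint from the current support, and the same doubling bound $|M_{D-t}(L_t(S))|\le|L_t(S)|$ coming from the matching property. Your explicit tensor-factorization bookkeeping, showing $V^*A_tV=W^*A_tW$ with $W$ supported in $L_{t+1}(S)$, is a slightly more careful rendering of the paper's one-line claim that gates meeting the support can only add the other endpoint of their pair.
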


\begin{proof}
For \(\ell=1,\dots,D\), write
\[
V_\ell := \bigotimes_{(u,v)\in M_\ell} U_{uv}^{(\ell)},
\]
so that $U=V_D\cdots V_1$. Define intermediate observables by
\[
A_0:=A,
\qquad
A_{t+1}:=V_{D-t}^* A_t V_{D-t},
\qquad t=0,\dots,D-1.
\]
Thus \(A_D=U^*AU\). We claim by induction that \(A_t\) is supported on \(L_t(S)\). This is true for
\(t=0\), since \(A_0=A\) is supported on \(S=L_0(S)\).

Assume that \(A_t\) is supported on \(L_t(S)\). We pull \(A_t\) back
through the layer \(V_{D-t}\), namely we form $A_{t+1}=V_{D-t}^*A_tV_{D-t}$. Since \(M_{D-t}\) is a matching, each gate in this layer acts
on a disjoint pair of qubits. A gate acting on a pair disjoint from \(L_t(S)\)
commutes with \(A_t\) and therefore does not change its support. A gate acting
on a pair \((u,v)\) with at least one endpoint in \(L_t(S)\) can only enlarge
the support by adding the other endpoint of that pair. Hence
\[
\operatorname{supp}(A_{t+1})
\subseteq
L_t(S)\cup M_{D-t}(L_t(S))
=
L_{t+1}(S).
\]
This proves the induction, and therefore \(U^*AU=A_D\) is supported on
\(L_D(S)\).

It remains to prove the size bound. Since \(M_\ell\) is a matching, every
element of \(L_t(S)\) has at most one partner under \(M_{D-t}\). Therefore $|M_{D-t}(L_t(S))|\le |L_t(S)|$. Consequently,
\[
|L_{t+1}(S)|
=
|L_t(S)\cup M_{D-t}(L_t(S))|
\le
|L_t(S)|+|M_{D-t}(L_t(S))|
\le
2|L_t(S)|.
\]
Iterating gives $|L_D(S)|\le 2^D |S|$. Since \(L_D(S)\subseteq[n+r]\), we also have \(|L_D(S)|\le n+r\). Combining the two
bounds yields the result.
\end{proof}

\begin{lemma}[Backward support multiplicity]
\label{lem:backward-support-multiplicity}
Fix matchings \(M_1,\dots,M_D \in \mathrm{PM}(n+r)\). For each input site \(v\in[n+r]\),
\[
|\{j\in[n]:v\in L_D(j)\} |
\le 2^D.
\]
\end{lemma}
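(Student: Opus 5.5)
The plan is to use the symmetry between backward and forward propagation. For a fixed input site \(v\), we build the \emph{forward} support of \(v\) by the mirror image of Definition~\ref{def:backward-support}. Set \(F_0(v):=\{v\}\), and for \(t=0,\dots,D-1\) let
\[
F_{t+1}(v):=F_t(v)\cup M_{t+1}(F_t(v)).
\]
Here the matchings are applied in the order \(M_1,M_2,\dots,M_D\), which is opposite to the order used for \(L_D\). The proof then has two steps: first show that \(v\in L_D(j)\) implies \(j\in F_D(v)\), and then show that \(|F_D(v)|\le 2^D\).

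For the first step, we unwind the recursion defining \(L_D(j)\). If \(v\in L_D(j)\), there is a chain
\[
j=w_0,\,w_1,\,\dots,\,w_D=v
\]
such that for each \(t\), either \(w_{t+1}=w_t\) or \((w_t,w_{t+1})\in M_{D-t}\). This follows by a straightforward induction on \(t\): every element of \(L_{t+1}(j)\) either already lies in \(L_t(j)\) or is the \(M_{D-t}\)-partner of an element of \(L_t(j)\). Since matchings are symmetric, we can read the chain in reverse, starting at \(v\). At the stage using \(M_1\), then \(M_2\), and so on up to \(M_D\), each step either stays put or moves to the matched partner. This shows by induction that \(w_{D-s}\in F_s(v)\) for every \(s\), and in particular \(j=w_0\in F_D(v)\).

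For the second step, we repeat verbatim the counting argument from the proof of Lemma~\ref{lem:backward-support-growth}. Each element of \(F_t(v)\) has at most one partner under \(M_{t+1}\), so \(|F_{t+1}(v)|\le 2|F_t(v)|\). Starting from \(|F_0(v)|=1\), this gives \(|F_D(v)|\le 2^D\).

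Combining the two steps yields
\[
|\{j\in[n]:v\in L_D(j)\}|\le|F_D(v)\cap[n]|\le|F_D(v)|\le 2^D .
\]
The only step requiring any care is the bookkeeping of layer order in the chain reversal, namely checking that the backward recursion's use of \(M_D,\dots,M_1\) corresponds exactly to the forward recursion's use of \(M_1,\dots,M_D\). Once that correspondence is written down, the rest of the argument is routine.
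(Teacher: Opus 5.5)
Your proposal is correct and is essentially the paper's proof: the paper also defines the forward sets $F_\ell(v)$ by applying $M_1,\dots,M_D$ in order, bounds $|F_D(v)|\le 2^D$ by the same doubling argument, and uses $v\in L_D(j)\Leftrightarrow j\in F_D(v)$. Your chain-reversal induction, with the index check $M_{D-(D-s-1)}=M_{s+1}$, supplies the justification for the direction of that equivalence which the paper asserts without proof.
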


\begin{proof}
This is the same growth estimate as Lemma~\ref{lem:backward-support-growth},
but applied in the forward direction. Starting from \(v\), define
\[
F_0(v):=\{v\},
\qquad
F_{\ell+1}(v):=F_\ell(v)\cup M_{\ell+1}(F_\ell(v)),
\qquad
\ell=0,\dots,D-1.
\]
Since each \(M_\ell\) is a matching, each step can at most double the size of
the set. Hence $|F_D(v)|\le 2^D$.

Moreover, \(v\in L_D(j)\) if and only if \(j\in F_D(v)\). Therefore
\[
|\{j\in[n]:v\in L_D(j)\} |
\le
|F_D(v)|
\le 2^D.
\]
\end{proof}

\begin{definition}[Dependency graph]
\label{def:dependency-graph}
Given \(M_1,\dots,M_D \in \mathrm{PM}(n+r)\), define the dependency graph $\mathsf{G}^{\mathrm{dep}}_D=\mathsf{G}^{\mathrm{dep}}_D(M_1,\dots,M_D)$ on vertex set \([n]\) by declaring distinct vertices $i,j \in [n]$ adjacent if $L_D(i)\cap L_D(j)\neq\emptyset$.
\end{definition}

In what follows, we denote by $\Delta(\mathsf{G})$ the maximum degree of the graph $\mathsf{G}.$ Specifically, 
\begin{align*}
    \Delta(\mathsf{G}) &= \max_{i \le n} \mathrm{deg}_{\mathsf{G}}(i),
\end{align*}
where $\mathrm{deg}_{\mathsf{G}}(i)$ is the number of vertices in $\mathsf{G}$ that are adjacent to $i$.

\begin{lemma}[Bounded degree of the dependency graph]
\label{lem:dependency-degree}
For every $D \ge 0$, the dependency graph $\mathsf{G}^{\mathrm{dep}}_D$ satisfies $\Delta(\mathsf{G}^{\mathrm{dep}}_D)\le 4^D$. 
\end{lemma}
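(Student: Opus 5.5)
The plan is a double-counting argument that combines the two preceding lemmas. Fix a physical vertex \(i\in[n]\). By definition, \(j\neq i\) is adjacent to \(i\) in \(\mathsf{G}^{\mathrm{dep}}_D\) exactly when some input site \(v\) lies in both \(L_D(i)\) and \(L_D(j)\). This gives the set inclusion
\[
\{j\in[n]\setminus\{i\}: L_D(i)\cap L_D(j)\neq\emptyset\}
\subseteq
\bigcup_{v\in L_D(i)}\bigl(\{j\in[n]:v\in L_D(j)\}\setminus\{i\}\bigr),
\]
and therefore \(\deg(i)\le \sum_{v\in L_D(i)}\bigl(|\{j\in[n]:v\in L_D(j)\}|-1\bigr)\). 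The \(-1\) is available because each \(v\in L_D(i)\) counts \(i\) itself among the \(j\) with \(v\in L_D(j)\).

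Next I bound the two factors using the earlier lemmas. Lemma~\ref{lem:backward-support-growth}, applied with \(S=\{i\}\), gives \(|L_D(i)|\le 2^D\). Lemma~\ref{lem:backward-support-multiplicity} gives that, for each \(v\), the number of \(j\in[n]\) with \(v\in L_D(j)\) is at most \(2^D\). Putting these together yields
\[
\deg(i)\le 2^D(2^D-1)\le 4^D .
\]
Taking the maximum over \(i\in[n]\) gives \(\Delta(\mathsf G^{\mathrm{dep}}_D)\le 4^D\).

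The case \(D=0\) deserves a quick check. Here \(L_0(i)=\{i\}\), so distinct vertices have disjoint supports, the graph has no edges, and the bound \(4^0=1\) holds trivially; the general argument also covers it, since it gives \(1\cdot 0=0\). There is no real obstacle. The only point needing care is that the multiplicity lemma counts physical outputs \(j\in[n]\) while \(v\) ranges over all of \([n+r]\), including ancillas. That lemma is stated for arbitrary \(v\in[n+r]\), so ancilla sites in \(L_D(i)\) are handled uniformly, and the bound does not depend on \(r\).
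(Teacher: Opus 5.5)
Your proof is correct and is essentially the paper's argument: a union bound over $v\in L_D(i)$, combined with $|L_D(i)|\le 2^D$ from Lemma~\ref{lem:backward-support-growth} and the multiplicity bound of Lemma~\ref{lem:backward-support-multiplicity}. You also note that $i$ itself is counted for every such $v$, which gives the slightly sharper bound $2^D(2^D-1)$; the paper does not bother with this refinement.
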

\begin{proof}
Fix \(i\in[n]\). Every neighbor \(j\) of \(i\) in
\(\mathsf G_D^{\mathrm{dep}}\) satisfies
\(L_D(i)\cap L_D(j)\neq\emptyset\). Hence, by a union bound,
\[
\deg_{\mathsf G_D^{\mathrm{dep}}}(i)
\le
\sum_{v\in L_D(i)}
|\{j\in[n]:v\in L_D(j)\} |.
\]
By Lemmas~\ref{lem:backward-support-growth} and
\ref{lem:backward-support-multiplicity}, the right-hand side is at most
\[
|L_D(i)|2^D\le 2^D\cdot 2^D=4^D.
\]
Taking the maximum over \(i\in[n]\) proves the claim.
\end{proof}

We will use the following simple consequence of Lemma~\ref{lem:backward-support-growth}. 
Let \(|\phi\rangle=U|0^{n+r}\rangle\), where
\(|0^{n+r}\rangle:=|0^n\rangle\otimes |0^r\rangle\), and
\(U\in\mathsf{U}^{(D)}_{n,r}(M_1,\dots,M_D)\). If observables \(A\) and \(B\)
are supported on sets \(S_A,S_B\subset[n+r]\) such that
\(L_D(S_A)\cap L_D(S_B)=\emptyset\), then \(U^*AU\) and \(U^*BU\) are
supported on disjoint sets of input qubits. Since the input state
\(|0^{n+r}\rangle\) is a product state, their expectations factor:
\[
\langle 0^{n+r}|(U^*AU)(U^*BU)|0^{n+r}\rangle
=
\langle 0^{n+r}|U^*AU|0^{n+r}\rangle
\langle 0^{n+r}|U^*BU|0^{n+r}\rangle .
\]
Thus disjoint backward supports force factorization of the corresponding
output observables.

\begin{lemma}[Residual support through the dependency graph]
\label{lem:residual-support-dependency-graph}
Fix matchings \(M_1,\dots,M_D \in \mathrm{PM}(n+r)\), and let $\mathsf G_D^{\mathrm{dep}}$ be the associated dependency graph. Let $\theta \in \mathsf{T}_{n,r}^{(D)}$ and $\phi:= \phi_\theta = U (| 0^n\rangle \otimes | 0^r\rangle).$ If the sites appearing in \(I=(i_1,\dots,i_p)\in\mcI_p^n\) form an independent
set in \(\mathsf G_D^{\mathrm{dep}}\), then
\[
    \chi_{I,a}(\theta)=0
    \qquad
    \text{for every }a\in\{1,2,3\}^p.
\]
\end{lemma}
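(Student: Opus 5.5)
The plan is to use the factorization consequence of Lemma~\ref{lem:backward-support-growth} recorded just before the statement, iterated over the $p$ factors of the Pauli string. Fix $I=(i_1,\dots,i_p)$ forming an independent set in $\mathsf G_D^{\mathrm{dep}}$ and fix $a\in\{1,2,3\}^p$. By the definition of the dependency graph, independence means exactly that the backward supports $L_D(i_1),\dots,L_D(i_p)$ are pairwise disjoint. Write $A_s:=\sigma_{i_s}^{a_s}\otimes I_{\mathrm{anc}}$, so that $P_I^a\otimes I_{\mathrm{anc}}=A_1A_2\cdots A_p$. Conjugating by $U$ gives
\[
U^*(P_I^a\otimes I_{\mathrm{anc}})U=\prod_{s=1}^p U^*A_sU.
\]
By Lemma~\ref{lem:backward-support-growth}, each $U^*A_sU$ is supported on $L_D(i_s)$, and these sets are pairwise disjoint.

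The key step is to show that the expectation of a product of operators with pairwise disjoint supports in the product state $|0^{n+r}\rangle$ equals the product of the individual expectations. I would prove this by induction on $p$. Set $B:=U^*A_1U$ and $C:=\prod_{s\ge2}U^*A_sU$. Then $C$ is supported on $\bigcup_{s\ge2}L_D(i_s)$, which is disjoint from $L_D(i_1)$. Because $|0^{n+r}\rangle=|0\rangle_{L_D(i_1)}\otimes|0\rangle_{\mathrm{rest}}$, we get $\langle B C\rangle=\langle B\rangle\langle C\rangle$. Iterating this yields
\[
\langle\phi|P_I^a\otimes I_{\mathrm{anc}}|\phi\rangle=\prod_{s=1}^p\langle 0^{n+r}|U^*A_sU|0^{n+r}\rangle .
\]
Equivalently, one can observe directly that the tensor product of operators on disjoint tensor factors, evaluated in a product state, factorizes completely.

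Finally, $\langle 0^{n+r}|U^*A_sU|0^{n+r}\rangle=\langle\phi|\sigma_{i_s}^{a_s}\otimes I_{\mathrm{anc}}|\phi\rangle=m_{i_s}^\phi(a_s)$ by the definition of the Bloch vector. Subtracting this product from the $p$-point expectation gives $\chi_{I,a}(\theta)=0$.

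I expect no serious obstacle. The only point needing care is that the factorization remark in the paper is stated for two observables, while here it is used for $p$. This is handled by grouping the operators, since the union of the remaining backward supports is still disjoint from the first one. It is also worth noting that the factors $U^*A_sU$ commute, because their supports are disjoint, so the order of the product does not matter.
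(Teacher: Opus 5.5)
Your proposal is correct and follows essentially the same route as the paper: pull back each one-site Pauli to $O_s=U^*(\sigma_{i_s}^{a_s}\otimes I_{\mathrm{anc}})U$, use Lemma~\ref{lem:backward-support-growth} and independence in $\mathsf G_D^{\mathrm{dep}}$ to get pairwise disjoint supports, factorize in the product state $|0^{n+r}\rangle$, and identify each factor with $m_{i_s}^\phi(a_s)$. Your inductive grouping argument simply spells out the $p$-fold factorization that the paper states directly.
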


\begin{proof}
For each \(s=1,\dots,p\), define the pulled-back one-site observable $O_s:=U^*(\sigma_{i_s}^{a_s}\otimes I_{\mathrm{anc}})U.$ By Lemma~\ref{lem:backward-support-growth}, \(O_s\) is supported on
\(L_D(i_s)\). Since the sites in $I$ form an independent set in $\mathsf{G}_D^{\mathrm{dep}}$, we have
\[
    L_D(i_s)\cap L_D(i_t)=\emptyset
    \qquad
    \text{for all }s\neq t .
\]
Since the observables \(O_1,\dots,O_p\) are supported on pairwise disjoint
input tensor factors and \(|0^{n+r}\rangle\) is a product state,
\begin{align*}
\langle\phi|P_I^a\otimes I_{\mathrm{anc}}|\phi\rangle
&=
\langle 0^{n+r}|U^*(P_I^a\otimes I_{\mathrm{anc}})U|0^{n+r}\rangle  
=
\langle 0^{n+r}|O_1\cdots O_p|0^{n+r}\rangle  \\
&=
\prod_{s=1}^p
\langle 0^{n+r}|O_s|0^{n+r}\rangle  
=
\prod_{s=1}^p
\langle\phi|\sigma_{i_s}^{a_s}\otimes I_{\mathrm{anc}}|\phi\rangle
=
\prod_{s=1}^p m_{i_s}^\phi(a_s).
\end{align*}
Hence \(\chi_{I,a}(\theta)=0\).
\end{proof}

\subsection{Mean-field regime} \label{ssec:mean-field-regime}

In this section we prove the mean-field ground-state depth separation, giving the first part of
Theorem~\ref{thm:main-meanfield-gap}. The main step is to show that the residual
process $R$ is uniformly small over the depth-\(D\) circuit class with an
arbitrary number of ancillas. We prove this using a standard Gaussian-process
strategy. We first control the variance and metric entropy of $R$, then combine these estimates using Dudley's entropy integral and the Borell-TIS inequality.

The variance bound, Proposition~\ref{prop:full-residual-variance}, uses the locality of pulled-back physical observables.
Although the circuit acts on \(n+r\) qubits, the Hamiltonian contains only
physical Pauli strings. For each physical site \(i\), the pulled-back one-site observable is supported on the backward support \(L_D(i)\), whose size is at most \(2^D\). Together with the backward-support multiplicity estimate, this gives the uniform variance bound
\[
    \sup_{\theta\in\mathsf{T}_{n,r}^{(D)}}
    \operatorname{Var}(R(\theta))
    \le
    C_p\frac{2^D}{n},
\]
with a constant independent of $r$.

The entropy bound, Proposition~\ref{prop:full-residual-entropy}, is the only place where the arbitrary number of ancillas
requires additional work relative to the no-ancilla setting. A direct net over all \(n+r\) qubits would depend on $r$. Instead, we prune the circuit to the joint backward support of the
physical outputs. Since each layer is a matching, this active set has size at
most \(n2^D\). Gates outside this active set are invisible to all physical
observables, and the remaining active ancillas can be relabeled into a
canonical block. This gives an $r$-uniform entropy bound for the residual
metric, at the price of an additional factor \(2^D\). 

Combining the variance and entropy estimates gives the expected supremum bound in Lemma~\ref{lem:expected-sup-residual}. We then apply Borell-TIS in
Proposition~\ref{prop:circuit-residual-high-prob} to obtain the high-probability
residual bound needed for the mean-field depth-to-product comparison. The final
ground-state gap follows by comparing the product-state and ground-state
energies.

\begin{proposition}[Variance bound, mean-field regime]
\label{prop:full-residual-variance}
Fix \(p\ge2\). For every \(r=r_n\ge0\),
\[
    \sup_{\theta\in\mathsf{T}_{n,r}^{(D)}}
    \operatorname{Var}\bigl(R(\theta)\bigr)
    \le
    C_p\frac{2^D}{n}.
\]
The constant \(C_p\) is independent of $r$.
\end{proposition}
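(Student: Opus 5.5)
The variance is an explicit finite sum, and I would bound it by counting how many coefficients $\chi_{I,a}$ can be nonzero. The coefficients $g_{I,a}$ are independent standard Gaussians and $\chi_{I,a}(\theta)$ is deterministic given $\theta$, so
\[
\operatorname{Var}\bigl(R(\theta)\bigr)=\binom np^{-1}\sum_{I\in\mcI_p^n}\sum_{a\in\{1,2,3\}^p}\chi_{I,a}(\theta)^2 .
\]
By Lemma~\ref{lem:residual-support-dependency-graph}, only $p$-tuples $I$ that are not independent in $\mathsf G_D^{\mathrm{dep}}$ contribute. The plan has two parts:
\begin{enumerate}
\item a uniform per-tuple bound $\sum_a\chi_{I,a}(\theta)^2\le C_p$;
\item a bound on the weighted number of contributing tuples that produces the factor $2^D/n$.
\end{enumerate}

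\textbf{Per-tuple bound.} Let $\rho_I$ be the reduced physical state on the $p$ sites of $I$. Expanding $\rho_I$ in the Pauli basis together with $\operatorname{Tr}\rho_I^2\le1$ gives $\sum_{a}\langle P_I^a\rangle^2\le 2^p$. This sum runs over nonidentity strings only, and all of them appear in $H_{n,p}$. Since each Bloch vector satisfies $\|m_i\|_2\le1$, we also have $\sum_a\prod_s m_{i_s}(a_s)^2=\prod_s\|m_{i_s}\|_2^2\le1$. Hence $\sum_a\chi_{I,a}^2\le 2(2^p+1)$, uniformly in $\theta$ and $r$.

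\textbf{Crude count and why it falls short.} Every non-independent $I$ contains an edge $\{i,j\}$ of $\mathsf G_D^{\mathrm{dep}}$. The number of such tuples is at most $|E(\mathsf G_D^{\mathrm{dep}})|\binom{n}{p-2}$. Dividing by $\binom np\asymp n^p$ and using Lemma~\ref{lem:dependency-degree} gives only $C_p4^D/n$. To reach $2^D/n$ the count cannot treat every edge as costing $O(1)$; it must use the sizes of the coefficients.

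\textbf{Weighted count, the main obstacle.} I would reorganize the sum by the connected components of $I$ in $\mathsf G_D^{\mathrm{dep}}$.
\begin{itemize}
\item Disjoint backward supports make $\langle P_I^a\rangle$ factor over these components.
\item Tuples with a component of size at least $3$ number $O(n\cdot 16^D n^{p-3})$, so they contribute $O_p(16^D/n^2)$. This is $\le 2^D/n$ once $8^D\le n$, which holds in the logarithmic-depth range; for larger $D$ the trivial bound $\operatorname{Var}\le C_p/n\cdot 2^p$-type estimates already suffice.
\item The leading contribution then comes from tuples whose only nontrivial component is a single pair $\{i,j\}$. For these, $\chi_{I,a}$ equals the connected correlation $\operatorname{Cov}(\sigma_i^{a_i},\sigma_j^{a_j})$ times one-site marginals of modulus at most one.
\end{itemize}
The key estimate needed is therefore a monogamy-type bound
\[
\sum_{j\in[n]}\sum_{b,b'}\operatorname{Cov}_\phi\bigl(\sigma_i^b,\sigma_j^{b'}\bigr)^2\le C\,2^D\qquad\text{for each fixed }i.
\]
This would yield the total $\binom np^{-1}\cdot n\cdot C2^D\cdot n^{p-2}\lesssim C_p2^D/n$. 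To prove it, I would pull back to the input. Write $A=U^*\sigma_i^bU-\langle\cdot\rangle$, which is supported on $L_D(i)$ with $|L_D(i)|\le 2^D$, and $c_j=\operatorname{Cov}(A,B_j)$. Then
\[
\textstyle\sum_jc_j^2=\langle A\,\sum_jc_jB_j\rangle ,
\]
and I would try to bound $\|P_{L_D(i)}\sum_jc_jB_j|0\rangle\|$. Here $P_{L_D(i)}$ is the projection onto the excitations living on $L_D(i)$ in the product-state (Efron--Stein) decomposition of the input Hilbert space. Each input site $v\in L_D(i)$ should carry only $O(1)$ total squared weight, so the sum over $v$ costs $2^D$ rather than $2^D$ per neighbour. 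Lemma~\ref{lem:backward-support-multiplicity} is the input needed for the per-site control.

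This Efron--Stein localization is the step I expect to be delicate. If it fails, I would fall back to the weaker $4^D/n$ bound from the crude count, which changes only the constant $\delta_0$ downstream.
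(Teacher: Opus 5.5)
\textbf{The key step is missing.} The step you flag as ``delicate'' is the whole content of the proposition, and your proposal does not supply it. Your fallback bound $C_p4^D/n$ is not the stated bound.

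The missing ingredient is a Bessel-type inequality for localized fluctuation vectors (Lemma~\ref{lem:bessel-type-ineq}). Put $O_i^b:=U^*(\sigma_i^b\otimes I_{\mathrm{anc}})U$, $m_i^b:=\langle 0^{n+r}|O_i^b|0^{n+r}\rangle$ and $v_i^b:=(O_i^b-m_i^b)|0^{n+r}\rangle$. Each $v_i^b$ has norm at most one. It is a combination of computational basis vectors $e_T$ with $\emptyset\neq T\subseteq L_D(i)$, since centering removes the $T=\emptyset$ term.

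Now fix a nonempty $T$ and pick any $v\in T$. Only pairs $(i,b)$ with $v\in L_D(i)$ can have a nonzero $e_T$-coefficient, and by Lemma~\ref{lem:backward-support-multiplicity} there are at most $3\cdot 2^D$ such pairs. Applying Cauchy--Schwarz coordinate by coordinate gives $\|\sum_{i,b}c_{i,b}v_i^b\|^2\le 3\cdot 2^D\sum_{i,b}|c_{i,b}|^2$. Equivalently, $\sum_{i,b}|\langle x,v_i^b\rangle|^2\le 3\cdot 2^D\|x\|^2$ for every $x$.

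Note that the charging is per basis vector, not ``per input site'', and no projection $P_{L_D(i)}$ is needed. Your monogamy estimate is then an immediate special case, because $\operatorname{Cov}_\phi(\sigma_i^b,\sigma_j^{b'})=\langle v_i^b,v_j^{b'}\rangle$ for $i\neq j$.

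\textbf{The component decomposition also fails for $p\ge3$.} Even granting that estimate, your reduction does not prove the statement. Tuples with a component of size at least three, and tuples with two or more pair components (which you omit), are handled only by the crude count. This leaves a remainder of order $16^D/n^2$, which is not $O(2^D/n)$ with a $D$-independent constant once $8^D\gg n$.

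Your remedy for larger $D$ is false: there is no depth-free $O(1/n)$ bound on $\operatorname{Var}(R(\theta))$. For even $p$, the GHZ state is preparable at depth $\lceil\log_2 n\rceil$. It has $m_i^\phi(3)=0$ and $\langle\phi|P_I^{(3,\dots,3)}|\phi\rangle=1$, so $\chi_{I,(3,\dots,3)}=1$ for every $I$ and the variance is at least one. The only depth-free bound is $\operatorname{Var}(R(\theta))\le C_p$. That gives $C_p2^D/n$ only once $2^D\ge n$, so the window $n^{1/3}<2^D<n$ remains uncovered.

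The paper avoids any case analysis by using the telescoping identity
\[
\chi_{I,a}(\theta)=\sum_{s=1}^p\Bigl(\prod_{t>s}m_{i_t}^{a_t}\Bigr)\langle x_s,v_{i_s}^{a_s}\rangle,\qquad x_s:=\Bigl(\prod_{t<s}O_{i_t}^{a_t}\Bigr)^*|0^{n+r}\rangle .
\]
Here $x_s$ is a unit vector depending only on the earlier indices $(i_t,a_t)_{t<s}$. After Cauchy--Schwarz, fix $s$ and $(i_{-s},a_{-s})$; the sum over $(i_s,a_s)$ is then at most $3\cdot 2^D$ by the Bessel inequality. Hence $\sum_{I,a}|\chi_{I,a}(\theta)|^2\le C_pn^{p-1}2^D$ for every $D$ simultaneously, with no higher-order remainder.
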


\begin{proof}
Fix \(\theta=(M_1,\dots,M_D,U)\in\mathsf{T}_{n,r}^{(D)}\), and write $\phi=\phi_\theta=U|0^{n+r}\rangle$. Since the \(g_{I,a}\)'s are independent standard Gaussians,
\[
\operatorname{Var}(R(\theta))
=
\binom np^{-1}
\sum_{I\in\mcI_p^n}
\sum_{a\in\{1,2,3\}^p}
|\chi_{I,a}(\theta)|^2 .
\]
Thus it suffices to prove the deterministic bound
\[
\sum_{I\in\mcI_p^n}
\sum_{a\in\{1,2,3\}^p}
|\chi_{I,a}(\theta)|^2
\le
C_p n^{p-1}2^D.
\]

For each physical site \(i\in[n]\) and label \(b\in\{1,2,3\}\), define the
pulled-back Pauli observable $O_i^b
    :=
    U^*(\sigma_i^b\otimes I_{\mathrm{anc}})U$. Also define its mean, centered version, and fluctuation vector respectively by
\[
    m_i^b
    :=
    \langle0^{n+r}|O_i^b|0^{n+r}\rangle,
    \qquad
    A_i^b
    :=
    O_i^b-m_i^b \mathrm{Id}_{n+r},
    \qquad
    v_i^b
    :=
    A_i^b|0^{n+r}\rangle,
\]
where $\mathrm{Id}_{n+r}$ denotes the identity operator on $\mcH_{n,r}$.
We refer to \(v_i^b\) as the fluctuation vector because it is the component of
\(O_i^b|0^{n+r}\rangle\) left after subtracting its mean component
\(m_i^b|0^{n+r}\rangle\). 

By Lemma~\ref{lem:backward-support-growth}, \(O_i^b\), and hence \(A_i^b\), is supported on \(L_D(i)\subset[n+r]\). Therefore, we have
\[
    v_i^b
    \in
    (\mathbb C^2)^{\otimes L_D(i)}
    \otimes
    |0\rangle^{\otimes([n+r]\setminus L_D(i))}.
\]
Equivalently, if \(e_T\) denotes the computational basis vector with ones on
\(T\subset[n+r]\) and zeros elsewhere, then \(v_i^b\) is a linear combination
only of basis vectors \(e_T\) with \(T\subseteq L_D(i)\).

Note also that,
\[
\langle 0^{n+r},v_i^b\rangle
=
\langle0^{n+r}|A_i^b|0^{n+r}\rangle
= 0
.
\]
Moreover, since \(A_i^b\) is self-adjoint,
\[
\|v_i^b\|^2
=
\langle0^{n+r}|(A_i^b)^2|0^{n+r}\rangle
=
\langle0^{n+r}|(O_i^b-m_i^b \mathrm{Id}_{n+r})^2|0^{n+r}\rangle .
\]
Using \(\phi=U |0^{n+r} \rangle\), this equals
\[
\langle
\phi
|
(
\sigma_i^b\otimes I_{\mathrm{anc}}
-
\langle\phi|\sigma_i^b\otimes I_{\mathrm{anc}}|\phi\rangle \mathrm{Id}_{n+r}
)^2
|
\phi
\rangle .
\]
Since \((\sigma_i^b\otimes I_{\mathrm{anc}})^2=\mathrm{Id}_{n+r}\), we obtain
\[
\|v_i^b\|^2
=
1-
\langle\phi|\sigma_i^b\otimes I_{\mathrm{anc}}|\phi\rangle^2
\le 1.
\]

By Lemma~\ref{lem:backward-support-multiplicity}, the physical backward
supports satisfy
\[
\sup_{v\in[n+r]}
|\{i\in[n]:v\in L_D(i)\} |
\le
2^D .
\]
We now apply Lemma~\ref{lem:bessel-type-ineq} in Appendix~\ref{ssec:AuxTechnical} with 
\[
    \Omega=[n+r],
    \qquad
    J=[n],
    \qquad
    J_0=\{1,2,3\},
    \qquad
    L_i=L_D(i),
    \qquad
    w_i^b=v_i^b,
    \qquad
    \gamma=2^D .
\]
It follows that for every
\(x\in(\mathbb C^2)^{\otimes(n+r)}\),
\begin{align}\label{eq:fluctuation-vectors-bound}
\sum_{i=1}^n\sum_{b=1}^3
|\langle x,v_i^b\rangle|^2
\le
3\cdot 2^D\|x\|^2.
\end{align}

Now fix \(I=(i_1,\dots,i_p)\in\mcI_p^n\) and
\(a=(a_1,\dots,a_p)\in\{1,2,3\}^p\). Write
\[
    O_s:=O_{i_s}^{a_s},
    \qquad
    m_s:=m_{i_s}^{a_s},
    \qquad
    A_s:=O_s-m_s \mathrm{Id}_{n+r}.
\]
By the definition of \(P_I^a\) and \(O_s\), we have
\[
P_I^a\otimes I_{\mathrm{anc}}
=
\prod_{s=1}^p(\sigma_{i_s}^{a_s}\otimes I_{\mathrm{anc}}),
\]
and hence
\[
U^*(P_I^a\otimes I_{\mathrm{anc}})U
=
\prod_{s=1}^p
U^*(\sigma_{i_s}^{a_s}\otimes I_{\mathrm{anc}})U
=
O_1\cdots O_p .
\]
Therefore
\[
\chi_{I,a}(\theta)
=
\langle0^{n+r}|O_1\cdots O_p|0^{n+r}\rangle
-
\prod_{s=1}^p m_s .
\]

We now bound \(\chi_{I,a}(\theta)\). Since \(A_s=O_s-m_s \mathrm{Id}_{n+r}\), we have the
telescoping identity
\[
O_1\cdots O_p-\prod_{s=1}^p m_s \mathrm{Id}_{n+r}
=
\sum_{s=1}^p
\big(
\prod_{r<s}O_r
\big)
A_s
\big(
\prod_{r>s}m_r
\big),
\]
and so
\[
\chi_{I,a}(\theta)
=
\sum_{s=1}^p
\big(
\prod_{r>s}m_r
\big)
\big\langle
0^{n+r}
\big |
\big (\prod_{r<s}O_r \big)
A_s
\big|
0^{n+r}
\big\rangle .
\]
For each \(s\), define
\[
    x_s
    :=
    \big(\prod_{r<s}O_r\big)^*|0^{n+r}\rangle .
\]
Each \(O_r\) is unitary, so \(\|x_s\|=1\). Moreover, since
\(v_{i_s}^{a_s}=A_s|0^{n+r}\rangle\),
\[
\big\langle
0^{n+r}
\big|
\big(\prod_{r<s}O_r\big)
A_s
\big|
0^{n+r}
\big\rangle
=
\langle x_s,v_{i_s}^{a_s}\rangle .
\]
Also \(|m_r|\le1\). Hence
\[
|\chi_{I,a}(\theta)|
\le
\sum_{s=1}^p
|\langle x_s,v_{i_s}^{a_s}\rangle|.
\]

By Cauchy--Schwarz,
\[
|\chi_{I,a}(\theta)|^2
\le
p
\sum_{s=1}^p
|\langle x_s,v_{i_s}^{a_s}\rangle|^2 .
\]

We now sum this estimate over \(I\in\mcI_p^n\) and
\(a\in\{1,2,3\}^p\). It is convenient to enlarge the sum from increasing
$p$-tuples to ordered $p$-tuples of distinct physical sites,
\[
    [n]^p_{\neq}
    :=
    \{(i_1,\dots,i_p)\in[n]^p:
    i_1,\dots,i_p \text{ are distinct}\}.
\]
This changes the estimate only by a constant depending on $p$. Hence
\[
\begin{aligned}
\sum_{I\in\mcI_p^n}
\sum_{a\in\{1,2,3\}^p}
|\chi_{I,a}(\theta)|^2
&\le
C_p
\sum_{(i_1,\dots,i_p)\in[n]^p_{\neq}}
\sum_{a\in\{1,2,3\}^p}
\sum_{s=1}^p
|\langle x_s,v_{i_s}^{a_s}\rangle|^2  \\
&=
C_p
\sum_{s=1}^p
\sum_{(i_1,\dots,i_p)\in[n]^p_{\neq}}
\sum_{a\in\{1,2,3\}^p}
|\langle x_s,v_{i_s}^{a_s}\rangle|^2 .
\end{aligned}
\]
We bound the inner sum for a fixed \(s\). Write
    $i_{-s}:=(i_r)_{r\neq s}$ and $
    a_{-s}:=(a_r)_{r\neq s}$. Then
\[
\sum_{(i_1,\dots,i_p)\in[n]^p_{\neq}}
\sum_{a\in\{1,2,3\}^p}
|\langle x_s,v_{i_s}^{a_s}\rangle|^2  =
\sum_{i_{-s}\in[n]^{p-1}_{\neq}}
\sum_{a_{-s}\in\{1,2,3\}^{p-1}}
\sum_{i_s\in[n]\setminus\{i_r:r\neq s\}}
\sum_{a_s=1}^3
|\langle x_s,v_{i_s}^{a_s}\rangle|^2 .
\]
For each fixed choice of \(i_{-s},a_{-s}\), the vector $x_s$ is fixed and has norm one. Dropping the restriction
\(i_s\notin\{i_r:r\neq s\}\) can only increase the inner sum, so
\[
\sum_{i_s\in[n]\setminus\{i_r:r\neq s\}}
\sum_{a_s=1}^3
|\langle x_s,v_{i_s}^{a_s}\rangle|^2
\le
\sum_{i_s=1}^n
\sum_{a_s=1}^3
|\langle x_s,v_{i_s}^{a_s}\rangle|^2 .
\]
Applying \eqref{eq:fluctuation-vectors-bound} with \(x=x_s\), we get
\[
\sum_{i_s=1}^n
\sum_{a_s=1}^3
|\langle x_s,v_{i_s}^{a_s}\rangle|^2
\le
3\cdot 2^D\|x_s\|^2
=
3\cdot 2^D .
\]
There are at most \(n^{p-1}\) choices for \(i_{-s}\) and \(3^{p-1}\) choices
for \(a_{-s}\). Therefore, for each fixed \(s\in[p]\),
\[
\sum_{(i_1,\dots,i_p)\in[n]^p_{\neq}}
\sum_{a\in\{1,2,3\}^p}
|\langle x_s,v_{i_s}^{a_s}\rangle|^2
\le
C_p n^{p-1}2^D .
\]
Substituting this bound and summing over \(s=1,\dots,p\), we obtain
\[
\sum_{I\in\mcI_p^n}
\sum_{a\in\{1,2,3\}^p}
|\chi_{I,a}(\theta)|^2
\le
C_p n^{p-1}2^D .
\]
This is the desired deterministic bound. Taking the supremum over
\(\theta\in\mathsf{T}_{n,r}^{(D)}\) completes the proof.

\end{proof}

\begin{proposition}[Entropy bound, mean-field regime]
\label{prop:full-residual-entropy}
Fix \(p\ge2\). Let \(r=r_n\ge0\). Let \(\mathsf{d}_{R}\) be the canonical metric of the residual process
$R$ on \(\mathsf{T}_{n,r}^{(D)}\). Then, for every \(\eta>0\),
\[
\log N(\mathsf{T}_{n,r}^{(D)},\mathsf{d}_{R},\eta)
\le
C_pDn2^D(\log n+D)
+
C_pDn2^D
\log_+\left(\frac{C_pDn2^D}{\eta}\right),
\]
uniformly over \(r\ge0\).
\end{proposition}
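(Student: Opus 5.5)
The plan is to reduce, uniformly in \(r\), to a finite-dimensional parametrization of the physical marginal, then cover it by a combinatorial net over architectures and a Lipschitz net over gates.

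\textbf{Step 1: pruning and relabeling.} Fix \(\theta=(M_1,\dots,M_D,U)\). Let \(A:=L_D([n])=\bigcup_{i\in[n]}L_D(i)\). By Lemma~\ref{lem:backward-support-growth}, \(|A|\le n2^D\). Define the active set at each intermediate time, \(A_t:=L_t([n])\). Only gates of layer \(D-t\) with an endpoint in \(A_t\) affect any pulled-back physical observable \(U^*(P\otimes I_{\mathrm{anc}})U\). All other gates can be replaced by the identity without changing any physical \(p\)-point expectation or any \(m_i^\phi\), hence without changing \(\chi_{I,a}(\theta)\) or \(R(\theta)\).

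Since the initial state \(|0^{n+r}\rangle\) is invariant under permutations of the ancillas, we may relabel the active ancillas \(A\setminus[n]\) (there are at most \(n2^D\) of them) into the canonical block \(\{n+1,\dots,n+n2^D\}\). The upshot is that \(R(\theta)=R(\tilde\theta)\) for some \(\tilde\theta\) in a fixed class \(\tilde{\mathsf T}\). This class consists of depth-\(D\) circuits on \(N:=n(1+2^D)\) qubits, each layer containing at most \(n2^D\) gates. Since \(\mathsf d_R\) depends only on the values \(R(\cdot)\), any \(\eta\)-net for \(\tilde{\mathsf T}\) transfers to \(\mathsf T^{(D)}_{n,r}\). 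This is where \(r\)-uniformity comes from.

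\textbf{Step 2: counting architectures.} Each pruned layer is a partial matching on \([N]\) with at most \(n2^D\) edges. The number of choices per layer is at most \(N^{2n2^D}\), so over \(D\) layers the log-count is at most
\[
2Dn2^D\log N\le C Dn2^D(\log n+D).
\]
This gives the first term of the bound.

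\textbf{Step 3: Lipschitz dependence on the gates.} Fix an architecture. Its gates form a tuple of at most \(Dn2^D\) elements of \(U(4)\), which is a compact manifold of real dimension \(16\). We bound
\[
\mathsf d_R(\theta,\theta')^2=\binom np^{-1}\sum_{I,a}|\chi_{I,a}(\theta)-\chi_{I,a}(\theta')|^2\le 4\cdot 3^p\sup_{I,a}|\chi_{I,a}(\theta)-\chi_{I,a}(\theta')|^2 .
\]
Each \(\chi_{I,a}\) is an expectation minus a product of \(p\) expectations of bounded observables. Replacing the gates one at a time, a telescoping argument over the \(K\le Dn2^D\) gates gives
\[
|\chi_{I,a}(\theta)-\chi_{I,a}(\theta')|\le C_p\sum_k\|U_k-U_k'\|\le C_pK\max_k\|U_k-U_k'\|.
\]
Hence an operator-norm \(\epsilon\)-net of \(U(4)\) of size \((C/\epsilon)^{16}\), with \(\epsilon=\eta/(C_pDn2^D)\), applied gate-by-gate gives log-cardinality at most
\[
16Dn2^D\log_+\!\bigl(C_pDn2^D/\eta\bigr).
\]
This is the second term. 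Nets from the two steps combine by a union over architectures.

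\textbf{Main obstacle.} I expect Step 1 to be the delicate part. Two things must be checked: that pruning gates outside the active light cone genuinely leaves every \(\chi_{I,a}\) unchanged, and that relabeling ancillas is legitimate because the input state is symmetric. Both follow from the layer-by-layer Heisenberg-picture argument in the proof of Lemma~\ref{lem:backward-support-growth}.

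A cruder global Lipschitz bound, with \(K\) possibly unbounded in \(r\), would fail. This is exactly why the pruning must be done before netting.
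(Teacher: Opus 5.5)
Your proposal is correct and follows essentially the same route as the paper: prune to the joint backward light cone of $[n]$ (at most $n2^D$ qubits), relabel the active ancillas into a canonical block using the permutation invariance of $|0^{n+r}\rangle$ and of physical observables, count partial-matching architectures to get the $Dn2^D(\log n+D)$ term, and net each $U(4)$ gate in operator norm with the telescoping Lipschitz bound $|\chi_{I,a}(\theta)-\chi_{I,a}(\theta')|\le C_p\|U-U'\|_{\mathrm{op}}$ to get the $\log_+$ term. The only cosmetic difference is block size: the paper relabels into a block of size $q_*=\min\{r,n(2^D-1)\}$, so relabeled circuits and net centers stay inside $\mathsf{T}_{n,r}^{(D)}$, whereas your fixed block of size $n2^D$ may exceed $n+r$ when $r$ is small; this is harmless, since $\chi_{I,a}$ (hence $\mathsf d_R$) extends to circuits with extra idle ancillas and using external centers costs only a factor $2$ in $\eta$.
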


\begin{proof}
We identify a parameter
\(\theta\in\mathsf{T}_{n,r}^{(D)}\) with its corresponding circuit unitary \(U\) when no
confusion can arise. Thus $U=V_DV_{D-1}\cdots V_1$, where each layer has the form
\[
    V_t
    =
    \bigotimes_{e\in M_t}U_e^{(t)},
    \qquad
    U_e^{(t)}\in U(4),
\]
for a perfect matching \(M_t\) on \([n+r]\). Throughout, whenever we
write a tensor product over a partial matching, the resulting operator is
understood to act as the identity on all qubits not covered by that matching.

We prove the entropy bound by replacing each circuit, in the metric
\(\mathsf{d}_{R}\), by an equivalent pruned and relabeled circuit acting on at most \(n2^D\) qubits.

We now construct the pruned circuit. It is enough to track how physical
observables pull back through the circuit. Here and below, a physical observable means an observable supported on the first \(n\) qubits.

Write $U=V_DV_{D-1}\cdots V_1$, where \(V_1\) acts first and \(V_D\) acts last. Given a physical observable \(X\), define the pulled-back observables
\[
    X_D:=X,
    \qquad
    X_{t-1}:=V_t^*X_tV_t,
    \qquad t=D,D-1,\dots,1.
\]
Thus \(X_t\) is the observable obtained after pulling \(X\) backward through
the later layers \(D,D-1,\dots,t+1\). In particular, $X_0=U^*XU.$
Since \(X_D=X\) is physical, \(\operatorname{supp}(X_D)\subseteq[n]\). We will
use a joint support bound that works for all physical observables at once, and therefore initialize $S_D:=[n].$

Now define the backward active sets recursively. For \(t=D,D-1,\dots,1\), set
\[
    R_t:=\{e\in M_t:e\cap S_t\neq\emptyset\},
    \qquad
    S_{t-1}:=S_t\cup\bigcup_{e\in R_t}e.
\]
Here \(S_t\) is a set which contains the support of $X_t$, uniformly over all physical observables \(X\) considered in the residual process. The
set \(R_t\) consists of precisely those gates in layer \(t\) that touch this
current backward support. These are the only gates in layer \(t\) that can
affect the pulled-back observable when passing from time \(t\) to time
\(t-1\).

Since \(M_t\) is a matching, each qubit in \(S_t\) belongs to at most one edge
of \(M_t\). Therefore each active qubit can introduce at most one new qubit
when we pull backward through layer \(t\). Hence $|S_{t-1}|\le 2|S_t|,$ and iterating gives
\[
    |S_t|\le n2^{D-t},
    \qquad t=0,1,\dots,D.
\]
In particular, \(|S_0|\le n2^D\). Moreover, since every edge of \(R_t\)
touches \(S_t\), we have \(|R_t|\le |S_t|\), and hence
\[
    \sum_{t=1}^D |R_t|
    \le
    \sum_{t=1}^D |S_t|
    \le
    n(2^D-1).
\]

Define the pruned layer and pruned circuit by
\[
    V_t^{\mathrm{pr}}
    :=
    \bigotimes_{e\in R_t}U_e^{(t)},
    \qquad 
    U^{\mathrm{pr}}
    :=
    V_D^{\mathrm{pr}}\cdots V_1^{\mathrm{pr}},
\]
where \(V_t^{\mathrm{pr}}\) acts as the identity on all qubits not covered by
\(R_t\). We claim that, for every physical observable \(X\), $U^*XU=(U^{\mathrm{pr}})^*XU^{\mathrm{pr}}$.

To prove the claim, use the pulled-back observables \(X_t\) defined above. We prove by induction that \(\operatorname{supp}(X_t)\subseteq S_t\) for every
\(t=D,D-1,\dots,0\), and that
\[
    X_{t-1}=(V_t^{\mathrm{pr}})^*X_tV_t^{\mathrm{pr}}
    \qquad\text{for }t=D,D-1,\dots,1.
\]
The base case holds because \(X_D=X\) is physical, so
\(\operatorname{supp}(X_D)\subseteq[n]=S_D\).

Assume that \(\operatorname{supp}(X_t)\subseteq S_t\). Let
\[
    V_t^{\mathrm{bad}}
    :=
    \bigotimes_{e\in M_t\setminus R_t}U_e^{(t)}.
\]
Since \(M_t\) is a matching, the gates in \(R_t\) and \(M_t\setminus R_t\)
act on disjoint pairs, and hence $V_t=V_t^{\mathrm{bad}}V_t^{\mathrm{pr}}$. By definition of \(R_t\), every edge in \(M_t\setminus R_t\) is disjoint from
\(S_t\). Hence \(V_t^{\mathrm{bad}}\) is supported outside
\(\operatorname{supp}(X_t)\), and therefore $(V_t^{\mathrm{bad}})^*X_tV_t^{\mathrm{bad}}=X_t$. It follows that
\[
    X_{t-1}
    =
    V_t^*X_tV_t
    =
    (V_t^{\mathrm{pr}})^*
    (V_t^{\mathrm{bad}})^*
    X_t
    V_t^{\mathrm{bad}}
    V_t^{\mathrm{pr}}
    =
    (V_t^{\mathrm{pr}})^*X_tV_t^{\mathrm{pr}} .
\]
Finally, conjugation by \(V_t^{\mathrm{pr}}\) can only spread the support along
the edges in \(R_t\). Thus
\[
    \operatorname{supp}(X_{t-1})
    \subseteq
    S_t\cup\bigcup_{e\in R_t}e
    =
    S_{t-1}.
\]
This proves the induction. Since \(X_0=U^*XU\), the displayed identity for
\(X_{t-1}\) also gives $U^*XU=(U^{\mathrm{pr}})^*XU^{\mathrm{pr}}$, as claimed.

Let \(\theta^{\mathrm{pr}}\) denote the circuit parameter obtained from
\(\theta\) by replacing \(U\) with \(U^{\mathrm{pr}}\), equivalently by setting
all deleted gates to the identity. Applying this identity to the physical
Pauli products \(P_I^a\otimes I_{\mathrm{anc}}\) and to the one-site physical
observables \(\sigma_i^b\otimes I_{\mathrm{anc}}\), we obtain $\chi_{I,a}(\theta)=\chi_{I,a}(\theta^{\mathrm{pr}})$ for every \(I\in\mcI_p^n\) and \(a\in\{1,2,3\}^p\). Consequently, $\mathsf d_R(\theta,\theta^{\mathrm{pr}})=0$.

After pruning, each circuit parameter \(\theta\) is \(\mathsf d_R\)-equivalent
to the pruned parameter \(\theta^{\mathrm{pr}}\), whose nonidentity gates
involve only the qubits in the backward active set \(S_0\). In particular, at
most \(n(2^D-1)\) ancillas can remain active. However, these active ancillas
may have arbitrary labels among the \(r\) available ancillas. If we counted
these labels directly, we would reintroduce an \(r\)-dependent combinatorial
factor. Since all ancillas are initialized in the same state and the observables
in the residual process act only on the physical qubits, we may permute ancilla
labels, while keeping the physical labels fixed, without changing any physical
expectation.

Concretely, let $A(\theta^{\mathrm{pr}}):=S_0\setminus[n]$ be the set of ancilla labels appearing in the joint backward support of the
physical outputs. Since \(|S_0|\le n2^D\), we have $|A(\theta^{\mathrm{pr}})|\le n(2^D-1)$.

Set
\[
    q_*:=\min\{r,n(2^D-1)\},
    \qquad
    Q:=n+q_*.
\]
Then \(Q\le n2^D\). Let
\[
    K:=\{1,\dots,n\}\cup\{n+1,\dots,n+q_*\}
\]
be the set consisting of the physical qubits and the first \(q_*\) ancillas.

Since \(|A(\theta^{\mathrm{pr}})|\le q_*\), there exists a permutation \(\pi\)
of \([n+r]\) such that \(\pi(i)=i\) for every physical label \(i\in[n]\), and
\[
    \pi(A(\theta^{\mathrm{pr}}))\subseteq\{n+1,\dots,n+q_*\}.
\]
Let \(P_\pi\) denote the corresponding qubit-permutation unitary, and define $\widetilde U:=P_\pi U^{\mathrm{pr}}P_\pi^*$.

Let \(\widetilde\theta\) denote the corresponding relabeled circuit parameter.
Then \(\widetilde\theta\) is a depth-\(D\) matching-circuit parameter whose
nonidentity gates are all supported on \(K\).

We now check that this relabeling preserves all residual coefficients. Since
\(\pi\) fixes the physical labels, for every physical observable \(X\), $P_\pi^*XP_\pi=X$. Also \(P_\pi|0^{n+r}\rangle=|0^{n+r}\rangle\). Therefore
\[
\langle0^{n+r}|\widetilde U^*X\widetilde U|0^{n+r}\rangle
=
\langle0^{n+r}|
U^{\mathrm{pr}*}XU^{\mathrm{pr}}
|0^{n+r}\rangle.
\]
Applying this identity to all physical Pauli products and one-site physical
Pauli observables gives
\[
    \chi_{I,a}(\widetilde\theta)
    =
    \chi_{I,a}(\theta^{\mathrm{pr}})
    =
    \chi_{I,a}(\theta)
\]
for every \(I\in\mcI_p^n\) and \(a\in\{1,2,3\}^p\). Thus $\mathsf d_R(\theta,\widetilde\theta)=0$.

Consequently, every circuit parameter in \(\mathsf{T}_{n,r}^{(D)}\) is
\(\mathsf{d}_{R}\)-equivalent to a circuit parameter whose nonidentity gates act only
on a canonical set of size $Q\le n2^D$.

It remains to cover depth-\(D\) matching circuits on the set \(K\). Let \(N_Q\) be the number of partial matchings on a \(Q\)-element set. A partial matching with exactly \(m\) edges is
obtained by choosing \(2m\) vertices and partitioning them into \(m\) unordered
pairs. Hence
\[
    N_Q
    =
    \sum_{m=0}^{\lfloor Q/2\rfloor}
    \binom{Q}{2m} \frac{(2m)!}{2^m m!}
    =
    \sum_{m=0}^{\lfloor Q/2\rfloor}
    \frac{Q!}{(Q-2m)!2^m m!}.
\]
For each \(m\),
\[
    \frac{Q!}{(Q-2m)!2^m m!}
    \le Q^{2m}\le Q^Q.
\]
Thus
\[
    N_Q
    \le
    (Q+1)Q^Q
    \le
    \exp (CQ\log(eQ)).
\]
A depth-\(D\) architecture is a sequence of \(D\) partial matchings on \(K\). Therefore the number of possible architectures is at most \(N_Q^D\). Hence
\[
    \log(N_Q^D)
    =
    D\log N_Q
    \le
    C DQ\log(eQ).
\]
Since \(Q\le n2^D\), we have
\[
    DQ\log(eQ)
    \le
    Dn2^D\log(en2^D)
    \le
    C Dn2^D(\log n+D).
\]
Thus the logarithm of the number of possible depth-\(D\) architectures is at most $C Dn2^D(\log n+D)$.

We now cover the remaining unitary degrees of freedom on the active set \(K\). The counting is done in two stages. We first fix the pruned, relabeled architecture, namely the sequence of partial matchings \(\mathcal M=(M_1,\dots,M_D)\) on \(K\). Then we discretize the two-qubit unitaries assigned to the edges of these matchings. The nets obtained for the different architectures will later be combined by taking a union over the possible choices of \(\mathcal M\).

Fix a depth-\(D\) architecture $\mathcal M=(M_1,\dots,M_D)$, where each \(M_t\) is a partial matching on the active set $K.$ Let $J(\mathcal M):=\sum_{t=1}^D |M_t|$ be the number of two-qubit gate locations. Since each \(M_t\) is a partial
matching, 
\[J(\mathcal M)\le \frac{DQ}{2}\le Dn2^D.\]

Set $J_0:=Dn2^D.$ The unitary group \(U(4)\), equipped with the operator norm, has covering numbers
\[
    N(U(4),\|\cdot\|_{\mathrm{op}},\delta)
    \le
    \left(\frac{C}{\delta}\right)^C,
    \qquad 0<\delta\le1.
\]
Taking the product over all gate locations in the fixed architecture gives a
gate-net of cardinality at most
\[
    \left(\frac{C}{\delta}\right)^{C J(\mathcal M)}
    \le
    \left(\frac{C}{\delta}\right)^{C J_0}.
\]

We now relate the gate accuracy \(\delta\) to the residual metric. Let \(\theta_U\) and \(\theta_V\) be two circuit parameters with the same
architecture. Enumerate their corresponding gate locations in circuit order and write
\[
    U=A_J A_{J-1}\cdots A_1,
    \qquad
    V=B_JB_{J-1}\cdots B_1,
\]
where \(J=J(\mathcal M)\), and suppose that $\|A_j-B_j\|_{\mathrm{op}}\le\delta$ for $j=1,\dots,J$. The telescoping expansion gives
\[
    U-V
    =
    \sum_{j=1}^J
    A_J\cdots A_{j+1}
    (A_j-B_j)
    B_{j-1}\cdots B_1,
\]
and therefore $\|U-V\|_{\mathrm{op}} \le J\delta \le J_0\delta$. For any physical Pauli observable \(X=\sigma_i^b\otimes I_{\mathrm{anc}}\), we have
\[
    \|U^*XU-V^*XV\|_{\mathrm{op}}
    \le
    2\|U-V\|_{\mathrm{op}}
    \le
    2J_0\delta.
\]
Since all pulled-back Pauli observables are unitary, for
\(I=\{i_1,\dots,i_p\}\) and \(a=(a_1,\dots,a_p)\),
\[
\big\|
\prod_{s=1}^p O_{i_s}^{a_s}(U)
-
\prod_{s=1}^p O_{i_s}^{a_s}(V)
\big\|_{\mathrm{op}}
\le
\sum_{s=1}^p
\|O_{i_s}^{a_s}(U)-O_{i_s}^{a_s}(V)\|_{\mathrm{op}}
\le
2pJ_0\delta.
\]
Similarly, for every physical site \(i\in[n]\) and every \(b\in\{1,2,3\}\),
\[
\left|
\langle 0^{n+r}|O_i^b(U)|0^{n+r}\rangle
-
\langle 0^{n+r}|O_i^b(V)|0^{n+r}\rangle
\right|
\le
\|O_i^b(U)-O_i^b(V)\|_{\mathrm{op}}
\le
2J_0\delta .
\]
Since these expectations have absolute value at most \(1\), it follows that
\[
\big|
\prod_{s=1}^p
\langle 0^{n+r}|O_{i_s}^{a_s}(U)|0^{n+r}\rangle
-
\prod_{s=1}^p
\langle 0^{n+r}|O_{i_s}^{a_s}(V)|0^{n+r}\rangle
\big|
\le
2pJ_0\delta .
\]
Hence
\[
    |\chi_{I,a}(\theta_U)-\chi_{I,a}(\theta_V)|
    \le C_pJ_0\delta,
\]
and so
\[
\mathsf d_R(\theta_U,\theta_V)^2
=
\binom np^{-1}
\sum_{I\in\mcI_p^n}
\sum_{a\in\{1,2,3\}^p}
|\chi_{I,a}(\theta_U)-\chi_{I,a}(\theta_V)|^2
\le C_pJ_0^2\delta^2.
\]
Thus $\mathsf{d}_{R}(\theta_U,\theta_V)\le C_pJ_0\delta$. If \(0<\eta\le C_p\), choose $\delta=\frac{\eta}{C_pJ_0}.$ Then the product gate-net is an \(\eta\)-net in \(\mathsf{d}_{R}\) for a
fixed architecture, and its logarithmic size is at most $C_pJ_0\log (C_pJ_0/\eta)$.

If \(\eta>C_p\), then one representative per fixed architecture is enough
after increasing \(C_p\), because \(|\chi_{I,a}|\le2\) implies that the
\(\mathsf{d}_{R}\)-diameter is bounded by a constant depending only on $p$.
Thus, for all \(\eta>0\), the gate contribution for a fixed architecture is
bounded by
\[
    C_pJ_0\log_+\left(\frac{C_pJ_0}{\eta}\right)
    =
    C_pDn2^D
    \log_+\left(\frac{C_pDn2^D}{\eta}\right).
\]

It remains to combine the architecture count with the fixed-architecture gate nets. By the pruning and relabeling construction above, every circuit parameter
\(\theta\in\mathsf{T}_{n,r}^{(D)}\) is \(\mathsf{d}_{R}\)-equivalent to a
pruned, relabeled circuit parameter whose nonidentity gates act only on the
canonical active set \(K\), with \(|K|\le n2^D\). Thus it suffices to cover the collection of such pruned, relabeled circuits.

We showed that the number of possible pruned architectures on \(K\) has logarithm at most $C Dn2^D(\log n+D)$. For each fixed pruned architecture, the preceding gate-discretization argument
gives an \(\eta\)-net in \(\mathsf{d}_{R}\) of logarithmic size at most $C_pDn2^D
    \log_+ (C_pDn2^D/\eta)$. Taking the union of these fixed-architecture nets over all pruned
architectures gives a net for the full parameter space \(\mathsf{T}_{n,r}^{(D)}\).
Equivalently, the covering number is bounded by the number of pruned
architectures times the largest fixed-architecture covering number. Therefore,
after absorbing constants into \(C_p\),
\[
\log N(\mathsf{T}_{n,r}^{(D)},\mathsf{d}_{R},\eta)
\le
C_pDn2^D(\log n+D)
+
C_pDn2^D
\log_+\left(\frac{C_pDn2^D}{\eta}\right).
\]
The right-hand side is independent of $r$, as claimed.

\end{proof}

\begin{lemma}[Expected supremum, mean-field regime]
\label{lem:expected-sup-residual}
Fix \(p\ge 2\) and \(D\ge 1\). Then
\[
\sup_{r\in\mathbb Z_+}
\E
\sup_{\theta \in \mathsf{T}_{n,r}^{(D)}}
|R(\theta)|
\le
C_p
2^{D}
\sqrt{D(\log n+D)}.
\]
\end{lemma}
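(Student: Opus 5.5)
We will apply Dudley's entropy integral to the centered Gaussian process \(\{R(\theta):\theta\in\mathsf{T}_{n,r}^{(D)}\}\). Since the bound must control \(|R|\) rather than \(R\), we first fix a reference parameter \(\theta_0\). Then
\[
\E\sup_\theta|R(\theta)|\le \E|R(\theta_0)|+\E\sup_{\theta,\theta'}\bigl(R(\theta)-R(\theta')\bigr).
\]
By Proposition~\ref{prop:full-residual-variance}, the first term is at most \(\sqrt{C_p2^D/n}\), which is negligible. The second term is at most \(C\int_0^{\mathrm{diam}}\sqrt{\log N(\mathsf{T}_{n,r}^{(D)},\mathsf d_R,\eta)}\,d\eta\). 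Separability is not an issue: the parameter space is a finite union of compact sets (unitary groups) on which \(R\) is continuous. Alternatively, the supremum can be taken over a countable dense subset.

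The key input is the diameter. The triangle inequality and Proposition~\ref{prop:full-residual-variance} give
\[
\mathrm{diam}_{\mathsf d_R}\le 2\sup_\theta\sqrt{\tvar R(\theta)}\le \Delta:=C_p\sqrt{2^D/n}.
\]
This is crucial, since it makes the entropy integral run only over \([0,\Delta]\). Plugging in Proposition~\ref{prop:full-residual-entropy} with \(N_0:=C_pDn2^D\), and using \(\sqrt{a+b}\le\sqrt a+\sqrt b\), we get
\[
\int_0^{\Delta}\sqrt{\log N}\,d\eta\le \Delta\sqrt{N_0(\log n+D)}+\sqrt{N_0}\int_0^{\Delta}\sqrt{\log_+(N_0/\eta)}\,d\eta .
\]
The first term equals \(C_p\sqrt{2^D/n}\cdot\sqrt{Dn2^D(\log n+D)}=C_p2^D\sqrt{D(\log n+D)}\), which is exactly the claimed order.

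For the second term we use the standard estimate \(\int_0^{\Delta}\sqrt{\log(N_0/\eta)}\,d\eta\le C\Delta\sqrt{\log(eN_0/\Delta)}\), valid when \(\Delta\le N_0\) (substitute \(\eta=N_0e^{-u^2}\) and bound the Gaussian-type tail). Here \(N_0/\Delta=C_pD\,n^{3/2}2^{D/2}\), so \(\log(eN_0/\Delta)\le C_p(\log n+D)\). The second term is therefore again at most \(C_p\sqrt{Dn2^D}\sqrt{2^D/n}\sqrt{\log n+D}=C_p2^D\sqrt{D(\log n+D)}\). We absorb the \(\log D\) contributions into \(D\) and enlarge \(C_p\) as needed.

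Every ingredient used is uniform in \(r\): both the variance bound and the entropy bound are \(r\)-free. Taking \(\sup_r\) therefore costs nothing. The only delicate step is the bookkeeping showing that the logarithmic term stays of order \(\log n+D\) after integration. We handle it with the explicit substitution above, checking that \(\Delta\le N_0\) so that \(\log_+\) coincides with \(\log\) on the whole range of integration.
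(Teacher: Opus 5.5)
Your proof is correct and follows the paper's route: Dudley's entropy integral, truncated at the diameter bound $2\sigma_\infty\lesssim_p\sqrt{2^D/n}$ from Proposition~\ref{prop:full-residual-variance}, combined with the $r$-uniform entropy estimate of Proposition~\ref{prop:full-residual-entropy}. Your reference-point treatment of $|R|$ and your explicit check that $\log(eN_0/\Delta)\lesssim_p \log n + D$ spell out two steps the paper handles more tersely, namely index-set symmetrization and its elementary entropy-integral bound.
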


\begin{proof}
By Proposition~\ref{prop:full-residual-variance}, we have
\begin{align}\label{eq:varBoundExpectedSupResidual}
\sigma_\infty^2
:=
\sup_{\theta \in \mathsf{T}_{n,r}^{(D)}}
\tvar
(
R(\theta)
) 
\le
C_p\frac{2^D}{n}.
\end{align}
By Dudley's entropy integral,
\[
\E\sup_{\theta \in \mathsf{T}_{n,r}^{(D)}} |R (\theta)|
\le
C \int_0^{\operatorname{diam}(\mathsf{d}_R)}
\sqrt{
\log N(\mathsf{T}_{n,r}^{(D)},\mathsf{d}_R,\eta)
}
d\eta
\le
C
\int_0^{2\sigma_\infty}
\sqrt{
\log N(\mathsf{T}_{n,r}^{(D)},\mathsf{d}_R,\eta)
}
d\eta ,
\]
since \(\operatorname{diam}(\mathsf{d}_R)\le 2\sigma_\infty\). The entropy estimate in Proposition~\ref{prop:full-residual-entropy} gives
\[
\E\sup_{\theta \in \mathsf{T}_{n,r}^{(D)}} |R (\theta)|
\le
C
\int_0^{2\sigma_\infty}
\sqrt{
C_pDn2^D(\log n+D)
+
C_pDn2^D
\log_+\left(\frac{C_pDn2^D}{\eta}\right)
}
d\eta.
\]
Using the elementary bound
\[
\int_0^{2\sigma}
\sqrt{
A+\log(2\sigma/\eta)
}
d\eta
\le
C\sigma\sqrt{A+1},
\qquad A\ge 0,
\]
together with \eqref{eq:varBoundExpectedSupResidual}, we obtain
\[
\E\sup_{\theta \in \mathsf{T}_{n,r}^{(D)}} |R(\theta)|
\le
C_p
\sigma_\infty
\sqrt{Dn2^D(\log n+D)}.
\]
The claim follows by applying once more \eqref{eq:varBoundExpectedSupResidual}:
\[
\sigma_\infty
\sqrt{Dn2^D(\log n+D)}
\le
C_p
\left(\frac{2^D}{n}\right)^{1/2}
\sqrt{Dn2^D(\log n+D)}
=
C_p
2^D
\sqrt{D(\log n+D)}.
\]
\end{proof}

\begin{proposition}[High-probability bound, mean-field regime]
\label{prop:circuit-residual-high-prob}
Fix \(p\ge 2\). There exist constants \(C_p,c_p>0\), depending only on
\(p\), such that for every \(D\ge1\) and every \(r\in\mathbb Z_+\),
\[
\P\bigg(
\sup_{\theta \in \mathsf{T}_{n,r}^{(D)}}
|R(\theta)|
\le
C_p 2^{D}\sqrt{D(\log n+D)}
\bigg)
\ge
1-2\exp\left(-c_p 2^D Dn(\log n+D)\right).
\]
\end{proposition}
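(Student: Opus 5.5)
The plan is to combine the expected-supremum bound of Lemma~\ref{lem:expected-sup-residual} with the Borell--TIS concentration inequality, using the uniform variance bound of Proposition~\ref{prop:full-residual-variance} as the concentration parameter.

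\emph{Reduction to a separable process.} Fix \(D\ge1\) and \(r\in\mathbb Z_+\). The process \(\{R(\theta):\theta\in\mathsf T^{(D)}_{n,r}\}\) is a centered Gaussian process, linear in the finitely many Gaussians \(g_{I,a}\). Its coefficients \(\chi_{I,a}(\theta)\) depend continuously on the gate parameters, and there are only finitely many architectures. Hence the supremum over \(\mathsf T^{(D)}_{n,r}\) equals a supremum over a countable dense subset, and \(\sup_\theta|R(\theta)|\) is a measurable, almost surely finite random variable. To handle the absolute value, I will apply Borell--TIS to the symmetrized process \(\{\pm R(\theta)\}\), which has the same variance bound and the same expected supremum. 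Alternatively, I will apply it separately to \(\sup R\) and \(\sup(-R)\), which accounts for the factor \(2\) in the probability bound.

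\emph{Applying Borell--TIS.} With \(\sigma_\infty^2\le C_p2^D/n\) from Proposition~\ref{prop:full-residual-variance}, Borell--TIS gives, for every \(u>0\),
\[
\P\Big(\sup_\theta|R(\theta)|\ge \E\sup_\theta|R(\theta)|+u\Big)\le 2\exp\Big(-\frac{u^2}{2\sigma_\infty^2}\Big)\le 2\exp\Big(-\frac{u^2 n}{2C_p2^D}\Big).
\]
Choose \(u:=2^D\sqrt{D(\log n+D)}\). Then
\[
\frac{u^2n}{2C_p2^D}=\frac{2^DDn(\log n+D)}{2C_p},
\]
which is the claimed exponent with \(c_p=1/(2C_p)\). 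Lemma~\ref{lem:expected-sup-residual} bounds the expectation by \(C_p2^D\sqrt{D(\log n+D)}\), uniformly in \(r\). Adding \(u\) and enlarging \(C_p\) to \(C_p+1\) yields the threshold \(C_p2^D\sqrt{D(\log n+D)}\) in the statement, and this threshold is independent of \(r\).

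There is no serious obstacle here. The only points needing care are the following. First, the constants from Lemma~\ref{lem:expected-sup-residual} and Proposition~\ref{prop:full-residual-variance} must be independent of \(r\), which both results guarantee. Second, the separability/measurability remark above is needed to justify applying Borell--TIS on an uncountable index set.
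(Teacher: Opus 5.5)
Your proposal is correct and follows the paper's proof essentially step for step. Both arguments symmetrize the index set to handle the absolute value, apply Borell--TIS with the uniform variance bound $\sigma^2\le C_p2^D/n$ from Proposition~\ref{prop:full-residual-variance}, take a deviation of order $u\asymp 2^D\sqrt{D(\log n+D)}$ (the paper uses $u=B_p2^D\sqrt{D(\log n+D)}$ with a free constant $B_p$; you take $B_p=1$), and absorb $\E\sup_\theta|R(\theta)|$ via the $r$-uniform bound of Lemma~\ref{lem:expected-sup-residual}, with the same separability justification.
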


\begin{proof}
 $R$ is a centered Gaussian process indexed by $\mathsf{T}_{n,r}^{(D)}$. Since the number of \(D\)-tuples of perfect matchings is finite, and since for
fixed matchings the circuit class is a compact subset of a finite-dimensional
parameter space, the process is separable. In what follows, the absolute value
in the supremum is handled by the standard symmetrization of the index set,
\(\mathsf{T}_{n,r}^{(D)}\mapsto \mathsf{T}_{n,r}^{(D)}\times\{-1,1\}\), which replaces
\(R(\theta)\) by \(\pm R(\theta)\). This only changes covering
numbers by a factor of \(2\), and is therefore absorbed into the constants
below. We will henceforth abuse notation and continue to write \(\mathsf{T}_{n,r}^{(D)}\)
for the resulting symmetrized index set. We now apply the Borell-TIS inequality, which states that if
\(\{Z_t:t\in T\}\) is a centered separable Gaussian process with
\(\sigma^2_T:=\sup_{t\in T}\tvar(Z_t)<\infty\), then, for every \(u>0\),
\[
\P(
\sup_{t\in T} Z_t
\ge
\E\sup_{t\in T}Z_t+u
)
\le
2\exp\left(
-\frac{u^2}{2\sigma_T^2}
\right).
\]
Using Proposition~\ref{prop:full-residual-variance}, there exists $A_p < \infty$ such that $\sup_{\theta\in\mathsf{T}_{n,r}^{(D)}}\tvar(R(\theta)) \le A_p2^D/n$. Therefore, for every \(u>0\),
\[
\P(
\sup_{\theta\in\mathsf{T}_{n,r}^{(D)}}
R(\theta)
\ge
\E
\sup_{\theta\in\mathsf{T}_{n,r}^{(D)}}
R(\theta)
+u
)
\le
2\exp\left(
-\frac{u^2}{
2A_p2^D/n
}
\right).
\]

Choosing $u
=
B_p
2^{D}
\sqrt{D(\log n+D)}$ with $B_p > 0$, we have
\[
\frac{u^2}{2A_p2^D/n}
=
\frac{B_p^2}{2A_p}
2^D Dn(\log n+D).
\]
Setting \(c_p := B_p^2/(2A_p)\), we get that, with probability at least $1-2\exp(-c_p 2^D Dn(\log n+D))$,
\[
\sup_{\theta\in\mathsf{T}_{n,r}^{(D)}}
R(\theta)
\le
\E
\sup_{\theta\in\mathsf{T}_{n,r}^{(D)}}
R(\theta)
+
B_p
2^{D}
\sqrt{D(\log n+D)}.
\]
Combining this with Lemma~\ref{lem:expected-sup-residual} and increasing the constant \(C_p\)  gives, with the same probability,
\[
\sup_{\theta\in\mathsf{T}_{n,r}^{(D)}}
|R(\theta)|
\le
C_p
2^{D}
\sqrt{D(\log n+D)}.
\]
\end{proof}

\begin{theorem}[Product-depth $D$ gap, mean-field regime]
\label{thm:log-depth-product-gap-residual}
Fix \(p\ge 2\). Let \(D=D_n\ge 1\) satisfy
\[
4^{D_n}D_n(\log n+D_n)=o(n).
\]
Then, for every \(\varepsilon>0\),
\[
\lim_{n\to\infty}
\sup_{r\in\mathbb Z_+}
\P\left(
    E_{n,r}^{(D_n)}(p)-E_{n,\mathrm{prod}}(p)>\varepsilon
\right)
=0.
\]
Moreover, $E_{n,r}^{(D_n)}(p)\ge E_{n,\mathrm{prod}}(p)$ for every \(n\) and \(r\).
\end{theorem}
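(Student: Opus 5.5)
The plan is to use the decomposition $X(\theta)=X_0(\theta)+R(\theta)$ together with the high-probability residual bound of Proposition~\ref{prop:circuit-residual-high-prob}. First I prove the lower bound $E_{n,r}^{(D_n)}(p)\ge E_{n,\mathrm{prod}}(p)$, which is deterministic. Every product state $\bigotimes_i|\psi_i\rangle$ on the physical qubits can be prepared by a depth-one circuit. For this, take a perfect matching of $[n+r]$ (with the parity convention) and on each edge apply $U_e=W_u\otimes W_v$, where $W_i|0\rangle=|\psi_i\rangle$ for physical $i$ and $W_v=I$ for ancillas. The output is $\bigotimes_i|\psi_i\rangle\otimes|0^r\rangle$, and its physical energy is $\langle\psi|H_{n,p}|\psi\rangle$. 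Since adding layers of identity gates keeps the state in $\mathsf S^{(D)}_{n,r}$ for any $D\ge1$, taking the supremum gives the inequality for every $n,r$ and every realization of the Gaussians.

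For the upper bound I use the pointwise inequality $X_0(\theta)\le\sqrt n\,E_{n,\mathrm{prod}}(p)$. The Bloch vectors $m_i^{\phi_\theta}$ lie in $\B_2^3$, and the product density operator $\bigotimes_i\frac12(I+\sum_b m_i(b)\sigma^b)$ has energy exactly $X_0(\theta)$, because $\operatorname{Tr}(P_I^a\,\cdot)$ factorizes over the sites. This energy is multilinear in the one-site Bloch vectors, and each single-qubit mixed state is a convex combination of pure states. Maximizing one site at a time therefore shows it is at most the supremum over pure product states. Consequently,
\[
E_{n,r}^{(D_n)}(p)-E_{n,\mathrm{prod}}(p)\le \frac{1}{\sqrt n}\sup_{\theta\in\mathsf T_{n,r}^{(D_n)}}|R(\theta)|.
\]

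It remains to show that the right-hand side is small with high probability, uniformly in $r$. By Proposition~\ref{prop:circuit-residual-high-prob}, with probability at least $1-2\exp(-c_p2^{D_n}D_nn(\log n+D_n))\ge 1-2e^{-c_pn\log n}$, which is independent of $r$, we have
\[
\frac{1}{\sqrt n}\sup_\theta|R(\theta)|\le C_p\sqrt{\frac{4^{D_n}D_n(\log n+D_n)}{n}}.
\]
The hypothesis $4^{D_n}D_n(\log n+D_n)=o(n)$ makes this $o(1)$. Hence for any fixed $\varepsilon>0$ and all large $n$, the event $\{E_{n,r}^{(D_n)}-E_{n,\mathrm{prod}}>\varepsilon\}$ is contained in the complement of the good event. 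Its probability is therefore bounded by a quantity that tends to zero and does not depend on $r$, and taking $\sup_r$ finishes the proof.

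The main point needing care is the comparison $X_0(\theta)\le\sqrt n E_{n,\mathrm{prod}}(p)$. Here the one-site marginals may be mixed, since ancillas are traced out. The multilinearity and convexity argument above handles this, using that the Hamiltonian has no repeated sites within a term.
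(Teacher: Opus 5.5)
Your proposal is correct and follows essentially the same route as the paper. The lower bound comes from a depth-one circuit of single-qubit rotations padded with identity layers, and the comparison $X_0(\theta)\le\sqrt n\,E_{n,\mathrm{prod}}(p)$ comes from multilinearity in the Bloch vectors; your convex-combination phrasing is equivalent to the paper's argument that a maximizer over $(\B_2^3)^n$ can be taken on $(\mcS^2)^n$. The residual is then controlled by the $r$-uniform bound of Proposition~\ref{prop:circuit-residual-high-prob}, and your explicit bound $1-2e^{-c_p n\log n}$ on the good-event probability just makes precise the paper's ``with probability tending to one, uniformly over $r$.''
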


\begin{proof}
The lower bound \(E_{n,r}^{(D_n)}(p)\ge E_{n,\mathrm{prod}}(p)\) follows because
the depth-\(D_n\) class contains all physical product states. Indeed, let $|\psi\rangle=\bigotimes_{i=1}^n|\psi^{(i)}\rangle$ be any product state on the physical qubits. Choose a first-layer matching on the \(n+r\) qubits. For each edge containing one or two physical qubits, choose the corresponding two-qubit unitary so that, starting from \(|00\rangle\), the
physical output factor on each physical endpoint \(i\) is
\(|\psi^{(i)}\rangle\). On edges containing no physical qubits, choose the
identity. Choose all gates in layers \(\ell=2,\dots,D_n\) to be the identity.
Then the reduced state on the first \(n\) physical qubits is $\bigotimes_{i=1}^n|\psi^{(i)}\rangle\langle\psi^{(i)}|$. Hence every physical product-state energy is attainable inside
\(\mathsf{S}_{n,r}^{(D_n)}\).

For the reverse bound, write $\langle \phi_\theta|H_{n,p} \otimes I_{\mathrm{anc}}|\phi_\theta \rangle = X_0(\theta) + R(\theta)$. We first bound the supremum of the zeroth-order term. For \(x_1,\dots,x_n\in \B_2^3\), define
the multilinear functional
\[
X_0(x_1,\dots,x_n)
:=
\binom np^{-1/2}
\sum_{I=\{i_1,\dots,i_p\}\in\mcI_p^n}
\sum_{a\in\{1,2,3\}^p}
g_{I,a}
\prod_{s=1}^p x_{i_s}(a_s).
\]
For every depth-\(D_n\) state \(\phi\), its one-site Bloch vector $m_i(\phi)
:=(m_i^\phi(1),m_i^\phi(2),m_i^\phi(3))$ belongs to \(\B_2^3\), and by definition $X_0(\phi)
=
X_0(m_1(\phi),\dots,m_n(\phi))$. Therefore
\[
\sup_{\theta \in \mathsf{T}_{n,r}^{(D_n)}} X_0(\theta)
\le
\sup_{x_1,\dots,x_n\in \B_2^3}
X_0(x_1,\dots,x_n).
\]

We claim that the last supremum is equal to the product-state optimum. Since
\(X_0\) is multilinear and \((\B_2^3)^n\) is compact, the supremum is attained.
Fixing all variables except \(x_i\), the map $x_i\mapsto X_0(x_1,\dots,x_n)$
is linear on \(\B_2^3\). Hence, unless this linear functional is identically
zero, its maximum over \(\B_2^3\) is attained on the sphere \(\mcS^2\). If it is
identically zero, any choice of \(x_i\in \mcS^2\) is also optimal. Iterating this
coordinate by coordinate, there is a maximizer with $x_1,\dots,x_n\in \mcS^2.$ The points of \(\mcS^2\) are exactly the Bloch vectors of pure one-qubit states.
Thus
\[
\frac1{\sqrt n}
\sup_{x_1,\dots,x_n\in \B_2^3}
X_0(x_1,\dots,x_n)
=
\frac1{\sqrt n}
\sup_{\psi\in \mathsf{S}_{n,\mathrm{prod}}}
\langle \psi|H_{n,p}|\psi\rangle
=
E_{n,\mathrm{prod}}(p).
\]
Consequently, $ \frac{1}{\sqrt{n}} \sup_{\theta \in \mathsf{T}_{n,r}^{(D_n)}} X_0(\theta) \le E_{n,\mathrm{prod}}(p)$.

It remains to control the residual. By
Proposition~\ref{prop:circuit-residual-high-prob}, with probability tending to
one, uniformly over $r$
\[
\frac1{\sqrt n}
\sup_{\theta \in \mathsf{T}_{n,r}^{(D_n)}}
|
R(\theta)
|
\le
C_p
\sqrt{
\frac{4^{D_n}D_n(\log n+D_n)}{n}
}.
\]
By assumption, $4^{D_n}D_n(\log n+D_n)=o(n)$, and so the right-hand side is \(o_n(1)\) in probability. Hence $E_{n,r}^{(D_n)}(p)
\le
E_{n,\mathrm{prod}}(p)+o_n(1)$ in probability, uniformly over $r$. Combining this with the lower bound proves the first claim.
\end{proof}

\begin{proof}[Proof of Theorem~\ref{thm:main-meanfield-gap}, mean-field case]
By Theorem~\ref{thm:benchmarks}, there exists a universal constant
\(c>0\) such that, for every fixed \(p\ge3\), $E_{n,\mathrm{gs}}(p) \ge c3^{p/2}/p$ with probability tending to one. On the other hand, by Theorem~\ref{thm:benchmarks}, for every \(\varepsilon>0\) and all
sufficiently large $p$, $E_{n,\mathrm{prod}}(p)
    \le
    (1+\varepsilon)\sqrt{2\log p}$ with probability tending to one. Therefore, with probability tending to one,
\[
    E_{n,\mathrm{gs}}(p)-E_{n,\mathrm{prod}}(p)
    \ge
    c\frac{3^{p/2}}{p}
    -
    (1+\varepsilon)\sqrt{2\log p}.
\]
Since \(3^{p/2}/p\gg\sqrt{\log p}\) as \(p\to\infty\), there exists
\(p_0\ge3\) such that, for every fixed \(p\ge p_0\), the right-hand side is
bounded below by \(2c_p\) for some \(c_p>0\). Thus, $E_{n,\mathrm{gs}}(p)-E_{n,\mathrm{prod}}(p)
    \ge
    2c_p$ with probability tending to one.

Choose \(\delta>0\) small enough that \(\delta<1/\log 4\). If \(D_n\le \delta\log n\), then
\[
    4^{D_n}D_n(\log n+D_n)
    \le
    n^{\delta\log 4}\,\delta\log n\, (1+\delta)\log n
    =
    O_\delta\bigl(n^{\delta\log 4}(\log n)^2\bigr)
    =
    o(n).
\]
Therefore the hypothesis of
Theorem~\ref{thm:log-depth-product-gap-residual} is satisfied. Hence $E_{n,r}^{(D_n)}(p)-E_{n,\mathrm{prod}}(p)=o_n(1)$ in probability, uniformly over $r$. Equivalently, for every \(\eta>0\),
\[
\lim_{n\to\infty}
\sup_{r\in\mathbb Z_+}
\P\left(
    E_{n,r}^{(D_n)}(p)-E_{n,\mathrm{prod}}(p)>\eta
\right)
=0.
\]
Taking $\eta= c_p,$ we have, with probability tending to one uniformly over $r,$ $E_{n,r}^{(D_n)}(p)-E_{n,\mathrm{prod}}(p) \le c_p .$

Combining the two estimates gives
\begin{align*}
E_{n,\mathrm{gs}}(p)-E_{n,r}^{(D_n)}(p)
&=
\left(
E_{n,\mathrm{gs}}(p)-E_{n,\mathrm{prod}}(p)
\right)
-
\left(
E_{n,r}^{(D_n)}(p)-E_{n,\mathrm{prod}}(p)
\right)\ge 2c_p - c_p = c_p,
\end{align*}
with probability tending to one uniformly over $r$. Equivalently,
\[
\lim_{n\to\infty}
\sup_{r\in\mathbb Z_+}
\P\left(
    E_{n,\mathrm{gs}}(p)-E_{n,r}^{(D_n)}(p)<c_p
\right)
=0.
\]
\end{proof}

\subsection{Growing-average-degree diluted regime} \label{ssec:growing-average-degree-diluted-regime}

In this section we prove the growing-average-degree diluted ground-state depth separation, which is the second statement of Theorem~\ref{thm:main-meanfield-gap}. The proof follows the same general
strategy as in the mean-field regime, but the residual process now contains the
Bernoulli mask \(\xi_I\) and is normalized by \(m_n^{-1/2}\). Thus, in addition
to controlling the Gaussian fluctuations of the residual, we must also control
the random sparsification uniformly over the depth-\(D\) circuit class.

The first input is the sparse residual variance bound,
Lemma~\ref{lem:sparse-full-residual-variance}. Conditionally on the
sparsification variables, the variance is a sparse average of the residual
coefficient sizes:
\[
    \operatorname{Var}_g(R^{(\beta)}(\theta)\mid \xi)
    =
    \frac1{m_n}
    \sum_{I\in\mcI_p^n}
    \xi_I
    \sum_{a\in\{1,2,3\}^p}
    |\chi_{I,a}(\theta)|^2 .
\]
The mean-field variance bound controls the mean of this average. To make the bound
uniform in \(\theta\), we apply Bernstein's inequality on a finite
pruned and relabeled net of the circuit class. This produces the additional
sparse sampling cost $Dn2^D(\log n+D)/m_n$, which is absent in the mean-field regime.

The second input is the conditional entropy bound,
Lemma~\ref{lem:sparse-residual-entropy}. On the event
\(|\mcE_p|\le 2m_n\), the sparse canonical metric is controlled by the same
operator-norm metric used in the mean-field entropy argument. Therefore the same
pruning and relabeling mechanism applies. Namely, the circuit can be restricted to the joint backward support of the physical outputs, involving at most \(n2^D\) qubits, and the active ancillas can be relabeled into a canonical block. This gives an entropy bound independent of $r$, with the same \(2^D\) cost as in the mean-field ancilla setting.

Combining the conditional variance and entropy estimates gives the expected
supremum bound in Lemma~\ref{lem:sparse-residual-expected-sup}. A Markov
argument then yields the sparse residual negligibility result,
Lemma~\ref{lem:sparse-residual-negligibility}, under the growing-average-degree diluted
depth condition. This gives the growing-average-degree diluted depth-to-product comparison,
Theorem~\ref{thm:sparse-log-depth-product-gap}.

Finally, to convert the depth-to-product comparison into a ground-state
depth separation, we use
Corollary~\ref{cor:sparse-mean-field-transfer-hp} from
Appendix~\ref{app:universality-transfer}. This transfers the mean-field
ground-state lower bound and product-state upper bound to the
growing-average-degree diluted model. Combining these transferred bounds
with the depth-to-product comparison proves the growing-average-degree
statement of Theorem~\ref{thm:main-meanfield-gap}.

We first briefly recall the growing-average-degree diluted setup from
\eqref{eq:sparseHamiltonian}. In this regime, $\kappa_n=n^{-\beta}$ and the sparsification variables
\(\{\xi_I:I\in\mcI_p^n\}\) are independent
\(\operatorname{Bernoulli}(\kappa_n)\) random variables, independent of the
Gaussian disorder. The retained $p$-tuples form the random set $\mcE_p:=\{I\in\mcI_p^n:\xi_I=1\}$,
so that
\[
    |\mcE_p|=\sum_{I\in\mcI_p^n}\xi_I,
    \qquad
    \E |\mcE_p|=
    m_n = \Theta_p(n^{p-\beta}).
\]
The deterministic normalization \(m_n^{-1/2}\) in
\eqref{eq:sparseHamiltonian} is natural because \(|\mcE_p|/m_n\to1\) in
probability whenever \(m_n\to\infty\).

The sparsification variables define a random $p$-uniform hypergraph $\mcF_p=([n],\mcE_p)$. Its vertices are the physical qubits, and each retained
\(I\in\mcE_p\) is a hyperedge corresponding to one retained $p$-body
interaction. The degree of a vertex \(i\) is $\deg_{\mcF_p}(i)
    :=
    |\{I\in\mcE_p:i\in I\}|$. Since there are \(\binom{n-1}{p-1}\) possible $p$-tuples containing \(i\),
\[
    \E\deg_{\mcF_p}(i)
    =
    \kappa_n\binom{n-1}{p-1}
    = \Theta_p(n^{p-1-\beta}).
\]
We refer to \(\beta<p-1\) as the growing-average-degree diluted regime, since the expected
degree of each physical vertex diverges.

We now introduce the sparse analogue of the residual process. The residual
coefficients \(\chi_{I,a}(\theta)\) are the same as in the mean-field setting. The
sparse residual is obtained by inserting the Bernoulli mask \(\xi_I\) and using
the normalization \(m_n^{-1/2}\):
\[
R^{(\beta)}(\theta)
:=
m_n^{-1/2}
\sum_{I\in\mcI_p^n}
\xi_I
\sum_{a\in\{1,2,3\}^p}
g_{I,a}\chi_{I,a}(\theta).
\]
Equivalently, for \(\theta\in\mathsf{T}_{n,r}^{(D)}\), $\langle \phi_\theta|
H_{n,p}^{(\beta)}\otimes I_{\mathrm{anc}}
|\phi_\theta\rangle
=
X_0^{(\beta)}(\theta)
+
R^{(\beta)}(\theta)$, where
\[
X_0^{(\beta)}(\theta)
:=
m_n^{-1/2}
\sum_{I\in\mcI_p^n}
\xi_I
\sum_{a\in\{1,2,3\}^p}
g_{I,a}
\prod_{s=1}^p m_{i_s}^{\phi_\theta}(a_s).
\]
Thus the only difference from the mean-field decomposition is the sparsification
factor \(\xi_I\) and the normalization by \(m_n^{1/2}\) rather than
\(\binom np^{1/2}\).

\begin{lemma}[Variance bound, growing-average-degree diluted regime]
\label{lem:sparse-full-residual-variance}
Let \(D=D_n\ge1\), and assume
\[
\frac{2^{D_n}}{n}
+
\frac{D_n2^{D_n}n(\log n+D_n)}{m_n}
=o(1).
\]
Then there exists \(C_p<\infty\) such that
\[
\lim_{n\to\infty}
\sup_{r\in\mathbb Z_+}
\mathbb P_\xi\left(
\sup_{\theta\in\mathsf{T}_{n,r}^{(D_n)}}
\operatorname{Var}_g\!\left(R^{(\beta)}(\theta)\mid \xi\right)
>
C_p
\left[
\frac{2^{D_n}}{n}
+
\frac{D_n2^{D_n}n(\log n+D_n)}{m_n}
\right]
\right)
=0.
\]
\end{lemma}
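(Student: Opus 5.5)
Conditionally on \(\xi\), the variance is
\[
V(\theta):=\operatorname{Var}_g(R^{(\beta)}(\theta)\mid\xi)=\frac1{m_n}\sum_{I}\xi_I w_I(\theta),\qquad w_I(\theta):=\sum_{a}|\chi_{I,a}(\theta)|^2\in[0,4\cdot 3^p].
\]
The plan has three steps: bound the mean of \(V(\theta)\), prove a uniform concentration bound on a finite net using Bernstein's inequality, and extend from the net to all of \(\mathsf T_{n,r}^{(D)}\) by a deterministic approximation.

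\textbf{Mean.} Since \(\E\xi_I=\kappa_n\) and \(m_n=\kappa_n\binom np\), we have
\[
\E_\xi V(\theta)=\binom np^{-1}\sum_I w_I(\theta),
\]
and the deterministic bound in Proposition~\ref{prop:full-residual-variance} gives \(\E_\xi V(\theta)\le C_p2^D/n\) uniformly in \(\theta\) and \(r\).

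\textbf{Concentration.} For a fixed \(\theta\), \(V(\theta)\) is a sum of independent nonnegative terms, each bounded by \(b:=C_p/m_n\). Its variance satisfies
\[
\operatorname{Var}_\xi V(\theta)\le b\,\E_\xi V(\theta).
\]
Bernstein's inequality, applied with deviation \(t=\E_\xi V(\theta)+s\), then gives a tail bound of order \(\exp(-c\,m_n s/C_p)\). We apply this on a finite net and take a union bound. Following Proposition~\ref{prop:full-residual-entropy}, every \(\theta\) is equivalent to a pruned and relabeled circuit on the canonical set \(K\), with \(|K|\le n2^D\). Crucially, this reduction preserves every \(\chi_{I,a}\) exactly, hence preserves \(w_I\) for all \(I\) and not merely the mean-field metric. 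This makes the net independent of \(r\) and independent of \(\xi\).

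On \(K\) we take the architecture count times a gate-net at operator-norm accuracy \(\delta=1/(C_pDn2^D)\). The earlier telescoping argument gives \(|\chi_{I,a}(\theta)-\chi_{I,a}(\theta')|\le C_pJ_0\delta\le 1/n\), and hence \(|w_I(\theta)-w_I(\theta')|\le C_p/n\) for every \(I\). The log-cardinality of this net is \(\mathcal N\le C_pDn2^D(\log n+D)\). Choosing \(s=C_p\mathcal N/m_n\) with a large constant makes the union bound sum to \(e^{-\mathcal N}\to0\). So with high probability, uniformly over the net and over \(r\),
\[
V(\theta')\le C_p\bigl[2^D/n+Dn2^D(\log n+D)/m_n\bigr].
\]

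\textbf{Extension to all \(\theta\), the main obstacle.} For an arbitrary \(\theta\), pick a net point \(\theta'\). Then
\[
|V(\theta)-V(\theta')|\le \frac{|\mcE_p|}{m_n}\sup_I|w_I(\theta)-w_I(\theta')|\le \frac{C_p}{n}\cdot\frac{|\mcE_p|}{m_n}.
\]
On the event \(|\mcE_p|\le 2m_n\), which has probability tending to one by Chebyshev since \(m_n\to\infty\), this is at most \(C_p/n\le C_p2^D/n\). It can therefore be absorbed into the first term of the bound.

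The delicate points are exactly two. First, the perturbation of \(w_I\) must be controlled uniformly in \(I\) by an operator-norm (sup) estimate, rather than in the averaged metric \(\mathsf d_R\), because the sparse average can weight a few hyperedges heavily. Second, the net must be built from the \(\xi\)-independent pruned and relabeled class, so that the union bound is legitimate and uniform in \(r\). The hypothesis that the bracketed quantity is \(o(1)\) is used only to ensure \(\mathcal N/m_n\to0\), together with \(s\) being of that order, so the Bernstein regime is consistent.
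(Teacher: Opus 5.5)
Your proposal is correct and is essentially the paper's proof. It has the same ingredients: the conditional variance written as a sparse Bernoulli average, the mean bound from Proposition~\ref{prop:full-residual-variance}, Bernstein plus a union bound over the $\xi$-independent, $r$-uniform pruned and relabeled net, and a Lipschitz extension uniform in $I$ on the event $\{|\mcE_p|\le 2m_n\}$. The differences are cosmetic: you fix the net resolution at $1/n$ instead of $\Lambda_{n,D}$, and you apply Bernstein at deviation $\E_\xi V+s$ so the variance term is absorbed automatically. In your version the $o(1)$ hypothesis is only needed to guarantee $m_n\to\infty$, since Bernstein requires no smallness of $s$.
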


\begin{proof}
Conditioned on the sparsification variables, the variance of the sparse
residual process is
\[
\tvar_g
(R^{(\beta)}(\theta)\mid \xi)
=
\frac1{m_n}
\sum_{I\in\mcI_p^n}\xi_I F_\theta(I),
\qquad 
F_\theta(I):=\sum_{a\in\{1,2,3\}^p}|\chi_{I,a}(\theta)|^2.
\]
Thus we need to control the sparse average $m_n^{-1}
\sum_{I\in\mcI_p^n}\xi_I F_\theta(I)$ uniformly over \(\theta\in\mathsf{T}_{n,r}^{(D)}\).

We first record the mean and boundedness properties of \(F_\theta\). For the
mean-field residual process, since the coefficients are independent Gaussians, we
get
\[
\tvar_g\bigl(R(\theta)\bigr)
=
\frac1{\binom np}
\sum_{I\in\mcI_p^n}
\sum_{a\in\{1,2,3\}^p}
|\chi_{I,a}(\theta)|^2
=
\frac1{\binom np}
\sum_{I\in\mcI_p^n}
F_\theta(I).
\]
Therefore, by Proposition~\ref{prop:full-residual-variance},
\[
\frac1{\binom np}
\sum_{I\in\mcI_p^n}
F_\theta(I)
\le
C_p\frac{2^D}{n}
\]
uniformly in \(\theta\in\mathsf{T}_{n,r}^{(D)}\). Also,
\(0\le F_\theta(I)\le C_p\). To see this, note that for each \(a\),
\[
|\chi_{I,a}(\theta)|
\le
|\langle\phi_{\theta}|P_I^a\otimes I_{\mathrm{anc}}|\phi_{\theta}\rangle|
+
\big|\prod_{s=1}^p m_{i_s}^{\phi_\theta}(a_s)\big|
\le 2,
\]
and there are \(3^p\) choices of \(a\).

The remaining issue is uniformity in \(\theta\). For each fixed \(\theta\), the
quantity $m_n^{-1}\sum_{I\in\mcI_p^n}\xi_I F_\theta(I)$ is a bounded Bernoulli average. We first apply Bernstein's inequality, on a finite net of the circuit class. We then pass
from the net to all \(\theta\in\mathsf{T}_{n,r}^{(D)}\) by proving a Lipschitz
estimate for the coefficients \(F_\theta(I)\) with respect to the underlying
circuit unitary.

Set $\mcJ_{n,D}:=C_pDn2^D(\log n+D)$, where \(C_p\) is chosen sufficiently large. The assumption is precisely $\frac{2^D}{n}+\frac{\mcJ_{n,D}}{m_n}=o(1),$ after changing the value of \(C_p\).  Let
\[
\Lambda_{n,D}
:=
\frac{2^D}{n}
+
\frac{\mcJ_{n,D}}{m_n}.
\]
By assumption, \(\Lambda_{n,D}=o(1)\). We now construct a finite net using the
same pruning and relabeling argument as in Proposition~\ref{prop:full-residual-entropy}.
Every circuit parameter $\theta=(M_1,\dots,M_D,U)\in\mathsf{T}_{n,r}^{(D)}$ is equivalent, for all residual coefficients, to a pruned and relabeled circuit $\widetilde\theta
    =
    (\widetilde M_1,\dots,\widetilde M_D,\widetilde U)$ whose nonidentity gates act only on an active set of size at most
\(n2^D\). In particular,
\[
    \chi_{I,a}(\theta)
    =
    \chi_{I,a}(\widetilde\theta)
    \qquad
    \text{for all } I\in\mcI_p^n,\ a\in\{1,2,3\}^p .
\]
It is therefore enough to build a net for these pruned, relabeled  circuits.

Let \(J_0:=Dn2^D\), which is an upper bound on the number of active two-qubit
gate locations in a pruned depth-\(D\) circuit. For each fixed pruned architecture, take a \(\delta\)-net of \(U(4)\) in operator norm for
each active gate location. By the usual telescoping bound for products of gates, if $\delta=\frac{\Lambda_{n,D}}{C_pJ_0}$, then every pruned circuit \(\widetilde U\) has a representative
\(U_0\) in the net such that $\|\widetilde U-U_0\|_{\mathrm{op}}\le \Lambda_{n,D}$. The logarithm of the gate-net size for a fixed architecture is at most $C_pJ_0\log(C_pJ_0/\Lambda_{n,D})$. The number of pruned  architectures has logarithm at most
\(C Dn2^D(\log n+D)\), again by the counting argument in
Proposition~\ref{prop:full-residual-entropy}. Since $\Lambda_{n,D}\ge 2^D/n$, we have
\[
\log\left(\frac{C_pJ_0}{\Lambda_{n,D}}\right)
\le
\log(C_pD n^2)
\le
C(\log n+D),
\]
after increasing constants. Hence, after taking the union over pruned architectures, we obtain a finite net \(\mathcal N\) such that $\log|\mathcal N|
    \le
    \mcJ_{n,D}$, and every \(\theta\in\mathsf T_{n,r}^{(D)}\) has a pruned 
representative \(\widetilde\theta\) and a net point \(\theta_0\in\mathcal N\)
satisfying $\|\widetilde U-U_0\|_{\mathrm{op}}\le \Lambda_{n,D}$. The construction is independent of \(r\), because pruning removes all gates
invisible to physical observables and the active ancillas are relabeled into a canonical block.

We now apply Bernstein's inequality on the net.
Fix \(\theta_0\in\mathcal N\). Since \(0\le F_{\theta_0}(I)\le C_p\) and
\[
\frac1{\binom np}\sum_{I\in\mcI_p^n}F_{\theta_0}(I)
\le
C_p\frac{2^D}{n},
\]
we have
\[
\E_\xi
\bigg[
\frac1{m_n}\sum_{I\in\mcI_p^n}\xi_I F_{\theta_0}(I)
\bigg]
=
\frac1{\binom np}\sum_{I\in\mcI_p^n}F_{\theta_0}(I)
\le
C_p\frac{2^D}{n}.
\]
Apply Bernstein's inequality to the centered random variables $Z_I
    :=
    \xi_I F_{\theta_0}(I)
    -
    \E_\xi[\xi_I F_{\theta_0}(I)]$.
Since \(0\le F_{\theta_0}(I)\le C_p\), we have \(|Z_I|\le C_p\). Moreover,
\[
\sum_{I\in\mcI_p^n}\E_\xi Z_I^2
\le
\sum_{I\in\mcI_p^n}
\E_\xi[\xi_I F_{\theta_0}(I)^2]
\le
C_p\kappa_n\sum_{I\in\mcI_p^n}F_{\theta_0}(I)
\le
C_p m_n\frac{2^D}{n}.
\]
Also,
\[
\E_\xi
\sum_{I\in\mcI_p^n}\xi_I F_{\theta_0}(I)
=
\kappa_n\sum_{I\in\mcI_p^n}F_{\theta_0}(I)
\le
C_p m_n\frac{2^D}{n}.
\]
Therefore, applying Bernstein's inequality with \(t=3\mcJ_{n,D}\), we obtain
\[
\P_\xi \bigg(
\frac1{m_n}\sum_{I\in\mcI_p^n}\xi_I F_{\theta_0}(I)
>
C_p\bigg[
\frac{2^D}{n}
+
\sqrt{
\frac{2^D}{n}\frac{\mcJ_{n,D}}{m_n}
}
+
\frac{\mcJ_{n,D}}{m_n}
\bigg]
\bigg)
\le
\exp(-3\mcJ_{n,D}).
\]
Since
\[
\sqrt{
\frac{2^D}{n}\frac{\mcJ_{n,D}}{m_n}
}
\le
\frac12\frac{2^D}{n}
+
\frac12\frac{\mcJ_{n,D}}{m_n},
\]
we get
\[
\P_\xi\bigg(
\frac1{m_n}\sum_{I\in\mcI_p^n}\xi_I F_{\theta_0}(I)
>
C_p\bigg[
\frac{2^D}{n}
+
\frac{\mcJ_{n,D}}{m_n}
\bigg]
\bigg)
\le
\exp(-3\mcJ_{n,D}).
\]
Taking a union bound over \(\mathcal N\), and using
\(\log|\mathcal N|\le \mcJ_{n,D}\), gives, with probability tending to one,
\[
\sup_{\theta_0\in\mathcal N}
\frac1{m_n}
\sum_{I\in\mcI_p^n}\xi_I F_{\theta_0}(I)
\le
C_p
\left[
\frac{2^D}{n}
+
\frac{\mcJ_{n,D}}{m_n}
\right].
\]

It remains to pass from the net to all of \(\mathsf{T}_{n,r}^{(D)}\). Since
\(\mcJ_{n,D}/m_n=o(1)\), we have \(m_n\to\infty\), and therefore $|\mcE_p|\le 2m_n$ with probability tending to one. We work on the intersection of this event and the event on which the estimate in the previous display holds.

Let \(\theta\in\mathsf{T}_{n,r}^{(D)}\), let \(\widetilde\theta\) be its
pruned and relabeled representative, and choose
\(\theta_0\in\mathcal N\) in the same pruned architecture such that
$\|\widetilde U-U_0\|_{\mathrm{op}}
    \le
    \Lambda_{n,D}$.
Since \(\chi_{I,a}(\theta)=\chi_{I,a}(\widetilde\theta)\), it suffices to
compare \(F_{\tilde\theta}(I)\) and \(F_{\theta_0}(I)\).

We claim that, for every \(I\in\mcI_p^n\), $|F_{\tilde\theta}(I)-F_{\theta_0}(I)|
    \le
    C_p\Lambda_{n,D}$.
To see this, note that for each label \(a\in\{1,2,3\}^p\),
\[
\big|
\langle 0^{n+r}|
\widetilde U^*(P_I^a\otimes I_{\mathrm{anc}})\widetilde U
|0^{n+r}\rangle
-
\langle 0^{n+r}|
U_0^*(P_I^a\otimes I_{\mathrm{anc}})U_0
|0^{n+r}\rangle
\big|
\le
2\|\widetilde U-U_0\|_{\mathrm{op}}
\le
2\Lambda_{n,D},
\]
since \(\|P_I^a\otimes I_{\mathrm{anc}}\|_{\mathrm{op}}=1\). Similarly, for every physical
one-site Pauli observable,
\[
|m_i^{\phi_{\tilde \theta}}(b)-m_i^{\phi_{\theta_0}}(b)|
\le
2\|\widetilde U-U_0\|_{\mathrm{op}}
\le
2\Lambda_{n,D}.
\]
Since all one-site means have absolute value at most \(1\), the product
difference satisfies
\[
\big|
\prod_{s=1}^p m_{i_s}^{\phi_{\tilde \theta}}(a_s)
-
\prod_{s=1}^p m_{i_s}^{\phi_{\theta_0}}(a_s)
\big|
\le
C_p\Lambda_{n,D}.
\]
Therefore $|\chi_{I,a}(\widetilde\theta)-\chi_{I,a}(\theta_0)|
\le
C_p\Lambda_{n,D}$. Since \(|\chi_{I,a}(\widetilde\theta)|\le2\) and
\(|\chi_{I,a}(\theta_0)|\le2\), we also have
\[
\big|
|\chi_{I,a}(\widetilde\theta)|^2
-
|\chi_{I,a}(\theta_0)|^2
\big |
\le
C_p\Lambda_{n,D}.
\]
Summing over the \(3^p\) labels \(a\) proves the claim.

Using \(\chi_{I,a}(\theta)=\chi_{I,a}(\widetilde\theta)\), we get
\[
\frac1{m_n}
\sum_{I\in\mcI_p^n}
\xi_I
|F_{\theta}(I)-F_{\theta_0}(I)|
=
\frac1{m_n}
\sum_{I\in\mcI_p^n}
\xi_I
|F_{\tilde\theta}(I)-F_{\theta_0}(I)|  
\le
C_p\Lambda_{n,D}\frac{|\mcE_p|}{m_n}
\le
C_p\Lambda_{n,D}.
\]
Combining this with the net bound yields
\[
\sup_{\theta\in\mathsf{T}_{n,r}^{(D)}}
\frac1{m_n}
\sum_{I\in\mcI_p^n}\xi_I F_\theta(I)
\le
C_p
\left[
\frac{2^D}{n}
+
\frac{\mcJ_{n,D}}{m_n}
\right].
\]
Finally, since
\[
\tvar_g
(
R^{(\beta)}(\theta)\mid \xi
)
=
\frac1{m_n}
\sum_{I\in\mcI_p^n}
\xi_I
F_\theta(I),
\]
the desired uniform variance bound follows.
\end{proof}

\begin{lemma}[Entropy bound, growing-average-degree diluted regime]
\label{lem:sparse-residual-entropy}
Let $\mathsf E_{n,1} := \{|\mcE_p|\le 2m_n\}$ and condition on a realization of the sparsification variables \(\xi\in\mathsf E_{n,1}\).
Let \(\mathsf{d}_\xi\) be the conditional canonical metric of the sparse residual
process \(R^{(\beta)}\), namely
\[
\mathsf{d}_\xi(\theta,\theta')^2
:=
\tvar_g  (
R^{(\beta)}(\theta)-R^{(\beta)}(\theta')
\mid \xi
),
\qquad
\theta,\theta'\in\mathsf{T}_{n,r}^{(D)} .
\]
Then, for every \(r\in\mathbb Z_+\) and every \(\eta>0\),
\[
\log N(\mathsf{T}_{n,r}^{(D)},\mathsf{d}_\xi,\eta)
\le
C_pDn2^D(\log n+D)
+
C_pDn2^D
\log_+\left(
\frac{C_pDn2^D}{\eta}
\right).
\]
\end{lemma}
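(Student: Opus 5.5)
The plan is to reproduce the argument of Proposition~\ref{prop:full-residual-entropy} with the conditional metric \(\mathsf d_\xi\) in place of \(\mathsf d_R\). The only new ingredient is a Lipschitz comparison between \(\mathsf d_\xi\) and the operator-norm distance of circuit unitaries, valid on \(\mathsf E_{n,1}\).

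First, conditionally on \(\xi\), the Gaussian coefficients are still independent standard normals. Hence
\[
\mathsf d_\xi(\theta,\theta')^2
=
\frac1{m_n}\sum_{I\in\mcE_p}\sum_{a\in\{1,2,3\}^p}
|\chi_{I,a}(\theta)-\chi_{I,a}(\theta')|^2 .
\]
Two consequences follow.
\begin{itemize}
\item[(i)] If \(\chi_{I,a}(\theta)=\chi_{I,a}(\theta')\) for all \(I,a\), then \(\mathsf d_\xi(\theta,\theta')=0\). The pruning and relabeling construction from the proof of Proposition~\ref{prop:full-residual-entropy} is purely deterministic and preserves every residual coefficient. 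Therefore every \(\theta\in\mathsf T_{n,r}^{(D)}\) is \(\mathsf d_\xi\)-equivalent to a pruned, relabeled circuit whose nonidentity gates act only on the canonical set \(K\) with \(|K|\le n2^D\). As before, the number of such architectures has logarithm at most \(CDn2^D(\log n+D)\), independently of \(r\) and of \(\xi\).
\item[(ii)] Fix an architecture, and take two circuits whose gates are \(\delta\)-close in operator norm. The telescoping bounds already proved give \(|\chi_{I,a}(\theta_U)-\chi_{I,a}(\theta_V)|\le C_pJ_0\delta\) with \(J_0=Dn2^D\). Summing over \(I\in\mcE_p\) and the \(3^p\) labels \(a\), and using \(|\mcE_p|\le 2m_n\) on \(\mathsf E_{n,1}\), yields \(\mathsf d_\xi(\theta_U,\theta_V)^2\le 2\cdot 3^pC_p^2J_0^2\delta^2\). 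Thus \(\mathsf d_\xi\le C_pJ_0\delta\).
\end{itemize}

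Next, given \(\eta\le C_p\), choose \(\delta=\eta/(C_pJ_0)\). The product of operator-norm \(\delta\)-nets of \(U(4)\) over the at most \(J_0\) gate locations is then an \(\eta\)-net for a fixed architecture. Its log-size is at most \(C_pJ_0\log(C_pJ_0/\eta)\).

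For \(\eta>C_p\), it suffices to bound the \(\mathsf d_\xi\)-diameter. Since \(|\chi_{I,a}|\le 2\), on \(\mathsf E_{n,1}\) we get \(\mathsf d_\xi^2\le 16\cdot 3^p|\mcE_p|/m_n\le 32\cdot 3^p\). So one representative per architecture suffices, which produces the \(\log_+\).

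Finally, take the union over architectures and add the logarithms. This gives exactly the stated bound, uniformly in \(r\).

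The only point needing care is that the bound on \(\mathsf d_\xi\) must not depend on \(\xi\) beyond the event \(\mathsf E_{n,1}\). The normalization by \(m_n\) rather than \(|\mcE_p|\) is exactly what the constraint \(|\mcE_p|\le 2m_n\) controls. I expect no real obstacle beyond this bookkeeping, since the pruning argument never used the Hamiltonian's randomness.
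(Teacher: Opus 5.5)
Your proposal is correct and matches the paper's proof. The paper likewise conditions on $\xi$, bounds $\mathsf d_\xi(\theta,\theta')\le C_p\|U-U'\|_{\mathrm{op}}$ on $\mathsf E_{n,1}$ using the coefficient Lipschitz estimate together with $|\mcE_p|\le 2m_n$, and then reruns the pruning and relabeling covering argument of Proposition~\ref{prop:full-residual-entropy}. Your explicit treatment of the large-$\eta$ case, via the diameter bound $\mathsf d_\xi^2\le 32\cdot 3^p$ on $\mathsf E_{n,1}$, fills in a detail that the paper leaves implicit.
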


\begin{proof}
The proof is the same pruning and relabeling argument as in
Proposition~\ref{prop:full-residual-entropy}. We only record the changes caused by
the sparsification.

Conditioned on the sparsification variables, the canonical metric is
\[
\mathsf{d}_\xi(\theta,\theta')^2
=
\frac1{m_n}
\sum_{I\in\mcI_p^n}
\xi_I
\sum_{a\in\{1,2,3\}^p}
|\chi_{I,a}(\theta)-\chi_{I,a}(\theta')|^2 .
\]
By the coefficient Lipschitz estimate proved in
Proposition~\ref{prop:full-residual-entropy}, for every \(I\in\mcI_p^n\) and
\(a\in\{1,2,3\}^p\),
\[
    |\chi_{I,a}(\theta)-\chi_{I,a}(\theta')|
    \le
    C_p\|U-U'\|_{\mathrm{op}} .
\]
Therefore, on \(\mathsf E_{n,1}\),
\[
\mathsf{d}_\xi(\theta,\theta')^2
\le
\frac1{m_n}
\sum_{I\in\mcE_p}
\sum_{a\in\{1,2,3\}^p}
C_p\|U-U'\|_{\mathrm{op}}^2  
\le
C_p\|U-U'\|_{\mathrm{op}}^2.
\]
Hence \(\mathsf{d}_\xi(\theta,\theta') \le C_p\|U-U'\|_{\mathrm{op}}\).

The remainder of the proof follows exactly the covering argument from
Proposition~\ref{prop:full-residual-entropy}: after pruning to the joint backward
support of the physical outputs and relabeling the active ancillas into a
canonical block, the relevant circuits act on at most \(n2^D\) qubits. 
\end{proof}

\begin{lemma}[Expected supremum, growing-average-degree diluted regime]
\label{lem:sparse-residual-expected-sup}
Define
\[
    \mathsf E_{n,1}:=\{|\mcE_p|\le 2m_n\}
\]
and, for each \(r\in\mathbb Z_+\), define
\[
\mathsf E_{n,r}^{\mathrm{var}}
:=
\left\{
\sup_{\theta\in\mathsf{T}_{n,r}^{(D)}}
\tvar_g
\left(
R^{(\beta)}(\theta)\mid \xi
\right)
\le
C_p
\left[
\frac{2^D}{n}
+
\frac{Dn2^D(\log n+D)}{m_n}
\right]
\right\}.
\]
Set $\mathsf E_{n,r}:=\mathsf E_{n,1}\cap \mathsf E_{n,r}^{\mathrm{var}}$. On the event \(\mathsf E_{n,r}\), we have
\[
\E_g
\bigg[
\frac1{\sqrt n}
\sup_{\theta\in\mathsf{T}_{n,r}^{(D)}}
|R^{(\beta)}(\theta)|
\,\big | \, \xi
\bigg ]
\le
C_p
\left\{
D2^D (\log n+D)
\left[
\frac{2^D}{n}
+
\frac{Dn2^D(\log n+D)}{m_n}
\right]
\right\}^{1/2}.
\]
\end{lemma}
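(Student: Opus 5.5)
The plan is to repeat the Dudley-integral argument of Lemma~\ref{lem:expected-sup-residual}, now conditionally on the sparsification variables $\xi$. Conditionally on $\xi$, the process $\{R^{(\beta)}(\theta):\theta\in\mathsf{T}_{n,r}^{(D)}\}$ is a centered Gaussian process in the coefficients $g$, with canonical metric $\mathsf d_\xi$. It is separable for the same reason as in Proposition~\ref{prop:circuit-residual-high-prob}: there are finitely many architectures, and for each the gate parameters range over a compact set. As there, we handle the absolute value by symmetrizing the index set, which costs only a factor $2$ in covering numbers.

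First, on $\mathsf E_{n,r}^{\mathrm{var}}$ we set
\[
\sigma_\xi^2:=\sup_{\theta}\tvar_g\bigl(R^{(\beta)}(\theta)\mid\xi\bigr)\le C_p\Lambda,
\qquad
\Lambda:=\frac{2^D}{n}+\frac{Dn2^D(\log n+D)}{m_n}.
\]
Then $\operatorname{diam}(\mathsf d_\xi)\le 2\sigma_\xi$. Second, on $\mathsf E_{n,1}$, Lemma~\ref{lem:sparse-residual-entropy} gives
\[
\log N(\mathsf T_{n,r}^{(D)},\mathsf d_\xi,\eta)\le \mathcal A+\mathcal B\log_+\!\left(\frac{\mathcal B}{\eta}\right),
\qquad
\mathcal A=C_pDn2^D(\log n+D),\quad \mathcal B=C_pDn2^D .
\]
Dudley's entropy integral then yields
\[
\E_g\Bigl[\sup_\theta|R^{(\beta)}(\theta)|\,\Big|\,\xi\Bigr]\le C\int_0^{2\sigma_\xi}\sqrt{\mathcal A+\mathcal B\log_+(\mathcal B/\eta)}\,d\eta .
\]

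The only step needing care is evaluating this integral, since $\sigma_\xi$ is now not $\asymp 2^D/n$ but could be much smaller. We split $\log_+(\mathcal B/\eta)\le\log_+(\mathcal B/(2\sigma_\xi))+\log(2\sigma_\xi/\eta)$ for $\eta\le2\sigma_\xi$, and then use the elementary bound
\[
\int_0^{2\sigma}\sqrt{A'+\mathcal B\log(2\sigma/\eta)}\,d\eta\le C\sigma\sqrt{A'+\mathcal B}.
\]
This produces the bound $C\sigma_\xi\sqrt{\mathcal A+\mathcal B+\mathcal B\log_+(\mathcal B/(2\sigma_\xi))}$.

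To absorb the logarithm, note that $x\mapsto x\sqrt{\log_+(\mathcal B/x)}$ is increasing for small $x$, so we may replace $\sigma_\xi$ by the upper bound $(C_p\Lambda)^{1/2}$. Since $\Lambda\ge 2^D/n$, this gives
\[
\log(\mathcal B/\sigma_\xi)\le \log\bigl(C_pDn2^D\cdot(n/2^D)^{1/2}\bigr)\le C(\log n+D).
\]
If instead $\sigma_\xi$ is so small that the monotonicity fails, the term is trivially smaller still. This logarithm is therefore absorbed into $\mathcal A$.

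Hence the conditional expectation is at most $C_p\Lambda^{1/2}\sqrt{Dn2^D(\log n+D)}$. Dividing by $\sqrt n$ gives exactly
\[
C_p\bigl\{D2^D(\log n+D)\,\Lambda\bigr\}^{1/2},
\]
which is the claimed bound, uniformly in $r$ because both input estimates are $r$-uniform.
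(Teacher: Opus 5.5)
Your proposal is correct and follows essentially the same route as the paper: you run a conditional Dudley integral using the $r$-uniform entropy bound of Lemma~\ref{lem:sparse-residual-entropy} on $\mathsf E_{n,1}$ and the diameter bound from $\mathsf E_{n,r}^{\mathrm{var}}$, then absorb the logarithm via $\Lambda\ge 2^D/n$. The only difference is cosmetic. The paper integrates directly up to the upper bound $C_p\Lambda^{1/2}$ on the diameter, which is legitimate because the integrand is nonnegative and sidesteps your monotonicity discussion. That caveat is also slightly misstated: $x\sqrt{\log_+(\mathcal B/x)}$ is increasing for all $x<\mathcal B e^{-1/2}$ and can only fail to be monotone for large $x$, not small.
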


\begin{proof}
As in the proof of Proposition~\ref{prop:circuit-residual-high-prob}, the
absolute value in the supremum is handled by symmetrizing the index set. This
only changes covering numbers by a factor of \(2\), which is absorbed into the
constants.

Fix \(r\in\mathbb Z_+\), and condition on a realization
\(\xi\in\mathsf E_{n,r}\). Let \(\mathsf d_\xi\) denote the conditional
canonical metric of the Gaussian process \(R^{(\beta)}\), conditioned on the
sparsification variables \(\xi\). That is, for
\(\theta,\theta'\in\mathsf{T}_{n,r}^{(D)}\),
\[
\mathsf{d}_\xi(\theta,\theta')^2
:=
\tvar_g
\left(
R^{(\beta)}(\theta)-R^{(\beta)}(\theta')
\mid \xi
\right).
\]

On \(\mathsf E_{n,1}\), Lemma~\ref{lem:sparse-residual-entropy} gives, for
every \(\eta>0\),
\[
\log N(\mathsf{T}_{n,r}^{(D)},\mathsf{d}_\xi,\eta)
\le
C_pDn2^D(\log n+D)
+
C_pDn2^D
\log_+\left(
\frac{C_pDn2^D}{\eta}
\right).
\]
Next, set
\[
\Lambda_{n,D}
:=
\frac{2^D}{n}
+
\frac{Dn2^D(\log n+D)}{m_n}.
\]
On \(\mathsf E_{n,r}^{\mathrm{var}}\),
\[
\sup_{\theta\in\mathsf{T}_{n,r}^{(D)}}
\tvar_g
\left(
R^{(\beta)}(\theta)\mid \xi
\right)
\le
C_p\Lambda_{n,D}.
\]
Therefore,
\[
\operatorname{diam}(\mathsf{T}_{n,r}^{(D)},\mathsf d_\xi)
\le
2\left(
\sup_{\theta\in\mathsf{T}_{n,r}^{(D)}}
\tvar_g(R^{(\beta)}(\theta)\mid\xi)
\right)^{1/2}
\le
C_p\sqrt{\Lambda_{n,D}}.
\]

Thus Dudley's entropy integral gives
\[
\E_g
\left[
\sup_{\theta\in\mathsf{T}_{n,r}^{(D)}}|R^{(\beta)}(\theta)|
\,\big|\,\xi
\right]
\le
C
\int_0^{C_p\sqrt{\Lambda_{n,D}}}
\sqrt{
C_pDn2^D(\log n+D)
+
C_pDn2^D
\log_+\left(
\frac{C_pDn2^D}{\eta}
\right)
}
\,d\eta .
\]
Using the elementary entropy-integral bound from
Lemma~\ref{lem:expected-sup-residual}, together with
\(\Lambda_{n,D}\ge 2^D/n\), we obtain
\[
\E_g
\left[
\sup_{\theta\in\mathsf{T}_{n,r}^{(D)}}|R^{(\beta)}(\theta)|
\,\big|\,\xi
\right]
\le
C_p
\sqrt{\Lambda_{n,D}}\,
\sqrt{Dn2^D(\log n+D)}.
\]
Dividing by \(\sqrt n\) gives
\[
\E_g
\bigg[
\frac1{\sqrt n}
\sup_{\theta\in\mathsf{T}_{n,r}^{(D)}}
|R^{(\beta)}(\theta)|
\,\big | \, \xi
\bigg]
\le
C_p
\left\{
D2^D(\log n+D)\Lambda_{n,D}
\right\}^{1/2}.
\]
Substituting the definition of \(\Lambda_{n,D}\) proves the claim.
\end{proof}

\begin{lemma}[High-probability bound, growing-average-degree diluted regime]
\label{lem:sparse-residual-negligibility}
Assume
\[
D_n2^{D_n}(\log n+D_n)
\left[
\frac{2^{D_n}}{n}
+
\frac{D_n n2^{D_n}(\log n+D_n)}{m_n}
\right]
=o(1).
\]
Then, for every \(\varepsilon>0\),
\[
\lim_{n\to\infty}
\sup_{r\in\mathbb Z_+}
\mathbb P\left(
\frac1{\sqrt n}
\sup_{\theta\in\mathsf{T}_{n,r}^{(D_n)}}
|R^{(\beta)}(\theta)|
>\varepsilon
\right)
=0.
\]
\end{lemma}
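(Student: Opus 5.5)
The plan is a Markov argument on top of the conditional expected-supremum bound of Lemma~\ref{lem:sparse-residual-expected-sup}, with the bad events controlled by Lemma~\ref{lem:sparse-full-residual-variance} and the concentration of $|\mcE_p|$. Write
\[
\Gamma_n:=D_n2^{D_n}(\log n+D_n)\Big[\frac{2^{D_n}}{n}+\frac{D_nn2^{D_n}(\log n+D_n)}{m_n}\Big],
\]
which is $o(1)$ by hypothesis. For fixed $r$ and $\varepsilon>0$, decompose
\[
\P\Big(\tfrac1{\sqrt n}\sup_{\theta}|R^{(\beta)}(\theta)|>\varepsilon\Big)
\le \P_\xi(\mathsf E_{n,1}^c)+\P_\xi\big((\mathsf E_{n,r}^{\mathrm{var}})^c\big)+\E_\xi\Big[\mathbf 1_{\mathsf E_{n,r}}\,\P_g\Big(\tfrac1{\sqrt n}\sup_\theta|R^{(\beta)}(\theta)|>\varepsilon\,\Big|\,\xi\Big)\Big].
\]
On $\mathsf E_{n,r}$, conditional Markov together with Lemma~\ref{lem:sparse-residual-expected-sup} bounds the inner probability by $C_p\Gamma_n^{1/2}/\varepsilon$. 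This is deterministic and independent of $r$, so the third term tends to zero uniformly in $r$.

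For the first term, note that $\Gamma_n=o(1)$ forces $D_nn2^{D_n}(\log n+D_n)/m_n\to0$, so $m_n\to\infty$. Since $|\mcE_p|$ is a sum of independent Bernoullis with mean $m_n$, Chebyshev (variance at most $m_n$) gives $\P(|\mcE_p|>2m_n)\le 1/m_n\to0$. This does not depend on $r$.

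For the second term, apply Lemma~\ref{lem:sparse-full-residual-variance}. Its hypothesis is $2^{D_n}/n+D_n2^{D_n}n(\log n+D_n)/m_n=o(1)$, and this follows from $\Gamma_n=o(1)$ because the prefactor $D_n2^{D_n}(\log n+D_n)\ge\log n\ge1$ for $D_n\ge1$ and large $n$. That lemma already gives the limit uniformly over $r\in\mathbb Z_+$, provided the constant $C_p$ in the definition of $\mathsf E_{n,r}^{\mathrm{var}}$ is taken to be the one produced by Lemma~\ref{lem:sparse-full-residual-variance}.

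Taking $\sup_r$ and then $n\to\infty$ finishes the proof. The only delicate point is consistency of constants: Lemma~\ref{lem:sparse-residual-expected-sup} must be invoked with the same $C_p$ as in the variance event, which is legitimate because its bound is stated for an arbitrary fixed $C_p$ absorbed into its output constant. Beyond that the argument is routine.
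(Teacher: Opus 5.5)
Your proposal is correct and follows the paper's proof essentially step by step. You intersect $\{|\mcE_p|\le 2m_n\}$ with the uniform variance event from Lemma~\ref{lem:sparse-full-residual-variance}, apply conditional Markov with the bound of Lemma~\ref{lem:sparse-residual-expected-sup}, and take $\sup_r$ before letting $n\to\infty$. The differences are cosmetic: you use Chebyshev rather than Bernstein for $|\mcE_p|$, and you explicitly check that $\Gamma_n=o(1)$ implies the hypothesis of the variance lemma, which the paper leaves implicit.
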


\begin{proof}
Let $\mathsf E_{n,1}:=\{|\mcE_p|\le 2m_n\}$. For each \(r\in\mathbb Z_+\), let \(\mathsf E_{n,r}^{\mathrm{var}}\) and
\(\mathsf E_{n,r}\) be the events defined in
Lemma~\ref{lem:sparse-residual-expected-sup}. That is, $\mathsf E_{n,r}
    :=
    \mathsf E_{n,1}\cap \mathsf E_{n,r}^{\mathrm{var}}$. We first claim that
\[
\lim_{n\to\infty}
\sup_{r\in\mathbb Z_+}
\P_\xi(\mathsf E_{n,r}^c)
=0.
\]
Indeed, \(|\mcE_p|=\sum_{I\in\mcI_p^n}\xi_I\),
\(\E_\xi|\mcE_p|=m_n\), and \(m_n\to\infty\). Bernstein's inequality gives
\[
\P_\xi(\mathsf E_{n,1}^c)
=
\P_\xi(|\mcE_p|>2m_n)
\le
\exp(-c m_n)
=
o(1).
\]
On the other hand, Lemma~\ref{lem:sparse-full-residual-variance} gives
$\lim_{n\to\infty}
\sup_{r\in\mathbb Z_+}
\P_\xi\left((\mathsf E_{n,r}^{\mathrm{var}})^c\right)
=0$. Therefore
\[
\sup_{r\in\mathbb Z_+}
\P_\xi(\mathsf E_{n,r}^c)
\le
\P_\xi(\mathsf E_{n,1}^c)
+
\sup_{r\in\mathbb Z_+}
\P_\xi\left((\mathsf E_{n,r}^{\mathrm{var}})^c\right)
\to0.
\]

On \(\mathsf E_{n,r}\), Lemma~\ref{lem:sparse-residual-expected-sup} gives
\[
\E_g
\bigg[
\frac1{\sqrt n}
\sup_{\theta\in\mathsf{T}_{n,r}^{(D_n)}}
|R^{(\beta)}(\theta)|
\mid \xi
\bigg]
\le
C_p
\bigg\{
D_n2^{D_n}(\log n+D_n)
\bigg[
\frac{2^{D_n}}{n}
+
\frac{D_n n 2^{D_n}(\log n+D_n)}{m_n}
\bigg]
\bigg\}^{1/2}.
\]
The right-hand side is \(o(1)\) by assumption.

Let \(\varepsilon>0\), and define
\[
A_{n,r}
:=
\left\{
\frac1{\sqrt n}
\sup_{\theta\in\mathsf{T}_{n,r}^{(D_n)}}
|R^{(\beta)}(\theta)|
>
\varepsilon
\right\}.
\]
For every fixed \(r\), Markov's inequality gives, on \(\mathsf E_{n,r}\),
\[
\P_g(A_{n,r}\mid \xi)
\le
\frac{C_p}{\varepsilon}
\bigg(
D_n2^{D_n}(\log n+D_n)
\bigg[
\frac{2^{D_n}}{n}
+
\frac{D_n n 2^{D_n}(\log n+D_n)}{m_n}
\bigg]
\bigg)^{1/2}.
\]

Consequently, for every \(r\in\mathbb Z_+\),
\[
\begin{aligned}
\P_{\xi,g}(A_{n,r})
&\le
\P_\xi(\mathsf E_{n,r}^c)
+
\E_\xi\left[
\indicator_{\mathsf E_{n,r}}
\P_g(A_{n,r}\mid \xi)
\right]  \\
&\le
\P_\xi(\mathsf E_{n,r}^c)
+
\frac{C_p}{\varepsilon}
\left(
D_n 2^{D_n}(\log n+D_n)
\left[
\frac{2^{D_n}}{n}
+
\frac{D_n n2^{D_n}(\log n+D_n)}{m_n}
\right]
\right)^{1/2}.
\end{aligned}
\]
Taking the supremum over \(r\in\mathbb Z_+\), we obtain
\[
\begin{aligned}
\sup_{r\in\mathbb Z_+}\P_{\xi,g}(A_{n,r})
&\le
\sup_{r\in\mathbb Z_+}\P_\xi(\mathsf E_{n,r}^c)
+
\frac{C_p}{\varepsilon}
\left(
D_n 2^{D_n}(\log n+D_n)
\left[
\frac{2^{D_n}}{n}
+
\frac{D_n n2^{D_n}(\log n+D_n)}{m_n}
\right]
\right)^{1/2}.
\end{aligned}
\]
The first term tends to zero by the preceding claim, and the second term tends
to zero by the assumed depth condition. Hence
\[
\lim_{n\to\infty}
\sup_{r\in\mathbb Z_+}
\P_{\xi,g}\left(
\frac1{\sqrt n}
\sup_{\theta\in\mathsf{T}_{n,r}^{(D_n)}}
|R^{(\beta)}(\theta)|
>
\varepsilon
\right)
=0,
\]
which proves the claim.
\end{proof}
\begin{theorem}[Product-depth $D$ gap, growing-average-degree diluted regime]
\label{thm:sparse-log-depth-product-gap}
Fix \(p\ge2\), and let \(D=D_n \ge 1\) satisfy
\[
D_n2^{D_n}(\log n+D_n)
\left[
\frac{2^{D_n}}{n}
+
\frac{D_n n2^{D_n}(\log n+D_n)}{m_n}
\right]
=o(1).
\]
Then, for every \(\varepsilon>0\),
\[
\lim_{n\to\infty}
\sup_{r\in\mathbb Z_+}
\P\left(
E_{n,r}^{(\beta,D_n)}(p)
-
E_{n,\mathrm{prod}}^{(\beta)}(p)
>
\varepsilon
\right)
=0.
\]
Moreover,
\[
E_{n,r}^{(\beta,D_n)}(p)\ge E_{n,\mathrm{prod}}^{(\beta)}(p)
\]
for every \(n,r\), and \(D_n\ge1\).
\end{theorem}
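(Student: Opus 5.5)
The proof follows the same template as Theorem~\ref{thm:log-depth-product-gap-residual}, with the mean-field inputs replaced by their diluted counterparts.

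\emph{Lower bound.} The inequality $E_{n,r}^{(\beta,D_n)}(p)\ge E_{n,\mathrm{prod}}^{(\beta)}(p)$ is deterministic and does not depend on the Hamiltonian. The depth-one construction in the proof of Theorem~\ref{thm:log-depth-product-gap-residual} already shows that every physical product state is the reduced state of some circuit in $\mathsf{S}_{n,r}^{(D_n)}$. In that construction, first-layer gates prepare the desired single-qubit factors on the physical endpoints, and all remaining gates are the identity. Hence every product-state energy of $H_{n,p}^{(\beta)}$ is attained inside the depth-$D_n$ class, for every realization of $(\xi,g)$ and every $r$.

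\emph{Upper bound.} For each $\theta$ we use the decomposition
\[
\langle\phi_\theta|H_{n,p}^{(\beta)}\otimes I_{\mathrm{anc}}|\phi_\theta\rangle=X_0^{(\beta)}(\theta)+R^{(\beta)}(\theta).
\]
For the first term, $X_0^{(\beta)}(\theta)=X_0^{(\beta)}(m_1(\phi_\theta),\dots,m_n(\phi_\theta))$, where
\[
X_0^{(\beta)}(x_1,\dots,x_n):=m_n^{-1/2}\sum_I\xi_I\sum_a g_{I,a}\prod_s x_{i_s}(a_s)
\]
is multilinear on $(\B_2^3)^n$. The coordinatewise argument of Theorem~\ref{thm:log-depth-product-gap-residual} goes through unchanged, since the mask $\xi_I$ only zeroes out some coefficients and multilinearity is preserved. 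It shows that the supremum over $(\B_2^3)^n$ is attained on $(\mcS^2)^n$, that is, at pure product states. Therefore
\[
n^{-1/2}\sup_\theta X_0^{(\beta)}(\theta)\le E^{(\beta)}_{n,\mathrm{prod}}(p)
\]
deterministically. For the second term, the hypothesis on $D_n$ is exactly the hypothesis of Lemma~\ref{lem:sparse-residual-negligibility}. That lemma gives, for every $\varepsilon>0$,
\[
\sup_r\P\Bigl(n^{-1/2}\sup_\theta|R^{(\beta)}(\theta)|>\varepsilon\Bigr)\to0.
\]
On the complement of this event we have $E_{n,r}^{(\beta,D_n)}(p)\le E^{(\beta)}_{n,\mathrm{prod}}(p)+\varepsilon$. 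This proves the probability statement uniformly in $r$.

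\emph{Main point to check.} The only subtlety is that Lemma~\ref{lem:sparse-residual-negligibility} requires the variance-lemma hypothesis of Lemma~\ref{lem:sparse-full-residual-variance}, namely $2^{D_n}/n+D_n2^{D_n}n(\log n+D_n)/m_n=o(1)$. This follows from the assumed condition, because the assumed expression equals this bracket multiplied by $D_n2^{D_n}(\log n+D_n)\ge1$. Since Lemma~\ref{lem:sparse-residual-negligibility} is already stated under exactly the theorem's hypothesis, this is in effect already built in. The remaining argument is bookkeeping: the supremum over $r$ passes through because every bound used is uniform in $r$.
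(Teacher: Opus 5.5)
Your proposal is correct and follows the paper's proof: the lower bound comes from embedding physical product states into the depth-$D_n$ class, and the upper bound from the decomposition $X_0^{(\beta)}+R^{(\beta)}$. The zeroth-order term is controlled by the same multilinearity argument, and the residual by Lemma~\ref{lem:sparse-residual-negligibility}, whose hypothesis is exactly the stated depth condition. Your extra check that this condition also implies the hypothesis of Lemma~\ref{lem:sparse-full-residual-variance} is correct, and it makes explicit a step the paper leaves implicit.
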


\begin{proof}
The proof is identical to the mean-field depth-to-product comparison,
Theorem~\ref{thm:log-depth-product-gap-residual}, with the mean-field Hamiltonian
\(H_{n,p}\) replaced by the sparse Hamiltonian \(H_{n,p}^{(\beta)}\). The
zeroth-order term \(X_0^{(\beta)}\) is bounded by the sparse product-state
optimum \(E_{n,\mathrm{prod}}^{(\beta)}(p)\) by the same multilinearity
argument, while Lemma~\ref{lem:sparse-residual-negligibility} gives
\[
\frac1{\sqrt n}
\sup_{\theta\in\mathsf{T}_{n,r}^{(D_n)}}
|R^{(\beta)}(\theta)|
=
o_n(1)
\]
in probability, uniformly over $r$, under the stated depth condition.
\end{proof}

\begin{proof}[Proof of Theorem~\ref{thm:main-meanfield-gap}, growing-average-degree diluted case]
By Corollary~\ref{cor:sparse-mean-field-transfer-hp}, the mean-field ground-state and product-state high-probability bounds transfer to the growing-average-degree diluted model. In particular, if \(c>0\) is the universal constant from Theorem~\ref{thm:benchmarks}, then for every \(\varepsilon>0\), $E_{n,\mathrm{gs}}^{(\beta)}(p) \ge c3^{p/2}/p-\varepsilon$ with probability tending to one. Also, by the same corollary, for every
\(\alpha>0\), for all sufficiently large fixed $p$, and every
\(\varepsilon>0\), $E_{n,\mathrm{prod}}^{(\beta)}(p)
\le
(1+\alpha)\sqrt{2\log p}+\varepsilon$ with probability tending to one. Therefore, with probability tending to one,
\[
E_{n,\mathrm{gs}}^{(\beta)}(p)
-
E_{n,\mathrm{prod}}^{(\beta)}(p)
\ge
c\frac{3^{p/2}}{p}
-
(1+\alpha)\sqrt{2\log p}
-
2\varepsilon .
\]
Since \(3^{p/2}/p\gg\sqrt{\log p}\) as \(p\to\infty\), there exists
\(p_0\ge3\) such that, for every fixed \(p\ge p_0\), we may choose
\(\alpha>0\) and \(\varepsilon>0\) sufficiently small so that the right-hand
side is bounded below by \(2c_p\) for some \(c_p>0\). Hence $E_{n,\mathrm{gs}}^{(\beta)}(p) - E_{n,\mathrm{prod}}^{(\beta)}(p) \ge 2c_p$ with probability tending to one.

Choose \(\delta_{p,\beta}>0\) small enough that $\delta_{p,\beta} \log 4<\min\{1,p-\beta-1\}$. Since \(m_n = \Theta_p(n^{p-\beta})\), if \(D_n\le \delta_{p,\beta}\log n\), then 
\begin{align*}
& D_n2^{D_n}(\log n+D_n)
\left[
    \frac{2^{D_n}}{n}
    +
    \frac{D_n n2^{D_n}(\log n+D_n)}{m_n}
\right] \\
&\qquad \le
C_{p,\beta}(\log n)^2 n^{\delta_{p,\beta}\log 4-1}
+
C_{p,\beta}(\log n)^4 n^{\delta_{p,\beta}\log 4-(p-\beta-1)}
=
o(1).
\end{align*}
Therefore the hypothesis of Theorem~\ref{thm:sparse-log-depth-product-gap} is satisfied. Hence $E_{n,r}^{(\beta, D_n)}(p)
-
E_{n,\mathrm{prod}}^{(\beta)}(p)
=
o_n(1)$ in probability, uniformly over $r$. Equivalently, for every \(\eta>0\),
\[
\lim_{n\to\infty}
\sup_{r\in\mathbb Z_+}
\P(
E_{n,r}^{(\beta, D_n)}(p)
-
E_{n,\mathrm{prod}}^{(\beta)}(p)
>
\eta )
=0.
\]

Combining the last two estimates yields
\begin{align*}
E_{n,\mathrm{gs}}^{(\beta)}(p)
-
E_{n,r}^{(\beta, D_n)}(p)
&=
[
E_{n,\mathrm{gs}}^{(\beta)}(p)
-
E_{n,\mathrm{prod}}^{(\beta)}(p)
]
-
[
E_{n,r}^{(\beta, D_n)}(p)
-
E_{n,\mathrm{prod}}^{(\beta)}(p)
]  \\
&\ge
2c_p-o_n(1)
\ge
c_p
\end{align*}
with probability tending to one, uniformly over $r$. Therefore
\[
\lim_{n\to\infty}
\sup_{r\in\mathbb Z_+}
\P(
    E_{n,\mathrm{gs}}^{(\beta)}(p)
    -
    E_{n,r}^{(\beta, D_n)}(p)
    < c_p
)
=0.
\]
This proves the theorem.
\end{proof}

\subsection{Bounded-average-degree diluted regime}
\label{ssec:bounded-average-degree-diluted-model}
We now turn to the bounded-average-degree diluted regime with a variable
prefactor, where the Hamiltonian \(H_{n,p}^{(\zeta)}\) is defined in
\eqref{eq:zetaHamiltonian}. For fixed \(\zeta\), $m_{n,\zeta}
=
\frac{\zeta}{p!}n(1+o_n(1))$, so the retained hypergraph contains only linearly many retained interaction terms, while the limit \(\zeta\to\infty\) corresponds to increasing the average degree within this regime.
Throughout this section, \(p\ge2\) and
\(D\ge1\) are fixed, and all statements are uniform over an arbitrary number
\(r=r_n\ge0\) of ancillas.

The proof differs substantially from the growing-average-degree regime. We therefore begin with a brief overview of the argument.
When \(m_{n,\zeta} \gg n\), the residual can be compared directly to its
mean-field analogue by a global sampling argument. At bounded average degree, this is no longer possible. Instead, we exploit locality: by
Lemma~\ref{lem:residual-support-dependency-graph}, only retained hyperedges
meeting the dependency graph can contribute to the residual, while
Lemma~\ref{lem:zeta-good-event} shows that every bounded-degree dependency graph
intersects only \(O_{p,D}(n)\) retained hyperedges. This yields the variance
bound in Lemma~\ref{lem:zeta-residual-variance-dependency}. Together with the
corresponding local entropy estimate of
Lemma~\ref{lem:zeta-local-residual-entropy}, Dudley's entropy integral and the
Borell--TIS inequality give the residual estimate of
Lemma~\ref{lem:zeta-critical-residual-high-prob}. From there, the depth-to-product comparison, together with
Corollary~\ref{cor:sparse-mean-field-transfer-hp}, yields
Theorem~\ref{thm:main-critical-gap}.

\begin{lemma}[Hypergraph estimates]
\label{lem:zeta-good-event}
Fix \(p\ge2\), \(D\ge1\), and \(\zeta\ge1\). There exist constants
\(C_p<\infty\) and \(C_{p,D}<\infty\), depending only on the displayed subscripts and in particular independent of \(n\) and \(\zeta\), such that the following holds.

Let \(\mcE_p\subset\mcI_p^n\) be the retained random set of $p$-tuples, where we recall that each \(I\in\mcI_p^n\) is retained independently with probability
\(\zeta n^{-(p-1)}\). Define a graph \(Q\) on vertex set \([n]\) by declaring
two distinct vertices \(i,j\in[n]\) adjacent if they appear together in at
least one retained hyperedge. That is,
\[
    E(Q)
    :=
    \left\{
    \{i,j\}: i\neq j,\ \exists I\in\mcE_p
    \text{ such that } \{i,j\}\subset I
    \right\}.
\]
For any graph \(\mathsf G\) on \([n]\), define
\[
    \mcB(\mathsf G)
    :=
    \left\{
    I\in\mcE_p:
    \text{ there exists }\{i,j\}\subset I
    \text{ with } \{i,j\}\in E(\mathsf G)
    \right\}.
\]
Equivalently, \(\mcB(\mathsf G)\) is the set of retained hyperedges that are
not independent sets of \(\mathsf G\).

Then, with probability tending to one as \(n\to\infty\), the following
estimates hold simultaneously:
\[
    |\mcE_p|\le C_p\zeta n,
\]
\[
    |\{\mathsf G\subset Q:\Delta(\mathsf G)\le4^D\}|
    \le
    \exp\{C_{p,D}n\log(e+\zeta)\},
\]
and
\[
    \sup_{\mathsf G:\Delta(\mathsf G)\le4^D}
    |\mcB(\mathsf G)|
    \le
    C_{p,D}n.
\]
Moreover, for each fixed \(\zeta\), there exist constants
\(c_{p,D,\zeta}>0\) and \(n_0(p,D,\zeta)<\infty\) such that, for all
\(n\ge n_0(p,D,\zeta)\), the probability that at least one of the three estimates fails is at most \(\exp(-c_{p,D,\zeta}n)\).
\end{lemma}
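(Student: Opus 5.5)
The three estimates will be derived from a small number of structural properties of the sparse random hypergraph $\mcF_p=([n],\mcE_p)$, each holding with probability at least $1-\exp(-c_{p,D,\zeta}n)$ for $n\ge n_0$. The final claim then follows by a union bound over these events.

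\emph{First estimate.} Since $|\mcE_p|$ is a sum of independent Bernoulli variables with mean $m_{n,\zeta}=\frac{\zeta}{p!}n(1+o_n(1))$, a Chernoff bound gives $\P(|\mcE_p|>2m_{n,\zeta})\le\exp(-c\,\zeta n/p!)$. Hence $|\mcE_p|\le C_p\zeta n$ with $C_p=2/p!$ up to constants.

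\emph{Second estimate.} On this event, $Q$ has at most $\binom p2|\mcE_p|\le C_p\zeta n$ edges. A subgraph $\mathsf G\subset Q$ with $\Delta(\mathsf G)\le 4^D$ has at most $4^Dn/2$ edges. The number of such subgraphs is therefore at most
\[
\sum_{k\le 4^Dn/2}\binom{|E(Q)|}{k}\le \Bigl(\frac{e\,C_p\zeta n}{4^Dn/2}\Bigr)^{4^Dn/2}(4^Dn+1)\le\exp\{C_{p,D}n\log(e+\zeta)\}.
\]
So the second estimate is deterministic given the first.

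\emph{Third estimate.} This is the main obstacle: it must be uniform over all bounded-degree $\mathsf G\subset Q$, and its constant must not depend on $\zeta$. Every $I\in\mcB(\mathsf G)$ contains an edge $\{i,j\}$ of $\mathsf G$, so
\[
|\mcB(\mathsf G)|\le\sum_{\{i,j\}\in E(\mathsf G)}\mathrm{codeg}(i,j),\qquad \mathrm{codeg}(i,j):=|\{I\in\mcE_p:\{i,j\}\subset I\}|.
\]
A pair lies in $\binom{n-2}{p-2}$ potential tuples, so $\E\,\mathrm{codeg}(i,j)\asymp\zeta/n$. A naive bound using the maximum codegree, which is $O(\log n/\log\log n)$, loses a logarithmic factor. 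Instead I would bound the sum over all edges of $\mathsf G$ by splitting according to multiplicity.

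\begin{itemize}
\item Each $I\in\mcB(\mathsf G)$ with $\mathrm{codeg}=1$ is determined by the edge of $\mathsf G$ it contains. The count of such $I$ is at most $|E(\mathsf G)|\le 4^Dn/2$.
\item Pairs with $\mathrm{codeg}(i,j)\ge2$ contribute at most $\sum_{\{i,j\}:\,\mathrm{codeg}\ge2}\mathrm{codeg}(i,j)$, where the sum runs over all pairs, independently of $\mathsf G$. Its expectation is $O_p(\zeta^2)$, since the expected number of pairs of retained tuples sharing two vertices is $\lesssim n^2\cdot(n^{p-2}\zeta n^{-(p-1)})^2=\zeta^2$. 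An exponential-moment or bounded-differences bound then shows it is at most $n$ with probability $1-e^{-cn}$.
\end{itemize}

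This bound is uniform in $\mathsf G$ and gives $|\mcB(\mathsf G)|\le(4^D/2+1)n=C_{p,D}n$. The key points are the deterministic decomposition and the concentration of the overlap count, for which I would bound the moment generating function of the number of pairs of retained tuples sharing two or more vertices using independence of the $\xi_I$.
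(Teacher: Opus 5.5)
Your first two estimates are fine. The edge-subset count is a valid alternative to the paper's per-vertex count, with one caveat: when $|E(Q)|<4^Dn/2$, bound the sum by $2^{|E(Q)|}$, since $\sum_{k\le K}\binom Nk\le(eN/K)^K$ needs $K\le N$. Your decomposition $|\mcB(\mathsf G)|\le|E(\mathsf G)|+W$, with $W:=\sum_e\mathrm{codeg}(e)\mathbf 1[\mathrm{codeg}(e)\ge2]$, is also correct and uniform in $\mathsf G$. Since $\E W=O_p(\zeta^2)$, Markov's inequality already gives the ``probability tending to one'' part of the third estimate.

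Your aside on codegrees is wrong, though harmless. Codegrees have mean $\Theta_p(\zeta/n)$, so the maximum codegree is at most $2$ with probability $1-O(\zeta^3/n)$. That alone gives $|\mcB(\mathsf G)|\le 2|E(\mathsf G)|$, but only with polynomial failure probability.

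The gap is the final clause, failure probability at most $e^{-cn}$, which is the only part needing real work. The mechanism you commit to, controlling the moment generating function of the number of pairs of retained tuples sharing at least two vertices, cannot deliver it. That count does not have an $e^{-cn}$ tail at level $n$. Fix a pair of vertices: with probability at least $\exp\{-C_p\sqrt n\log n\}$, some $\lceil 2\sqrt n\,\rceil$ of the tuples through it are retained. This alone produces more than $n$ overlapping pairs. For $p=3$ its exponential moments are even $e^{\Omega(n^2)}$, via the event that all $n-2$ triples through one pair are retained. Plain McDiarmid applied to $W$ fails too: with $\binom np$ coordinates the exponent is of order $n^{2-p}$, which is vacuous.

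The repair stays within your framework. Apply a variance-sensitive bounded-differences inequality, namely Freedman's inequality for the Doob martingale, to $W$ itself.
\begin{itemize}
\item Adding or removing one tuple $I$ shifts the $\binom p2$ codegrees $\mathrm{codeg}(e)$, $e\subset I$, by one.
\item The map $c\mapsto c\mathbf 1[c\ge2]$ has increments in $\{0,1,2\}$, so $W$ changes by at most $p(p-1)$.
\item The martingale increments are $(\xi_I-\kappa)\Delta_I$ with $|\Delta_I|\le p(p-1)$, so their conditional variances sum to at most $p^4\kappa\binom np=O_p(\zeta n)$.
\end{itemize}
This yields $\P(W\ge\E W+n/2)\le\exp\{-c_pn/(1+\zeta)\}$, hence $W\le n$ at the required rate for $n\ge n_0(p,\zeta)$.

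The paper avoids any global concentration. For each fixed $\mathsf G$ with $\Delta(\mathsf G)\le 4^D$, $|\mcB(\mathsf G)|$ is binomial with mean $O_{p,D}(\zeta)$. Its tail at level $An$ is at most $\exp(-\tfrac A2 n\log n)$ for $n\ge n_0(p,D,\zeta)$. This super-exponential decay beats the $\exp(B_Dn\log n)$ count of all graphs on $[n]$ of maximum degree at most $4^D$ once $A=2(B_D+2)$. Your decomposition gets uniformity in $\mathsf G$ for free but needs a nontrivial concentration step. The paper's union bound makes each step elementary.
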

\begin{proof}
First,
\[
    |\mcE_p|=\sum_{I\in\mcI_p^n}\xi_I,
    \qquad
    \E_\xi|\mcE_p|
    =
    \zeta n^{-(p-1)}\binom np
    =
    \frac{\zeta}{p!}n(1+o_n(1)).
\]
Choosing \(C_p\) sufficiently large, Bernstein's inequality gives
constants \(c_p>0\) and \(n_0(p)<\infty\) such that, for all
\(n\ge n_0(p)\),
\[
    \P_\xi\left(|\mcE_p|>C_p\zeta n\right)
    \le
    \exp(-c_p\zeta n).
\]
Let \(\mathsf E_{n,0}\) denote the event on which
\(|\mcE_p|\le C_p\zeta n\). On \(\mathsf E_{n,0}\), we have
\[
    |E(Q)|
    \le
    \binom p2|\mcE_p|
    \le
    C_p\zeta n,
\]
after increasing \(C_p\).

Let \(K:=4^D\). Then, deterministically,
\[
    |\{\mathsf G\subset Q:\Delta(\mathsf G)\le K\}|
    \le
    \prod_{v=1}^n
    \sum_{\ell=0}^K\binom{\deg_Q(v)}{\ell}
    \le
    \exp\left\{
        K\sum_{v=1}^n\log(e+\deg_Q(v))
    \right\}.
\]
By concavity of \(x\mapsto\log(e+x)\),
\[
    \sum_{v=1}^n\log(e+\deg_Q(v))
    \le
    n\log\left(
        e+\frac{2|E(Q)|}{n}
    \right).
\]
Hence, on \(\mathsf E_{n,0}\),
\[
    |\{\mathsf G\subset Q:\Delta(\mathsf G)\le4^D\}|
    \le
    \exp\{C_{p,D}n\log(e+\zeta)\}.
\]

It remains to prove the uniform bound on \(|\mcB(\mathsf G)|\). Fix a
graph \(\mathsf G\) with \(\Delta(\mathsf G)\le4^D\). Then $|E(\mathsf G)|
    \le
    2^{2D-1}n$. Let
\[
    \mcI_p(\mathsf G)
    :=
    \{I\in\mcI_p^n:
      I\text{ contains at least one edge of }\mathsf G\}.
\]
By a union bound over edges of \(\mathsf G\),
\[
    |\mcI_p(\mathsf G)|
    \le
    |E(\mathsf G)|\binom{n-2}{p-2}
    \le
    C_{p,D}n^{p-1}.
\]
It follows that $|\mcB(\mathsf G)|
    =
    \sum_{I\in\mcI_p(\mathsf G)}\xi_I$ is stochastically dominated by a binomial random variable with
parameters \(C_{p,D}n^{p-1}\) and
\(\zeta n^{-(p-1)}\). In particular, $\E|\mcB(\mathsf G)|
    \le
    C_{p,D}\zeta$.

We now make the constants in the union bound explicit. Let
\[
    \mathfrak G_{n,D}
    :=
    \left\{
        ([n],E):
        E\subseteq \binom{[n]}{2},
        \ \Delta([n],E)\le 4^D
    \right\}.
\]
Since every graph in \(\mathfrak G_{n,D}\) has at most
\(2^{2D-1}n\) edges, there exists \(B_D<\infty\) such that
\[
    |\mathfrak G_{n,D}|
    \le
    \sum_{q=0}^{\lfloor 2^{2D-1}n\rfloor}
    \binom{\binom n2}{q}
    \le
    \exp(B_Dn\log n)
\]
for all sufficiently large \(n\).

Set $A_{p,D}:=2(B_D+2)$. For a fixed graph \(\mathsf G\), the standard binomial tail bound gives
\[
\begin{aligned}
    \P_\xi\left(
        |\mcB(\mathsf G)|>A_{p,D}n
    \right)
    &\le
    \left(
        \frac{eB_{p,D}\zeta}
             {A_{p,D}n}
    \right)^{A_{p,D}n}.
\end{aligned}
\]
For each fixed \(\zeta\), there exists
\(n_0(p,D,\zeta)<\infty\) such that, whenever
\(n\ge n_0(p,D,\zeta)\),
\[
    \frac{eB_{p,D}\zeta}{A_{p,D}n}
    \le
    n^{-1/2}.
\]
Consequently,
\[
    \P_\xi\left(
        |\mcB(\mathsf G)|>A_{p,D}n
    \right)
    \le
    \exp\left(
        -\frac{A_{p,D}}2n\log n
    \right).
\]
Taking a union bound over all graphs of maximum degree at most \(4^D\)
gives
\[
    \P_\xi\left(
        \sup_{\mathsf G:\Delta(\mathsf G)\le4^D}
        |\mcB(\mathsf G)|
        >
        A_{p,D}n
    \right)
    \le
    \exp\left(
        B_Dn\log n
        -
        \frac{A_{p,D}}2n\log n
    \right)
    \le
    \exp(-2n\log n),
\]
where the last inequality follows from
\(A_{p,D}/2=B_D+2\). In particular, the probability tends to zero, and is at most \(\exp(-2n)\) for all sufficiently large \(n\).

Finally, on the intersection of \(\mathsf E_{n,0}\) and the event $ \{
        \sup_{\mathsf G:\Delta(\mathsf G)\le4^D}
        |\mcB(\mathsf G)|
        \le
        A_{p,D}n
    \}$, all three desired estimates hold simultaneously. Renaming
\(A_{p,D}\) as \(C_{p,D}\), and increasing constants if necessary,
completes the proof.

Indeed, for each fixed \(\zeta\), the probability that one of the three
estimates fails is at most
\[
    \exp(-c_p\zeta n)+\exp(-2n\log n)
    \le
    \exp(-c_{p,D,\zeta}n)
\]
for all sufficiently large \(n\), for some
\(c_{p,D,\zeta}>0\).
\end{proof}

\begin{lemma}[Variance bound, bounded-average-degree diluted regime]
\label{lem:zeta-residual-variance-dependency}
Fix \(p\ge2\), \(D\ge1\), and \(\zeta\ge1\). Then, for every
\(\varepsilon>0\),
\[
\limsup_{n\to\infty}
\sup_{r\in\mathbb Z_+}
\mathbb P_\xi\left(
\sup_{\theta\in\mathsf{T}_{n,r}^{(D)}}
\operatorname{Var}_g\!\left(R^{(\zeta)}(\theta)\mid \xi\right)
>
\frac{C_{p,D}}{\zeta}+\varepsilon
\right)
=0.
\]
\end{lemma}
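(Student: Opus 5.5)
The plan is to bound the conditional variance deterministically on the good event of Lemma~\ref{lem:zeta-good-event}, with no net or concentration argument over circuits. Conditionally on \(\xi\),
\[
\operatorname{Var}_g\bigl(R^{(\zeta)}(\theta)\mid\xi\bigr)
=
\frac{1}{m_{n,\zeta}}\sum_{I\in\mcE_p}\sum_{a\in\{1,2,3\}^p}|\chi_{I,a}(\theta)|^2 .
\]

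\textbf{Step 1: only boundary hyperedges contribute.} Fix \(\theta=(M_1,\dots,M_D,U)\) and let \(\mathsf G=\mathsf G_D^{\mathrm{dep}}(M_1,\dots,M_D)\). By Lemma~\ref{lem:residual-support-dependency-graph}, \(\chi_{I,a}(\theta)=0\) whenever \(I\) is an independent set of \(\mathsf G\). So only hyperedges in \(\mcB(\mathsf G)\) contribute.

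\textbf{Step 2: reduce to a subgraph of \(Q\).} Whether \(I\in\mcE_p\) is independent in \(\mathsf G\) depends only on edges of \(\mathsf G\) joining two vertices of a common retained hyperedge. These are edges of \(Q\). Hence \(\mcB(\mathsf G)=\mcB(\mathsf G\cap Q)\), where \(\mathsf G\cap Q\) keeps only the edges of \(\mathsf G\) lying in \(E(Q)\). By Lemma~\ref{lem:dependency-degree}, \(\Delta(\mathsf G\cap Q)\le\Delta(\mathsf G)\le 4^D\). This reduction is harmless, since the third estimate of Lemma~\ref{lem:zeta-good-event} is in any case a supremum over all graphs of maximum degree at most \(4^D\).

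\textbf{Step 3: bound the sum.} Each \(|\chi_{I,a}|\le 2\), and there are \(3^p\) labels \(a\). On the good event of Lemma~\ref{lem:zeta-good-event}, this gives
\[
\sup_{\theta}\operatorname{Var}_g\bigl(R^{(\zeta)}(\theta)\mid\xi\bigr)
\le
\frac{4\cdot 3^p}{m_{n,\zeta}}\sup_{\Delta(\mathsf G)\le4^D}|\mcB(\mathsf G)|
\le
\frac{4\cdot3^p\,C_{p,D}\,n}{m_{n,\zeta}} .
\]
Since \(m_{n,\zeta}=\frac{\zeta}{p!}n(1+o_n(1))\), the right-hand side is at most \(C_{p,D}/\zeta+\varepsilon\) for \(n\) large, after enlarging the constant. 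The bound is uniform in \(r\), because the good event does not depend on \(r\) or \(\theta\). The probability of the complement tends to zero, which gives the claim.

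\textbf{Main obstacle.} The key input is that \(\sup_{\Delta(\mathsf G)\le 4^D}|\mcB(\mathsf G)|\le C_{p,D}n\) with \(C_{p,D}\) independent of \(\zeta\). That is exactly what Lemma~\ref{lem:zeta-good-event} provides, so the remaining care is bookkeeping.

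First, the lemma's constant must genuinely not depend on \(\zeta\); it controls the union bound via an \(n\log n\) tail only for \(n\ge n_0(\zeta)\), which is fine because \(\zeta\) is fixed as \(n\to\infty\). Second, one must check that the dependency graph depends only on the matchings, not on the gates. Third, the lemma's statement must cover all graphs on \([n]\) of maximum degree at most \(4^D\), not only subgraphs of \(Q\).
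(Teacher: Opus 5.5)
Your proof is correct and is essentially the paper's argument. You bound each $|\chi_{I,a}|$ by $2$, use Lemma~\ref{lem:residual-support-dependency-graph} and Lemma~\ref{lem:dependency-degree} to restrict the sum to $\mcB(\mathsf G_D^{\mathrm{dep}})$ with $\Delta(\mathsf G_D^{\mathrm{dep}})\le 4^D$, and then apply the $\zeta$-independent bound on $\sup_{\Delta(\mathsf G)\le 4^D}|\mcB(\mathsf G)|$ from Lemma~\ref{lem:zeta-good-event}, on an event depending only on $\xi$. Your Step~2 (intersecting with $Q$) is not needed here; the paper uses that reduction only in the entropy lemma, but it is harmless, as you note.
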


\begin{proof}
Conditioned on \(\xi=(\xi_I, I\in \mcI_p^n) \),
\[
\tvar_g
(R^{(\zeta)}(\theta)\mid\xi )
=
\frac1{m_{n,\zeta}}
\sum_{I\in\mcI_p^n}
\xi_I
\sum_{a\in\{1,2,3\}^p}
|\chi_{I,a}(\theta)|^2 .
\]
For every \(I,a,\theta\), we have
\[
    |\chi_{I,a}(\theta)|
    \le
    |\langle\phi_{\theta}|P_I^a \otimes I_{\mathrm{anc}}|\phi_{\theta} \rangle|
    +
    \big |\prod_{s=1}^p m^{\phi_\theta}_{i_s}(a_s)\big |
    \le 2.
\]
Therefore
\[
    \sum_{a\in\{1,2,3\}^p}
    |\chi_{I,a}(\theta)|^2
    \le
    C_p .
\]
By
Lemma~\ref{lem:residual-support-dependency-graph}, the coefficient
\(\chi_{I,a}(\theta)\) can be nonzero only if $I\in \mcB(\mathsf G_D^{\mathrm{dep}})$, that is, only if \(I\) contains at least one edge of
\(\mathsf G_D^{\mathrm{dep}}\). Hence
\[
\tvar_g
(R^{(\zeta)}(\theta)\mid\xi )
\le
C_p
\frac{
|\mcB(\mathsf G_D^{\mathrm{dep}})|
}{
m_{n,\zeta}
}.
\]
By Lemma~\ref{lem:dependency-degree}, $\Delta(\mathsf G_D^{\mathrm{dep}}) \le 4^D$. Therefore Lemma~\ref{lem:zeta-good-event} implies that, with
probability tending to one,
\[
\sup_{\theta\in\mathsf{T}_{n,r}^{(D)}}
\tvar_g
(R^{(\zeta)}(\theta)\mid\xi )
\le
C_p
\sup_{\mathsf G:\Delta(\mathsf G)\le4^D}
\frac{|\mcB(\mathsf G)|}{m_{n,\zeta}}
\le
\frac{C_{p,D}}{\zeta}
+
o_n(1).
\]
This proves the claim.
\end{proof}

\begin{lemma}[Entropy bound, bounded-average-degree diluted regime]
\label{lem:zeta-local-residual-entropy}
Fix \(p\ge2\), \(D\ge1\), and \(\zeta\ge1\). With probability tending to one
over the retained hypergraph, there is an event on which the following holds:
for every \(r\in\mathbb Z_+\) and every \(\eta>0\),
\[
\log N(\mathsf{T}_{n,r}^{(D)},\mathsf d_\xi,\eta)
\le
C_{p,D} n
\left[
    \log(e+\zeta)
    +
    \log_+\left(\frac{C_{p,D}}{\eta}\right)
\right].
\]
\end{lemma}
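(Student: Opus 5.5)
The bound has no $\log n$ term, so the architecture count from Proposition~\ref{prop:full-residual-entropy} (of order $Dn2^D(\log n+D)$) cannot be reused. My plan is to avoid parametrizing circuits at all. I will cover the \emph{image} of the circuit class in coefficient space, after first localizing it to a bounded number of coordinates using the dependency graph and the hypergraph estimates of Lemma~\ref{lem:zeta-good-event}. Throughout I work on the event of Lemma~\ref{lem:zeta-good-event}, which has probability tending to one and does not depend on $r$. On it, $|\mcE_p|\le C_p\zeta n$, the number of subgraphs of $Q$ with maximum degree at most $4^D$ is at most $\exp\{C_{p,D}n\log(e+\zeta)\}$, and $\sup_{\Delta(\mathsf G)\le 4^D}|\mcB(\mathsf G)|\le C_{p,D}n$.

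\emph{Step 1: localization.} For $\theta\in\mathsf T^{(D)}_{n,r}$, let $\mathsf G_\theta:=\mathsf G_D^{\mathrm{dep}}\cap Q$ be the subgraph of the dependency graph keeping only edges of $Q$. By Lemma~\ref{lem:dependency-degree}, $\Delta(\mathsf G_\theta)\le 4^D$. Suppose $\xi_I=1$ and $\chi_{I,a}(\theta)\neq0$. Then by Lemma~\ref{lem:residual-support-dependency-graph}, $I$ contains an edge $\{i,j\}$ of $\mathsf G_D^{\mathrm{dep}}$. Since $I\in\mcE_p$, this edge also lies in $Q$, hence in $\mathsf G_\theta$, so $I\in\mcB(\mathsf G_\theta)$. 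I then partition $\mathsf T^{(D)}_{n,r}$ into classes $\mathsf T_{\mathsf G}:=\{\theta:\mathsf G_\theta=\mathsf G\}$, indexed by subgraphs $\mathsf G\subset Q$ with $\Delta(\mathsf G)\le4^D$. On a fixed class, the conditional metric becomes
\[
\mathsf d_\xi(\theta,\theta')^2=\frac1{m_{n,\zeta}}\sum_{I\in\mcB(\mathsf G)}\sum_{a\in\{1,2,3\}^p}|\chi_{I,a}(\theta)-\chi_{I,a}(\theta')|^2 .
\]

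\emph{Step 2: covering one class.} Define the map $\Phi_{\mathsf G}(\theta):=m_{n,\zeta}^{-1/2}(\chi_{I,a}(\theta))_{I\in\mcB(\mathsf G),\,a}$. This map is an isometry from $(\mathsf T_{\mathsf G},\mathsf d_\xi)$ into Euclidean space of dimension $d\le 3^p|\mcB(\mathsf G)|\le C_{p,D}n$. Since $|\chi_{I,a}|\le2$, the image lies in a Euclidean ball of radius
\[
2\sqrt{3^p C_{p,D}n/m_{n,\zeta}}\le C_{p,D}/\sqrt\zeta\le C_{p,D}.
\]
The standard volumetric bound then gives an $\eta/2$-net of that ball of size $(1+C_{p,D}/\eta)^{d}$. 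Passing to a net made of image points costs only a factor of $2$ in the radius, so $\log N(\mathsf T_{\mathsf G},\mathsf d_\xi,\eta)\le C_{p,D}n\log_+(C_{p,D}/\eta)$, with the bound trivially $0$ when $\eta$ exceeds the diameter. Nothing in this step involves $r$ or the circuit parameters, which is what makes the estimate ancilla-uniform for free.

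\emph{Step 3: union over classes.} A net for the whole class is the union of the nets over all admissible $\mathsf G$. Adding the logarithm of the number of classes from Lemma~\ref{lem:zeta-good-event} gives
\[
\log N(\mathsf T^{(D)}_{n,r},\mathsf d_\xi,\eta)\le C_{p,D}n\log(e+\zeta)+C_{p,D}n\log_+(C_{p,D}/\eta),
\]
which is the claim.

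The main obstacle is Step 1. The point to get right is that the class label must be a subgraph of $Q$, not the full dependency graph. Counting all graphs of maximum degree $4^D$ on $[n]$ would cost about $n\log n$, exactly the loss to be avoided. Intersecting with $Q$ is legitimate precisely because only retained hyperedges enter $\mathsf d_\xi$, and every such hyperedge's pairs lie in $Q$. A secondary point is that the bound holds on a fixed $\xi$-event simultaneously for all $r$; this is automatic, because the event of Lemma~\ref{lem:zeta-good-event} concerns only the hypergraph.
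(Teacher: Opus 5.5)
Your proposal is correct and follows the paper's proof essentially step for step. Both arguments partition the circuit class by the $Q$-restricted dependency graph $\mathsf G_D^{\mathrm{dep}}\cap Q$, and both note that on each class $\mathsf d_\xi$ is exactly a scaled Euclidean distance between active coefficient vectors indexed by $\mcB(\mathsf G)$, of dimension at most $C_{p,D}n$. Both then cover that bounded image set and take a union over the at most $\exp\{C_{p,D}n\log(e+\zeta)\}$ classes from Lemma~\ref{lem:zeta-good-event}. The only cosmetic difference is that you cover a Euclidean ball of radius $O(\zeta^{-1/2})$ volumetrically, whereas the paper grids the cube $[-2,2]^{k_{\mathsf G}}$.
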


\begin{proof}
We work on a realization of the retained hypergraph for which the three estimates
in Lemma~\ref{lem:zeta-good-event} hold. This occurs with probability tending
to one. 

For an architecture \(M_1,\dots,M_D\), let
\[
    \mathsf G_D^{\mathrm{dep}}
    =
    \mathsf G_D^{\mathrm{dep}}(M_1,\dots,M_D)
\]
be the associated dependency graph. We define the \(Q\)-restricted dependency graph associated with this
architecture by $\mathsf G_Q
    :=
    \mathsf G_D^{\mathrm{dep}}\cap Q$, where the intersection is taken at the level of edge sets, so that $E(\mathsf G_Q)
    =
    E(\mathsf G_D^{\mathrm{dep}})\cap E(Q)$. The set of retained hyperedges on which the residual coefficients may be nonzero depends only on this \(Q\)-restricted dependency graph. Indeed, if a retained hyperedge contains a
pair \(\{i,j\}\in E(\mathsf G_D^{\mathrm{dep}})\), then, by the definition of
\(Q\), the same pair also belongs to \(E(Q)\). Hence
\[
    \mcB(\mathsf G_D^{\mathrm{dep}})
    =
    \mcB(\mathsf G_D^{\mathrm{dep}}\cap Q)
    =
    \mcB(\mathsf G_Q).
\]
Moreover, by Lemma~\ref{lem:dependency-degree},
\(\Delta(\mathsf G_D^{\mathrm{dep}})\le4^D\), and therefore $\Delta(\mathsf G_Q)\le4^D$. 

Thus the possible \(Q\)-restricted dependency graphs are contained in the
class $\{\mathsf G\subset Q:\Delta(\mathsf G)\le4^D\}$.
By Lemma~\ref{lem:zeta-good-event}, the number of possible \(Q\)-restricted dependency graphs is at most $\exp\{C_{p,D}n\log(e+\zeta)\}$.

Fix one graph \(\mathsf G\subset Q\) with \(\Delta(\mathsf G)\le4^D\). Again by Lemma~\ref{lem:zeta-good-event}, $|\mcB(\mathsf G)|\le C_{p,D}n$. For each circuit parameter \(\theta\) whose \(Q\)-restricted dependency graph is \(\mathsf G\), define its active residual coefficient vector by
\[
    r_{\mathsf G}(\theta)
    :=
    \bigl(\chi_{I,a}(\theta)\bigr)_
    {I\in\mcB(\mathsf G),a\in\{1,2,3\}^p}.
\]
By Lemma~\ref{lem:residual-support-dependency-graph}, all inactive coefficients vanish, that is, if \(I\notin\mcB(\mathsf G)\), then
\(\chi_{I,a}(\theta)=0\) for every \(a\). Also \(|\chi_{I,a}(\theta)|\le2\), so
\[
    r_{\mathsf G}(\theta)\in[-2,2]^{k_{\mathsf G}},
    \qquad
    k_{\mathsf G}:=3^p|\mcB(\mathsf G)|\le C_{p,D}n .
\]

We cover this cube in the scaled Euclidean metric
\[
    d_{\mathsf G}(x,y)^2
    :=
    \frac1{m_{n,\zeta}}
    \sum_{I\in\mcB(\mathsf G)}
    \sum_{a\in\{1,2,3\}^p}
    |x_{I,a}-y_{I,a}|^2 .
\]
Since \(m_{n,\zeta} =\Theta( \zeta n) \), \(\zeta\ge1\), and
\(k_{\mathsf G}\le C_{p,D}n\), the standard Euclidean cube covering bound gives a grid of cardinality at most
\[
    \left(\frac{C_{p,D}}{\eta}\right)^{k_{\mathsf G}}
    \le
    \exp\left\{
        C_{p,D}n
        \log_+\left(\frac{C_{p,D}}{\eta}\right)
    \right\}.
\]
Every active residual coefficient vector is within \(d_{\mathsf G}\)-distance \(\eta\) of some grid point. Keeping only the grid cells that intersect the set of realizable residual
vectors, and choosing one circuit parameter from each such cell, gives a
\(2\eta\)-net for the corresponding circuit parameters in the canonical metric.

Recall that the conditional canonical metric of the Gaussian process
\(R^{(\zeta)}\), conditioned on \(\xi\), is
\[
    \mathsf{d}_\xi(\theta,\theta')^2
    :=
    \operatorname{Var}_g
    \left(
        R^{(\zeta)}(\theta)
        -
        R^{(\zeta)}(\theta')
        \big| \xi
    \right).
\]
Since the Gaussian coefficients \(g_{I,a}\) are independent, this equals
\[
    \mathsf{d}_\xi(\theta,\theta')^2
    =
    \frac1{m_{n,\zeta}}
    \sum_{I\in\mcI_p^n}
    \xi_I
    \sum_{a\in\{1,2,3\}^p}
    |\chi_{I,a}(\theta)-\chi_{I,a}(\theta')|^2 .
\]
Now suppose that \(\theta\) and \(\theta'\) have the same \(Q\)-restricted dependency graph \(\mathsf G\). By Lemma~\ref{lem:residual-support-dependency-graph},
all residual coefficients outside \(\mcB(\mathsf G)\) vanish for both
parameters. Therefore the preceding display reduces to
\[
    \mathsf{d}_\xi(\theta,\theta')^2
    =
    \frac1{m_{n,\zeta}}
    \sum_{I\in\mcB(\mathsf G)}
    \sum_{a\in\{1,2,3\}^p}
    |\chi_{I,a}(\theta)-\chi_{I,a}(\theta')|^2 .
\]
Thus, on a fixed  \(Q\)-restricted dependency graph \(\mathsf G\), the conditional canonical metric is exactly the scaled Euclidean
distance between the active residual coefficient vectors. Consequently the
preceding cube net gives an \(O(\eta)\)-cover for all circuit parameters with
fixed \(Q\)-restricted dependency graph \(\mathsf G\).

Finally, taking the union over all possible  \(Q\)-restricted dependency graphs gives
\[
\log N(\mathsf{T}_{n,r}^{(D)},\mathsf{d}_\xi,\eta)
\le
C_{p,D} n
\left[
    \log(e+\zeta)
    +
    \log_+\left(\frac{C_{p,D}}{\eta}\right)
\right],
\]
after adjusting constants. For \(\eta>1\), the same bound follows by monotonicity of covering numbers
from the case \(\eta=1\), after increasing \(C_{p,D}\).
\end{proof}

\begin{lemma}[High-probability bound, bounded-average-degree diluted regime]
\label{lem:zeta-critical-residual-high-prob}
Fix \(p\ge2\), \(D\ge1\), and \(\zeta\ge1\). Then, for every
\(\varepsilon>0\),
\[
\limsup_{n\to\infty}
\sup_{r\in\mathbb Z_+}
\mathbb P\left(
\frac1{\sqrt n}
\sup_{\theta\in\mathsf{T}_{n,r}^{(D)}}
|R^{(\zeta)}(\theta)|
>
C_{p,D}\sqrt{\frac{\log(e+\zeta)}{\zeta}}
+\varepsilon
\right)
=0.
\]
\end{lemma}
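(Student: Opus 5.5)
We follow the same Gaussian-process route as in the mean-field case (Lemma~\ref{lem:expected-sup-residual} and Proposition~\ref{prop:circuit-residual-high-prob}), now conditioning on the sparsification variables \(\xi\). Fix \(r\), and let \(\mathsf E_n\) be the event on which three things hold: the three estimates of Lemma~\ref{lem:zeta-good-event}, the variance bound of Lemma~\ref{lem:zeta-residual-variance-dependency} at some fixed level \(\varepsilon_0\), and the entropy bound of Lemma~\ref{lem:zeta-local-residual-entropy}. The conclusions of these lemmas are uniform in \(r\), because they depend only on \(\xi\) and on the degree bound \(\Delta(\mathsf G_D^{\mathrm{dep}})\le 4^D\), which holds for every \(r\). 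Hence \(\sup_r\P_\xi(\mathsf E_n^c)\to0\). On \(\mathsf E_n\), conditionally on \(\xi\), the process \(R^{(\zeta)}\) is a centered separable Gaussian process in \(g\). Its variance proxy satisfies
\[
\sigma^2:=\sup_\theta\operatorname{Var}_g(R^{(\zeta)}(\theta)\mid\xi)\le C_{p,D}/\zeta+\varepsilon_0.
\]
We handle the absolute value by the usual symmetrization \(\mathsf T\times\{\pm1\}\).

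The first step is Dudley's bound on the conditional expectation. The diameter in \(\mathsf d_\xi\) is at most \(2\sigma\). By Lemma~\ref{lem:zeta-local-residual-entropy},
\[
\E_g\Bigl[\sup_\theta|R^{(\zeta)}(\theta)|\,\Big|\,\xi\Bigr]\le C\int_0^{2\sigma}\sqrt{C_{p,D}n\bigl[\log(e+\zeta)+\log_+(C_{p,D}/\eta)\bigr]}\,d\eta .
\]
We split the integrand using \(\sqrt{a+b}\le\sqrt a+\sqrt b\). The first piece contributes \(C\sigma\sqrt{C_{p,D}n\log(e+\zeta)}\). The second piece contributes \(\sqrt{C_{p,D}n}\int_0^{2\sigma}\sqrt{\log_+(C_{p,D}/\eta)}\,d\eta\). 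Since \(\sigma\) is bounded by a constant (take \(\varepsilon_0\le1\), \(\zeta\ge1\)), this integral is at most \(C_{p,D}\sigma\sqrt{\log(C_{p,D}/\sigma)}\) up to constants.

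This second piece is the main obstacle. The elementary bound \(\sigma\sqrt{\log(1/\sigma)}\) carries an extra \(\sqrt{\log\zeta}\) factor when \(\sigma\asymp\zeta^{-1/2}\). That factor is harmless, since \(\log(C/\sigma)\lesssim \log(e+\zeta)+\log(1/\varepsilon_0)\), so it merges with the first piece after choosing \(\varepsilon_0\) comparable to \(1/\zeta\), or more simply after noting that \(\sigma\ge\) something is not needed and bounding \(\sigma\sqrt{\log_+(C/\sigma)}\) monotonically at \(\sigma_0=\sqrt{C_{p,D}/\zeta+\varepsilon_0}\). Dividing by \(\sqrt n\) then gives, on \(\mathsf E_n\),
\[
\tfrac1{\sqrt n}\E_g\Bigl[\sup_\theta|R^{(\zeta)}|\,\Big|\,\xi\Bigr]\le C_{p,D}\sqrt{(1/\zeta+\varepsilon_0)\log(e+\zeta)}+C_{p,D}\sqrt{\varepsilon_0\log(1/\varepsilon_0)}.
\]
Given \(\varepsilon\), we choose \(\varepsilon_0\) so small that the \(\varepsilon_0\)-dependent terms are below \(\varepsilon/2\).

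The second step is concentration. We apply Borell--TIS conditionally with \(u=\varepsilon\sqrt n/2\). The deviation probability is at most \(2\exp(-\varepsilon^2 n/(8\sigma^2))\), which tends to zero because \(\sigma^2\) is bounded. Combining the two steps,
\[
\P\Bigl(\tfrac1{\sqrt n}\sup_\theta|R^{(\zeta)}|>C_{p,D}\sqrt{\log(e+\zeta)/\zeta}+\varepsilon\Bigr)\le\sup_r\P_\xi(\mathsf E_n^c)+2e^{-c\varepsilon^2 n}.
\]
This bound is uniform in \(r\), and taking \(\limsup_n\) finishes the proof.
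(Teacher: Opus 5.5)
Your proposal is correct and follows essentially the paper's argument: condition on an $r$-uniform good event for $\xi$ on which the variance and entropy lemmas hold, bound the conditional mean by Dudley's entropy integral, and conclude with a conditional Borell--TIS bound. The one cosmetic difference is the slack in the variance lemma. The paper takes it equal to $\zeta^{-1}$, so $\sigma\lesssim\zeta^{-1/2}$ and the $\log(e+C/\sigma)$ term is absorbed into $\log(e+\zeta)$ immediately; you carry a free slack $\varepsilon_0$ and tune it afterwards, which also works because $\zeta$ is fixed.
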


\begin{proof}
Let \(A_{p,D}\) be the constant from
Lemma~\ref{lem:zeta-residual-variance-dependency}. Applying that lemma with
slack \(\zeta^{-1}\), we obtain
\[
\limsup_{n\to\infty}
\sup_{r\in\mathbb Z_+}
\P_\xi\left(
\sup_{\theta\in\mathsf T_{n,r}^{(D)}}
\operatorname{Var}_g\!\left(R^{(\zeta)}(\theta)\mid \xi\right)
>
\frac{A_{p,D}+1}{\zeta}
\right)
=0.
\]
Thus, after replacing \(A_{p,D}+1\) by \(C_{p,D}\), the event
\[
\mathsf E_{n,r}^{\mathrm{var}}
:=
\left\{
\sup_{\theta\in\mathsf T_{n,r}^{(D)}}
\operatorname{Var}_g\!\left(R^{(\zeta)}(\theta)\mid \xi\right)
\le
\frac{C_{p,D}}{\zeta}
\right\}
\]
satisfies $\limsup_{n\to\infty}
\sup_{r\in\mathbb Z_+}
\P_\xi\bigl((\mathsf E_{n,r}^{\mathrm{var}})^c\bigr)=0$.

Let \(\mathsf E_n^{\mathrm{ent}}\) be the event from
Lemma~\ref{lem:zeta-local-residual-entropy}. On this event, for every
\(r\in\mathbb Z_+\) and every \(\eta>0\),
\[
\log N(\mathsf T_{n,r}^{(D)},\mathsf d_\xi,\eta)
\le
C_{p,D} n
\left[
    \log(e+\zeta)
    +
    \log_+\left(\frac{C_{p,D}}{\eta}\right)
\right],
\]
and \(\P_\xi(\mathsf E_n^{\mathrm{ent}})\to1\). Set $\mathsf E_{n,r}:=\mathsf E_n^{\mathrm{ent}}
    \cap \mathsf E_{n,r}^{\mathrm{var}}$. Then $\limsup_{n\to\infty}
\sup_{r\in\mathbb Z_+}
\P_\xi(\mathsf E_{n,r}^c)=0$.

Fix \(r\in\mathbb Z_+\), and condition on a realization
\(\xi\in\mathsf E_{n,r}\). Let \(\mathsf d_\xi\) be the conditional canonical
metric of \(R^{(\zeta)}\). On \(\mathsf E_{n,r}\),
\[
\operatorname{diam}(\mathsf T_{n,r}^{(D)},\mathsf d_\xi)
\le
2
\left(
\sup_{\theta\in\mathsf T_{n,r}^{(D)}}
\operatorname{Var}_g(R^{(\zeta)}(\theta)\mid \xi)
\right)^{1/2}
\le
C_{p,D}\zeta^{-1/2}.
\]

We first bound the conditional mean. The absolute value in the supremum is
handled by symmetrizing the index set, which only changes the entropy by an
additive constant. Dudley's entropy integral gives
\[
\E_g\big[
\sup_{\theta\in\mathsf T_{n,r}^{(D)}}
|R^{(\zeta)}(\theta)|
\,\big|\,\xi
\big]
\le
C
\int_0^{C_{p,D}\zeta^{-1/2}}
\sqrt{
C_{p,D} n
\left[
    \log(e+\zeta)
    +
    \log_+\left(\frac{C_{p,D}}{\eta}\right)
\right]
}
\,d\eta .
\]
Using
\[
    \int_0^\sigma
    \sqrt{A+\log_+(C/\eta)}\,d\eta
    \le
    C'\sigma\sqrt{A+\log(e+C/\sigma)+1},
\]
with \(\sigma=C_{p,D}\zeta^{-1/2}\) and \(A=\log(e+\zeta)\), we obtain
\[
\E_g\big[
\frac1{\sqrt n}
\sup_{\theta\in\mathsf T_{n,r}^{(D)}}
|R^{(\zeta)}(\theta)|
\,\big|\,\xi
\big]
\le
C_{p,D}
\sqrt{\frac{\log(e+\zeta)}{\zeta}}.
\]

Now define $Z_\theta:=\frac1{\sqrt n}R^{(\zeta)}(\theta)$. On \(\mathsf E_{n,r}\),
\[
\sup_{\theta\in\mathsf T_{n,r}^{(D)}}
\operatorname{Var}_g(Z_\theta\mid \xi)
\le
\frac{C_{p,D}}{\zeta n}.
\]
By Borell--TIS, conditionally on \(\xi\in\mathsf E_{n,r}\), for every
\(t>0\),
\[
\P_g\left(
\frac1{\sqrt n}
\sup_{\theta\in\mathsf T_{n,r}^{(D)}}
|R^{(\zeta)}(\theta)|
>
C_{p,D}
\sqrt{\frac{\log(e+\zeta)}{\zeta}}
+
t
\,\big|\,\xi
\right)
\le
2\exp\left(-c_{p,D}\zeta n t^2\right).
\]
Taking \(t=\varepsilon\), we get
\[
\P_g\left(
\frac1{\sqrt n}
\sup_{\theta\in\mathsf T_{n,r}^{(D)}}
|R^{(\zeta)}(\theta)|
>
C_{p,D}
\sqrt{\frac{\log(e+\zeta)}{\zeta}}
+
\varepsilon
\,\big|\,\xi
\right)
\le
2\exp\left(-c_{p,D}\zeta n\varepsilon^2\right).
\]

Therefore, for every \(r\in\mathbb Z_+\),
\[
\P_{\xi,g}\left(
\frac1{\sqrt n}
\sup_{\theta\in\mathsf T_{n,r}^{(D)}}
|R^{(\zeta)}(\theta)|
>
C_{p,D}
\sqrt{\frac{\log(e+\zeta)}{\zeta}}
+
\varepsilon
\right) 
\le
\P_\xi(\mathsf E_{n,r}^c)
+
2\exp\left(-c_{p,D}\zeta n\varepsilon^2\right).
\]
Taking the supremum over \(r\in\mathbb Z_+\) and then \(\limsup_{n\to\infty}\)
gives
\[
\limsup_{n\to\infty}
\sup_{r\in\mathbb Z_+}
\P_{\xi,g}\left(
\frac1{\sqrt n}
\sup_{\theta\in\mathsf T_{n,r}^{(D)}}
|R^{(\zeta)}(\theta)|
>
C_{p,D}
\sqrt{\frac{\log(e+\zeta)}{\zeta}}
+
\varepsilon
\right)
=0.
\]
This proves the claim.
\end{proof}

\begin{theorem}[Product-depth $D$ gap, bounded-average-degree diluted regime]
\label{thm:iterated-zeta-depth-product-comparison-hp}
Fix \(p\ge2\) and \(D\ge1\). Then, for every \(\varepsilon>0\),
\[
\lim_{\zeta\to\infty}
\limsup_{n\to\infty}
\sup_{ r \in \Z_+}
\P (
E_{n,r}^{(\zeta,D)}(p)
-
E_{n,\mathrm{prod}}^{(\zeta)}(p)
>
\varepsilon
)
=0.
\]
\end{theorem}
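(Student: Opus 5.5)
We follow the template of Theorem~\ref{thm:log-depth-product-gap-residual} and Theorem~\ref{thm:sparse-log-depth-product-gap}. The argument decomposes the energy, controls the product-like part deterministically, and controls the residual with Lemma~\ref{lem:zeta-critical-residual-high-prob}.

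For each \(\theta\in\mathsf T_{n,r}^{(D)}\) we write
\[
\langle\phi_\theta|H_{n,p}^{(\zeta)}\otimes I_{\mathrm{anc}}|\phi_\theta\rangle
=X_0^{(\zeta)}(\theta)+R^{(\zeta)}(\theta).
\]
Here \(X_0^{(\zeta)}\) is obtained from \(X_0\) by inserting the mask \(\xi_I\) and the normalization \(m_{n,\zeta}^{-1/2}\). The multilinearity argument from the proof of Theorem~\ref{thm:log-depth-product-gap-residual} applies verbatim, since \(X_0^{(\zeta)}\) is multilinear in the Bloch vectors \(x_i\in\B_2^3\) for every realization of \((\xi,g)\). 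Hence
\[
\frac1{\sqrt n}\sup_\theta X_0^{(\zeta)}(\theta)\le E_{n,\mathrm{prod}}^{(\zeta)}(p)
\]
holds deterministically, for every \(n\), \(r\), \(D\) and \(\zeta\). Consequently
\[
E_{n,r}^{(\zeta,D)}(p)-E_{n,\mathrm{prod}}^{(\zeta)}(p)\le \frac1{\sqrt n}\sup_{\theta\in\mathsf T_{n,r}^{(D)}}|R^{(\zeta)}(\theta)|
\]
pointwise. The same construction as before, a first layer preparing arbitrary one-qubit physical states, also gives the lower bound \(E_{n,r}^{(\zeta,D)}(p)\ge E_{n,\mathrm{prod}}^{(\zeta)}(p)\), although this is not needed for the statement.

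Now fix \(\varepsilon>0\). Since \(\log(e+\zeta)/\zeta\to0\), there is \(\zeta_1=\zeta_1(p,D,\varepsilon)\ge1\) such that
\[
C_{p,D}\sqrt{\log(e+\zeta)/\zeta}\le\varepsilon/2\qquad\text{for all }\zeta\ge\zeta_1,
\]
where \(C_{p,D}\) is the constant of Lemma~\ref{lem:zeta-critical-residual-high-prob}. That lemma is independent of \(\zeta\) and \(n\). For such \(\zeta\) we apply it with slack \(\varepsilon/2\) and obtain
\[
\limsup_{n\to\infty}\sup_{r}\P\Big(\tfrac1{\sqrt n}\sup_\theta|R^{(\zeta)}(\theta)|>\varepsilon\Big)=0.
\]
By the pointwise comparison above, the inner \(\limsup_n\sup_r\) of the theorem's probability therefore vanishes for every \(\zeta\ge\zeta_1\), and the outer limit as \(\zeta\to\infty\) is zero.

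The only delicate point is the order of quantifiers. We must check that the constant \(C_{p,D}\) in Lemma~\ref{lem:zeta-critical-residual-high-prob} does not secretly depend on \(\zeta\), so that the \(\zeta\)-dependence is carried entirely by \(\sqrt{\log(e+\zeta)/\zeta}\). This holds because Lemmas~\ref{lem:zeta-good-event}, \ref{lem:zeta-residual-variance-dependency} and~\ref{lem:zeta-local-residual-entropy} were stated with constants depending only on \((p,D)\); the \(\zeta\)-dependence enters only through \(n_0(p,D,\zeta)\), which the limit \(n\to\infty\) absorbs. We also use that every event in the comparison is \(r\)-uniform, which those lemmas already guarantee.
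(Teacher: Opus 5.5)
Your proposal is correct and follows the paper's proof essentially step for step. You use the decomposition into $X_0^{(\zeta)}+R^{(\zeta)}$ and the deterministic multilinearity bound $\frac{1}{\sqrt n}\sup_\theta X_0^{(\zeta)}(\theta)\le E_{n,\mathrm{prod}}^{(\zeta)}(p)$, which gives the pointwise comparison with $\frac{1}{\sqrt n}\sup_\theta|R^{(\zeta)}(\theta)|$, and then apply Lemma~\ref{lem:zeta-critical-residual-high-prob} with $\zeta$ large enough that $C_{p,D}\sqrt{\log(e+\zeta)/\zeta}\le\varepsilon/2$ and slack $\varepsilon/2$. Your check that $C_{p,D}$ has no hidden $\zeta$-dependence (only $n_0(p,D,\zeta)$ does) is what justifies this order of limits, and it matches the paper's convention for subscripted constants.
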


\begin{proof}
The lower bound $E_{n,r}^{(\zeta,D)}(p) \ge E_{n,\mathrm{prod}}^{(\zeta)}(p)$ holds because the depth-\(D\) circuit class contains all product states.

For the reverse bound, let \(\theta\in\mathsf{T}_{n,r}^{(D)}\), with
\(\phi_\theta=U|0^{n+r}\rangle\). By the residual decomposition,
\[
\langle\phi_\theta|H_{n,p}^{(\zeta)} \otimes I_{\mathrm{anc}}|\phi_\theta\rangle
=
X_0^{(\zeta)}(\theta)
+
R^{(\zeta)}(\theta).
\]
Taking suprema over \(\theta\in\mathsf{T}_{n,r}^{(D)}\) gives
\[
E_{n,r}^{(\zeta,D)}(p)
\le
\frac1{\sqrt n}
\sup_{\theta\in\mathsf{T}_{n,r}^{(D)}}X_0^{(\zeta)}(\theta)
+
\frac1{\sqrt n}
\sup_{\theta\in\mathsf{T}_{n,r}^{(D)}}|R^{(\zeta)}(\theta)|.
\]
Arguing exactly as in the mean-field case proof in Theorem~\ref{thm:log-depth-product-gap-residual}, the zeroth-order term is bounded by the product-state optimum, 
\[
\frac1{\sqrt n}
\sup_{\theta\in\mathsf{T}_{n,r}^{(D)}}X_0^{(\zeta)}(\theta)
\le
E_{n,\mathrm{prod}}^{(\zeta)}(p).
\]
Consequently,
\[
0\le
E_{n,r}^{(\zeta,D)}(p)
-
E_{n,\mathrm{prod}}^{(\zeta)}(p)
\le
\frac1{\sqrt n}
\sup_{\theta\in\mathsf{T}_{n,r}^{(D)}}|R^{(\zeta)}(\theta)|.
\]
By Lemma~\ref{lem:zeta-critical-residual-high-prob}, for every \(\gamma>0\),
\[
\limsup_{n\to\infty}
\sup_{r\in\mathbb Z_+}
\P\left(
\frac1{\sqrt n}
\sup_{\theta\in\mathsf T_{n,r}^{(D)}}
|R^{(\zeta)}(\theta)|
>
C_{p,D}\sqrt{\frac{\log(e+\zeta)}{\zeta}}
+\gamma
\right)
=0.
\]
Using
\[
0\le
E_{n,r}^{(\zeta,D)}(p)
-
E_{n,\mathrm{prod}}^{(\zeta)}(p)
\le
\frac1{\sqrt n}
\sup_{\theta\in\mathsf T_{n,r}^{(D)}}
|R^{(\zeta)}(\theta)|,
\]
we obtain, for every \(\gamma>0\),
\[
\limsup_{n\to\infty}
\sup_{r\in\mathbb Z_+}
\P\left(
E_{n,r}^{(\zeta,D)}(p)
-
E_{n,\mathrm{prod}}^{(\zeta)}(p)
>
C_{p,D}\sqrt{\frac{\log(e+\zeta)}{\zeta}}
+\gamma
\right)
=0.
\]
Now fix \(\varepsilon>0\). Choose \(\zeta\) sufficiently large that $C_{p,D}\sqrt{\frac{\log(e+\zeta)}{\zeta}}
< \frac{\varepsilon}{2}$.
Taking \(\gamma=\varepsilon/2\), we get
\[
\limsup_{n\to\infty}
\sup_{r\in\mathbb Z_+}
\P\left(
E_{n,r}^{(\zeta,D)}(p)
-
E_{n,\mathrm{prod}}^{(\zeta)}(p)
>
\varepsilon
\right)
=0
\]
for all sufficiently large \(\zeta\). Therefore
\[
\lim_{\zeta\to\infty}
\limsup_{n\to\infty}
\sup_{r\in\mathbb Z_+}
\P\left(
E_{n,r}^{(\zeta,D)}(p)
-
E_{n,\mathrm{prod}}^{(\zeta)}(p)
>
\varepsilon
\right)
=0.
\]
\end{proof}

\begin{proof}[Proof of Theorem~\ref{thm:main-critical-gap}]
Let \(c>0\) be the universal constant from Theorem~\ref{thm:benchmarks}. By
Corollary~\ref{cor:sparse-mean-field-transfer-hp}, for every \(\varepsilon>0\),
\[
\lim_{\zeta\to\infty}
\limsup_{n\to\infty}
\P\left(
    E_{n,\mathrm{gs}}^{(\zeta)}(p)
    <
    c\frac{3^{p/2}}{p}-\varepsilon
\right)
=0.
\]
Also, by the same corollary, for every \(\delta>0\), for all sufficiently
large fixed $p$, and every \(\varepsilon>0\),
\[
\lim_{\zeta\to\infty}
\limsup_{n\to\infty}
\P(
    E_{n,\mathrm{prod}}^{(\zeta)}(p)
    >
    (1+\delta)\sqrt{2\log p}+\varepsilon
)
=0.
\]

By Theorem~\ref{thm:iterated-zeta-depth-product-comparison-hp}, for every
fixed \(D\ge1\) and every \(\varepsilon>0\),
\[
\lim_{\zeta\to\infty}
\limsup_{n\to\infty}
\sup_{r\in\mathbb Z_+}
\P(
    E_{n,r}^{(\zeta,D)}(p)
    >
    E_{n,\mathrm{prod}}^{(\zeta)}(p)+\varepsilon)
=0.
\]
Combining the last two estimates by a union bound gives
\[
\lim_{\zeta\to\infty}
\limsup_{n\to\infty}
\sup_{r\in\mathbb Z_+}
\P(
    E_{n,r}^{(\zeta,D)}(p)
    >
    (1+\delta)\sqrt{2\log p}+2\varepsilon )
=0.
\]

Therefore, another union bound gives
\[
\lim_{\zeta\to\infty}
\limsup_{n\to\infty}
\sup_{r\in\mathbb Z_+}
\P\left(
    E_{n,\mathrm{gs}}^{(\zeta)}(p)
    -
    E_{n,r}^{(\zeta,D)}(p)
    <
    c\frac{3^{p/2}}{p}
    -
    (1+\delta)\sqrt{2\log p}
    -
    3\varepsilon
\right)
=0.
\]
Since $3^{p/2}/p\gg \sqrt{\log p}$ as $p\to\infty$, there exists \(p_0\ge3\) such that, for every fixed \(p\ge p_0\), we may
choose \(\delta>0\) and \(\varepsilon>0\) sufficiently small so that
\[
    c\frac{3^{p/2}}{p}
    -
    (1+\delta)\sqrt{2\log p}
    -
    3\varepsilon
    \ge
    2c_p
\]
for some \(c_p>0\). Reducing \(c_p\) if necessary, we obtain the result.
\end{proof}

\begin{appendix}

\section{Technical results}

\subsection{A Bessel-type estimate for localized vectors}
\label{ssec:AuxTechnical}

We now derive a Bessel-type estimate for localized vectors. The point of the
estimate is that a family of vectors may fail to be orthogonal, but if each
vector is supported only on a small subset of qubits and no qubit belongs to too
many of these supports, then the family still satisfies a uniform Bessel
inequality. This is the form needed in the proof of
Proposition~\ref{prop:full-residual-variance}, where the supports are backward
light-cones and the overlap multiplicity is controlled by
Lemma~\ref{lem:backward-support-multiplicity}.

\begin{lemma}[Bessel-type estimate]
\label{lem:bessel-type-ineq}
Let \(\Omega\) be a finite set of qubits, let \(J\) and \(J_0\) be finite index sets, and
let \(L_j\subseteq\Omega\), \(j\in J\). Suppose that for every \(v\in\Omega\), $|\{j\in J:v\in L_j\} |\le \gamma$. Let \(w_j^b\in(\C^2)^{\otimes \Omega}\), \(j\in J\), \(b\in J_0\), be
vectors satisfying
\[
w_j^b\in
(\C^2)^{\otimes L_j}
\otimes
|0\rangle^{\otimes(\Omega\setminus L_j)},
\qquad
\langle 0^\Omega,w_j^b\rangle=0,
\qquad
\|w_j^b\|\le 1.
\]
Then, for every \(x\in(\C^2)^{\otimes \Omega}\),
\[
\sum_{j\in J}\sum_{b\in J_0}
|\langle x,w_j^b\rangle|^2
\le
|J_0|\gamma\|x\|^2.
\]
\end{lemma}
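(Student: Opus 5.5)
The plan is to work in the computational basis of $(\C^2)^{\otimes\Omega}$, indexed by subsets: for $T\subseteq\Omega$ let $e_T$ be the basis vector with ones on $T$ and zeros elsewhere, and write $x=\sum_T x_T e_T$. The localization hypothesis $w_j^b\in(\C^2)^{\otimes L_j}\otimes|0\rangle^{\otimes(\Omega\setminus L_j)}$ means precisely that $w_j^b$ lies in the span of $\{e_T:T\subseteq L_j\}$. The orthogonality hypothesis $\langle 0^\Omega,w_j^b\rangle=0$ says the coefficient on $e_\emptyset=|0^\Omega\rangle$ vanishes. Hence
\[
w_j^b=\sum_{\emptyset\neq T\subseteq L_j} w_{j,T}^b\, e_T .
\]
This is the whole structural input. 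The removal of the empty set is what makes the overlap count work: without it, every $w_j^b$ could share the common component $e_\emptyset$, and no $\gamma$-type bound would hold.

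Next I will apply Cauchy--Schwarz to each inner product, restricted to the coordinates where $w_j^b$ lives:
\[
|\langle x,w_j^b\rangle|^2
=\Bigl|\sum_{\emptyset\neq T\subseteq L_j}\overline{x_T}\,w^b_{j,T}\Bigr|^2
\le \|w_j^b\|^2\sum_{\emptyset\neq T\subseteq L_j}|x_T|^2
\le \sum_{\emptyset\neq T\subseteq L_j}|x_T|^2 .
\]
The right-hand side does not depend on $b$. Summing over $b\in J_0$ therefore contributes a factor $|J_0|$.

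Summing over $j\in J$ and exchanging the order of summation gives
\[
\sum_{j,b}|\langle x,w_j^b\rangle|^2\le |J_0|\sum_{\emptyset\neq T\subseteq\Omega}|x_T|^2\,\bigl|\{j\in J: T\subseteq L_j\}\bigr| .
\]
The final step is the overlap count. For any nonempty $T$, pick some $v\in T$. Then $T\subseteq L_j$ forces $v\in L_j$, so
\[
|\{j:T\subseteq L_j\}|\le|\{j:v\in L_j\}|\le\gamma .
\]
Combining with Parseval, $\sum_T|x_T|^2=\|x\|^2$, yields the claimed bound $|J_0|\gamma\|x\|^2$.

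I do not expect a genuine obstacle. The only point needing care is the precise translation of the tensor-product support condition into the statement that $w_j^b$ is supported on basis vectors $e_T$ with $T\subseteq L_j$ and $T\neq\emptyset$. Once that is in place, the rest is Cauchy--Schwarz plus counting.
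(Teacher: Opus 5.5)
Your proof is correct. It uses the same ingredients as the paper: expansion in the computational basis $e_T$, the vanishing of the $e_\emptyset$ coefficient, and the multiplicity bound $|\{j: T\subseteq L_j\}|\le\gamma$ obtained by fixing some $v\in T$.

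The organization differs. The paper passes to the adjoint map $c\mapsto \sum_{j,b} c_{j,b}w_j^b$, using $\|A\|=\|A^*\|$. It then applies Cauchy--Schwarz coordinate by coordinate in $e_T$, over the set of at most $|J_0|\gamma$ pairs $(j,b)$ with $T\subseteq L_j$. You instead stay on the analysis side: you apply Cauchy--Schwarz vector by vector on the support of each $w_j^b$, and then do the counting on the coefficients of $x$.

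Your direct route avoids the duality step. It also makes visible two small refinements: only the component of $x$ orthogonal to $|0^\Omega\rangle$ contributes, and $|J_0|$ may be replaced by $\max_j\sum_b\|w_j^b\|^2$. The two arguments give the same constant $|J_0|\gamma$, so neither is stronger for the application in Proposition~\ref{prop:full-residual-variance}.
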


\begin{proof}
Define the linear operator
\[
A:(\C^2)^{\otimes \Omega}\to \C^{|J||J_0|},
\qquad
Ax:=\bigl(\langle x,w_j^b\rangle\bigr)_{j\in J, ~b\in J_0}.
\]
Then the desired bound is
\[
\|Ax\|_{\ell^2}^{2}
\le
|J_0|\gamma\|x\|^{2}
\qquad
\text{for all }x\in(\C^2)^{\otimes \Omega}.
\]
Equivalently, \(\|A\|\le \sqrt{|J_0|\gamma}\). Since \(\|A\|=\|A^*\|\), where \(A^*\) is the adjoint of \(A\), it suffices to show that
\[
\|A^*c\|^{2}
\le
|J_0|\gamma\|c\|_{\ell^2}^{2}
\qquad
\text{for all }c=(c_{j,b})\in\C^{|J||J_0|}.
\]
Since
\[
A^*c
=
\sum_{j\in J}\sum_{b\in J_0} c_{j,b}w_j^b,
\]
it suffices to show
\[
\bigg\|
\sum_{j\in J}\sum_{b\in J_0} c_{j,b}w_j^b
\bigg\|^2
\le
|J_0|\gamma
\sum_{j\in J}\sum_{b\in J_0} |c_{j,b}|^2.
\]

To that end, for \(T\subseteq\Omega\), let \(e_T\in(\C^2)^{\otimes \Omega}\) denote the
computational basis vector
\[
e_T
:=
\bigotimes_{i\in\Omega}
\begin{cases}
|1\rangle, & i\in T,\\
|0\rangle, & i\notin T.
\end{cases}
\]
Thus \(e_\emptyset=|0^\Omega\rangle\) is the all-zero computational basis vector.

Since \(w_j^b\in(\C^2)^{\otimes L_j}
\otimes |0\rangle^{\otimes(\Omega\setminus L_j)}\), it is a linear combination
of \(e_T\)'s with \(T\subseteq L_j\). The additional condition
\(\langle 0^\Omega,w_j^b\rangle=0\) removes the \(T=\emptyset\) coefficient.
Hence
\[
w_j^b
=
\sum_{\emptyset\neq T\subseteq L_j}
\alpha_{j,b,T} e_T, \qquad \alpha_{j,b,T} := \langle e_T, w_j^b \rangle.
\]

Furthermore, by orthonormality and the norm assumption on \(w_j^b,\) we have
\[
\sum_{\emptyset\neq T\subseteq L_j}
|\alpha_{j,b,T}|^2
=
\|w_j^b\|^2
\le 1.
\]

Let \(c_{j,b}\in\C\).
Then
\[
\sum_{j\in J}\sum_{b\in J_0} c_{j,b}w_j^b
=
\sum_{\emptyset\neq T\subseteq\Omega}
\bigg(
\sum_{\substack{j\in J,\,b\in J_0\\ T\subseteq L_j}}
c_{j,b}\alpha_{j,b,T}
\bigg)e_T.
\]
Therefore
\[
\bigg \|
\sum_{j\in J}\sum_{b\in J_0} c_{j,b}w_j^b
\bigg \|^2
=
\sum_{\emptyset\neq T\subseteq\Omega}
\bigg |
\sum_{\substack{j\in J,\,b\in J_0\\ T\subseteq L_j}}
c_{j,b}\alpha_{j,b,T}
\bigg |^2.
\]
Fix a nonempty \(T\subseteq\Omega\), and choose any \(v\in T\). Define
\[
\mcA_T
:=
\{(j,b):j\in J,\ b\in J_0,\ T\subseteq L_j\}.
\]
If \(T\subseteq L_j\), then \(v\in L_j\). By assumption, there are at most
\(\gamma\) choices of \(j\in J\) such that \(v\in L_j\), and for each such \(j\) there
are \(|J_0|\) choices of \(b\). Hence \(|\mcA_T|\le |J_0|\gamma\). Therefore, by Cauchy--Schwarz,
\begin{align*}
\big |
\sum_{\substack{j\in J,\,b\in J_0\\ T\subseteq L_j}}
c_{j,b}\alpha_{j,b,T}
\big |^2
&=
\big |
\sum_{(j,b)\in\mcA_T}
c_{j,b}\alpha_{j,b,T}
\big |^2  \\
&\le
|\mcA_T|
\sum_{(j,b)\in\mcA_T}
|c_{j,b}\alpha_{j,b,T}|^2  \\
&\le
|J_0|\gamma
\sum_{\substack{j\in J,\,b\in J_0\\ T\subseteq L_j}}
|c_{j,b}|^2|\alpha_{j,b,T}|^2 .
\end{align*}
Summing this bound over all nonempty \(T\subseteq\Omega\), we obtain
\begin{align*}
\bigg\|
\sum_{j\in J}\sum_{b\in J_0} c_{j,b}w_j^b
\bigg\|^2
&\le
|J_0|\gamma
\sum_{\emptyset\neq T\subseteq\Omega}
\sum_{\substack{j\in J,\,b\in J_0\\ T\subseteq L_j}}
|c_{j,b}|^2|\alpha_{j,b,T}|^2  \\
&=
|J_0|\gamma
\sum_{j\in J}\sum_{b\in J_0} |c_{j,b}|^2
\sum_{\emptyset\neq T\subseteq L_j}
|\alpha_{j,b,T}|^2  \\
&\le
|J_0|\gamma
\sum_{j\in J}\sum_{b\in J_0} |c_{j,b}|^2,
\end{align*}
which completes the proof.
\end{proof}

\subsection{Universality transfer lemmas}
\label{app:universality-transfer}

In several places we need to transfer ground-state and product-state estimates from the mean-field Gaussian model to its diluted analogues. The main result we use for this purpose is \cite[Theorem~6]{anschuetz2025bounds}, which compares Gaussian disorder with more general disorder through a Lindeberg replacement argument.

There is one difference from the setting in which
\cite[Theorem~6]{anschuetz2025bounds} is stated. Our sparsification is
hyperedge-wise rather than coefficient-wise. For a fixed hyperedge \(I\), the
coefficients
\[
    \alpha_{I,a}
    :=
    \kappa^{-1/2}\xi_I g_{I,a},
    \qquad
    a\in\{1,2,3\}^p,
\]
share the same Bernoulli variable \(\xi_I\). Thus the disorder variables are
not independent coefficient by coefficient. They are, however, independent
block by block, with one block for each hyperedge \(I\), and each block has
the fixed dimension \(3^p\).

The proof of \cite[Theorem~6]{anschuetz2025bounds} extends to this setting by
replacing one block at a time and applying the same third-order Taylor
expansion. Matching the mean and covariance of each block cancels the linear
and quadratic terms. The free-energy derivative estimates used in
\cite[Theorem~6]{anschuetz2025bounds} also extend to block derivatives.
Indeed, perturbing the \(I\)-th disorder block in a direction
\(u\in\R^{3^p}\) produces the Hamiltonian perturbation
\(B_I(u):=\sum_a u_aP_I^a\). Since
\(\|B_I(u)\|_{\mathrm{op}}\le 3^{p/2}\|u\|_2\), the derivative estimates
in \cite[Lemma~26]{anschuetz2025bounds} remain valid, up to constants
depending only on \(p\).

\begin{lemma}[Blockwise Lindeberg replacement]
\label{lem:blockwise-lindeberg}
Let \(\mathcal I\) be a finite index set, and let
\[
    X=(X_I)_{I\in\mathcal I},
    \qquad
    Y=(Y_I)_{I\in\mathcal I}
\]
be arrays of random vectors in \(\R^d\). Assume that the pairs $\{(X_I,Y_I):I\in\mathcal I\}$ are independent across \(I\). For every \(I\in\mathcal I\), assume that
\[
    \E X_I=\E Y_I=0,
    \qquad
    \E[X_IX_I^\mathsf T]
    =
    \E[Y_IY_I^\mathsf T].
\]
Let \(F:(\R^d)^{\mathcal I}\to\R\) be three times continuously
differentiable, and suppose that
\[
    \sup_{x}
    \sup_{I\in\mathcal I}
    \sup_{\substack{u,v,w\in\R^d\\
                    \|u\|_2,\|v\|_2,\|w\|_2\le1}}
    \left|
        D_I^3F(x)[u,v,w]
    \right|
    \le L_3.
\]
Then
\[
    \left|
        \E F(X)-\E F(Y)
    \right|
    \le
    \frac{L_3}{6}
    \sum_{I\in\mathcal I}
    \left(
        \E\|X_I\|_2^3
        +
        \E\|Y_I\|_2^3
    \right).
\]
\end{lemma}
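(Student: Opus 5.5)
This is the standard Lindeberg swapping argument, with one block $I$ replaced at a time rather than one coordinate. Enumerate $\mathcal I=\{I_1,\dots,I_N\}$ and define the hybrid arrays
\[
Z^{(k)}:=(Y_{I_1},\dots,Y_{I_k},X_{I_{k+1}},\dots,X_{I_N}),\qquad k=0,\dots,N,
\]
so that $Z^{(0)}=X$ and $Z^{(N)}=Y$. Telescoping gives
\[
\E F(X)-\E F(Y)=\sum_{k=1}^N\bigl(\E F(Z^{(k-1)})-\E F(Z^{(k)})\bigr).
\]
It therefore suffices to show that the $k$-th summand is at most $\frac{L_3}{6}(\E\|X_{I_k}\|_2^3+\E\|Y_{I_k}\|_2^3)$ in absolute value.

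Fix $k$ and write $W$ for the array $Z^{(k)}$ with its $I_k$-th block set to $0$. Then $Z^{(k-1)}$ and $Z^{(k)}$ are obtained from $W$ by inserting $X_{I_k}$ and $Y_{I_k}$, respectively, into block $I_k$. Consider the one-block map $t\mapsto F(W+t\,u)$ with $u$ placed in block $I_k$. Taylor's theorem with Lagrange (or integral) remainder, applied along the segment from $W$ to $W+u$, gives
\[
F(W+u)=F(W)+D_{I_k}F(W)[u]+\tfrac12 D^2_{I_k}F(W)[u,u]+\rho(u),
\qquad |\rho(u)|\le \tfrac{L_3}{6}\|u\|_2^3 .
\]
The remainder bound uses the hypothesis on $D^3_{I_k}F$ and homogeneity: $|D^3_{I_k}F(x)[u,u,u]|\le L_3\|u\|_2^3$.

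Now substitute $u=X_{I_k}$ and $u=Y_{I_k}$ and take expectations. The pairs are independent across blocks, so $W$, which depends only on $(X_I,Y_I)_{I\ne I_k}$, is independent of the pair $(X_{I_k},Y_{I_k})$. Conditioning on $W$ and using $\E X_{I_k}=\E Y_{I_k}=0$, the linear terms have expectation $0$. The quadratic terms equal $\tfrac12\operatorname{tr}\bigl(\nabla^2_{I_k}F(W)\,\E[X_{I_k}X_{I_k}^{\mathsf T}]\bigr)$ and the same expression with $Y_{I_k}$; these coincide by the covariance matching. The zeroth-order terms are identical. What remains is $|\E\rho(X_{I_k})-\E\rho(Y_{I_k})|$, which is bounded by the claimed third-moment expression. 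Summing over $k$ finishes the proof.

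The only delicate point is integrability. Each step is needed only when the third moments are finite, since otherwise the bound is trivial. Finite third moments imply finite second moments. One also needs $\E|F(Z^{(k)})|$, $\E|D_{I_k}F(W)[X_{I_k}]|$ and $\E|D^2_{I_k}F(W)[X_{I_k},X_{I_k}]|$ to be finite in order to split expectations. If these are not implicitly assumed, I would state them as standing assumptions, or reduce to them by truncation. In the paper's application $F$ is a smoothed free energy whose first and second block derivatives are uniformly bounded (by the cited Lemma 26), so this is automatic. Apart from this bookkeeping, the argument is routine.
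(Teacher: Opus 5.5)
Your proposal is correct and follows the paper's proof essentially verbatim. It uses the same hybrid arrays $Z^{(k)}$ (with the roles of $X$ and $Y$ in the enumeration reversed, which is immaterial), the same telescoping sum, and the same Taylor expansion in the $I_k$-th block about the point where that block is zero, with the first- and second-order terms cancelling by mean and covariance matching. Your explicit use of the independence of $W$ from $(X_{I_k},Y_{I_k})$, and your integrability caveat, are points the paper leaves implicit, and you handle both correctly.
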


\begin{proof}
Enumerate
\(\mathcal I=\{I_1,\ldots,I_N\}\), and define the hybrid arrays
\[
    Z^{(k)}
    :=
    (X_{I_1},\ldots,X_{I_k},
      Y_{I_{k+1}},\ldots,Y_{I_N}),
    \qquad
    k=0,\ldots,N.
\]
Then
\[
    \E F(X)-\E F(Y)
    =
    \sum_{k=1}^N
    \E\left[
        F(Z^{(k)})-F(Z^{(k-1)})
    \right].
\]

Fix \(k\), and condition on all blocks except the \(I_k\)-th block.
Writing the resulting function of that block as \(f:\R^d\to\R\), Taylor expanding about the origin gives
\[
    f(z)
    =
    f(0)
    +
    Df(0)[z]
    +
    \frac12D^2f(0)[z,z]
    +
    \mathcal R_f(z),
\]
where
\[
    |\mathcal R_f(z)|
    \le
    \frac{L_3}{6}\|z\|_2^3.
\]
Because \(X_{I_k}\) and \(Y_{I_k}\) have the same mean and covariance,
the expectations of the linear and quadratic Taylor terms agree.
Consequently,
\[
    \left|
        \E\left[
            f(X_{I_k})-f(Y_{I_k})
        \right]
    \right|
    \le
    \frac{L_3}{6}
    \left(
        \E\|X_{I_k}\|_2^3
        +
        \E\|Y_{I_k}\|_2^3
    \right).
\]
Summing over \(k=1,\ldots,N\) proves the claim.
\end{proof}

\begin{lemma}[Mean-field-limit transfer for diluted optima]
\label{lem:mean-field-limit-transfer}
Fix \(p\ge2\).

\begin{enumerate}
\item[\textup{(i)}] \textup{Growing-average-degree diluted regime.}
Let \(\kappa_n=n^{-\beta}\), with \(0<\beta<p-1\). Then
\[
\big|
\E E_{n,\mathrm{gs}}^{(\beta)}(p)
-
\E E_{n,\mathrm{gs}}(p)
\big|
\to0 .
\]
The same conclusion holds for the product-state optima:
\[
\big|
\E E_{n,\mathrm{prod}}^{(\beta)}(p)
-
\E E_{n,\mathrm{prod}}(p)
\big|
\to0 .
\]

\item[\textup{(ii)}] \textup{Bounded-average-degree diluted regime.}
Let \(\kappa_{n,\zeta}=\zeta n^{-(p-1)}\). Then
\[
\lim_{\zeta\to\infty}
\limsup_{n\to\infty}
\big|
\E E_{n,\mathrm{gs}}^{(\zeta)}(p)
-
\E E_{n,\mathrm{gs}}(p)
\big|
=0 .
\]
The same conclusion holds for the product-state optima:
\[
\lim_{\zeta\to\infty}
\limsup_{n\to\infty}
\big|
\E E_{n,\mathrm{prod}}^{(\zeta)}(p)
-
\E E_{n,\mathrm{prod}}(p)
\big|
=0 .
\]
\end{enumerate}
\end{lemma}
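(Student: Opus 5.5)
Since the normalized optima are Lipschitz functionals of the disorder, I would prove Lemma~\ref{lem:mean-field-limit-transfer} by interpolating through smooth surrogates and then applying the blockwise Lindeberg replacement, Lemma~\ref{lem:blockwise-lindeberg}. The surrogate for the ground state is the quantum free energy
\[
F_\lambda(\alpha):=\frac{1}{\lambda\sqrt n}\log\operatorname{Tr}\exp\Big(\lambda\binom np^{-1/2}\sum_{I}\sum_a\alpha_{I,a}P_I^a\Big),
\]
with disorder blocks $\alpha_I\in\R^{3^p}$. For the mean-field model we take $\alpha_I=g_I$. For the diluted model we take $\alpha_I=\kappa^{-1/2}\xi_Ig_I$, which reproduces $H^{(\beta)}$ up to the ratio $m_n^{-1/2}$ versus $(\kappa\binom np)^{-1/2}$; this ratio is identically one. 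These blocks are independent across $I$, centered, and have identity covariance, so the hypotheses of Lemma~\ref{lem:blockwise-lindeberg} hold with $d=3^p$. For the product-state optimum we use the analogous classical free energy over $(\mcS^2)^n$ with the uniform measure, where the energy is the multilinear polynomial $X_0$.

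First, I would record the smoothing error. Since $\sup_\psi\langle\psi|H|\psi\rangle\le F_\lambda\le\sup_\psi\langle\psi|H|\psi\rangle+\frac{n\log 2}{\lambda\sqrt n}$, the choice $\lambda=\lambda_n\sqrt n$ gives an error of $\log2/\lambda_n$ after normalization. For the product case, a standard volume argument on spherical caps gives the same $O(\log(\lambda_n)/\lambda_n)$ error, using that $X_0$ is Lipschitz on $(\mcS^2)^n$ with constant $O_p(\lambda_n)$-controllable on a high-probability event.

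Second, I would bound third block derivatives. Perturbing block $I$ in direction $u$ adds $\binom np^{-1/2}B_I(u)$ with $\|B_I(u)\|_{\mathrm{op}}\le3^{p/2}\|u\|$. By the derivative estimates cited from \cite[Lemma~26]{anschuetz2025bounds}, extended blockwise as explained in the preamble, this gives
\[
L_3\le C_p\,\lambda^2\binom np^{-3/2}\frac{1}{\sqrt n}.
\]
Third, Lemma~\ref{lem:blockwise-lindeberg} bounds the difference of expectations by
\[
L_3\sum_I\big(\E\|X_I\|^3+\E\|Y_I\|^3\big)\le C_p\lambda_n^2\sqrt n\,\binom np^{-3/2}\binom np\,\kappa^{-1/2}.
\]
This uses $\E\|\kappa^{-1/2}\xi_Ig_I\|^3=O_p(\kappa^{-1/2})$, and it simplifies to $C_p\lambda_n^2\sqrt n\,(\kappa\binom np)^{-1/2}=C_p\lambda_n^2\sqrt{n/m_n}$. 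In regime (i), $m_n=\Theta(n^{p-\beta})$ with $p-\beta>1$, so $n/m_n\to0$ polynomially. Choosing $\lambda_n\to\infty$ slowly, for instance $\lambda_n=(m_n/n)^{1/8}$, makes both the smoothing error and the Lindeberg error vanish, which proves (i).

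In regime (ii), $n/m_{n,\zeta}\to p!/\zeta$, so the Lindeberg error is $C_p\lambda^2\zeta^{-1/2}$ with $\lambda$ a fixed constant. The total error is then $C_p\lambda^2\zeta^{-1/2}+O(\log\lambda/\lambda)$, uniformly in large $n$. Taking $\limsup_n$, then $\zeta\to\infty$, then $\lambda\to\infty$ gives (ii).

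The main obstacle I expect is the third-derivative bound for the noncommutative free energy. Derivatives of $\log\operatorname{Tr}e^{H}$ involve Duhamel-type integrals, and I must make sure the bound is $O(\lambda^2\|B\|^3)$ after normalization, with no spurious factor of the dimension $2^n$ or of $n$. I would first try to take this from \cite{anschuetz2025bounds} directly. If that fails, I would derive it by writing $\frac{d^k}{dt^k}\log\operatorname{Tr}e^{H+tB}$ as a combination of cumulant-type expressions under the Gibbs state. Each such term is bounded by $\|B\|_{\mathrm{op}}^k$ times $\lambda^{k-1}$ via Hölder's inequality for Schatten norms in the Duhamel representation.
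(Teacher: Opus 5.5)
Your ground-state argument is the paper's. With $\lambda=\lambda_n\sqrt n$, your $F_\lambda$ is exactly $\Phi_t$ with $t=\lambda_n n$, and your derivative bound is the paper's $C_p t^2/(n\binom np)^{3/2}$. Your final error $\log 2/\lambda_n+C_p\lambda_n^2\sqrt{n/(\kappa\binom np)}$ matches \eqref{eq:ground-state-expectation-transfer-bound} up to constants, and you take the limits in the same order. Your Duhamel--H\"older fallback for the third derivative is sound and introduces no dimension factor. Run it with three possibly different perturbations $B_I(u),B_I(v),B_I(w)$, or polarize, since Lemma~\ref{lem:blockwise-lindeberg} needs the full trilinear form.

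For the product optimum you depart from the paper. You replace its finite net $\mathsf S_{n,\eta}$ and discrete soft-max by a continuous Gibbs functional $\Psi_\lambda:=\frac{1}{\lambda\sqrt n}\log\int e^{\lambda X_0}\,d\mu$, with $\mu$ uniform on $(\mcS^2)^n$. The upper bound $\Psi_\lambda\le\max X_0/\sqrt n$ and the third-derivative bound are fine; the latter is easier than in the quantum case.

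The step that is not adequately supported is the lower smoothing bound, which you justify by a Lipschitz constant of $X_0$ ``on a high-probability event.'' What this actually requires is control, in expectation and uniformly in $n$, of $\sup_{\tau\in(\B_2^3)^n}\|\nabla X_0(\tau)\|_2$, for both the mean-field and the diluted disorders. The obvious estimate $\|\nabla_i X_0\|_2\le\binom np^{-1/2}\sum_{I\ni i}\|g_I\|_1$ gives order $n^{(p-1)/2}$ in the mean-field model, hence a smoothing error of order $\log n/\lambda_n$. That is harmless in regime (i), where $\lambda_n$ grows polynomially, but fatal in regime (ii), where $\lambda$ stays fixed as $n\to\infty$. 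A uniform $O_p(1)$ gradient bound is true (by chaining on the gradient process), but you have not supplied it.

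The simplest repair avoids Lipschitz control entirely; it is the continuous analogue of the paper's randomized rounding onto $\mathcal Q_\eta$. Restrict $\mu$ to the product $C$ of caps of angular radius $\epsilon$ around a maximizer $\sigma^*$. Each cap has mean $\frac{1+\cos\epsilon}{2}\sigma_i^*$. By independence and multilinearity, $\E_{\mu|C}X_0=\bigl(\frac{1+\cos\epsilon}{2}\bigr)^pX_0(\sigma^*)$, and since $\mu(\text{cap})\ge\epsilon^2/\pi^2$, Jensen gives
\[
\Psi_\lambda\ge\Bigl(1-\frac{p\epsilon^2}{4}\Bigr)\frac{\max X_0}{\sqrt n}-\frac{2\log(\pi/\epsilon)}{\lambda_n}.
\]
You have $0\le\max X_0/\sqrt n\le\|H\|_{\mathrm{op}}/\sqrt n$, and $\E\|H\|_{\mathrm{op}}/\sqrt n\le C_p$ uniformly in $n$ and $\zeta$ by the Tropp bound the paper uses. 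So the choice $\epsilon=\lambda_n^{-1/2}$ gives your claimed $O(\log\lambda_n/\lambda_n)$ error in expectation, uniformly, and your limits then go through.

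With this fix the two routes are equivalent. Yours dispenses with building $\mathcal Q_\eta$ with the convex-hull property. The paper's finite soft-max makes the derivative computation for $\Phi_{t,\eta}$ completely elementary. In both, the discretization error is handled by the same multilinearity trick together with $\E\|H\|_{\mathrm{op}}/\sqrt n\le C_p$.
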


\begin{proof}
Write \(N_p:=\binom np\). In either diluted regime, let \(\kappa\) denote
the corresponding sampling probability and let \(m=\kappa N_p\). We use
\(H_{n,p}^{(\kappa)}\) as shorthand for \(H_{n,p}^{(\beta)}\) or
\(H_{n,p}^{(\zeta)}\), according to the regime. Using the mean-field
normalization, we may write
\[
    H_{n,p}^{(\kappa)}
    =
    N_p^{-1/2}
    \sum_{I\in\mcI_p^n}
    \sum_{a\in\{1,2,3\}^p}
    \alpha_{I,a}P_I^a,
    \qquad
    \alpha_{I,a}:=\kappa^{-1/2}\xi_I g_{I,a}.
\]

For each \(I\in\mcI_p^n\), define the disorder blocks
\[
    \alpha_I
    :=
    \bigl(\alpha_{I,a}\bigr)_{a\in\{1,2,3\}^p},
    \qquad
    G_I
    :=
    \bigl(g_{I,a}\bigr)_{a\in\{1,2,3\}^p}.
\]
The blocks \(\{\alpha_I\}_{I\in\mcI_p^n}\) are independent. Moreover,
\(\E\alpha_I=0\), and, for \(a,b\in\{1,2,3\}^p\),
\[
    \E[\alpha_{I,a}\alpha_{I,b}]
    =
    \kappa^{-1}\E[\xi_Ig_{I,a}g_{I,b}]
    =
    \delta_{a,b}.
\]
Thus \(\alpha_I\) and \(G_I\) have the same mean and covariance matrix.

We first consider the unrestricted optimum. For a block array
\(x=(x_I)_{I\in\mcI_p^n}\), with
\(x_I=(x_{I,a})_{a\in\{1,2,3\}^p}\), set
\[
    H_n(x)
    :=
    N_p^{-1/2}
    \sum_{I\in\mcI_p^n}
    \sum_{a\in\{1,2,3\}^p}
    x_{I,a}P_I^a.
\]
For \(t\ge1\), define
\[
    \Phi_t(x)
    :=
    \frac1t
    \log\operatorname{Tr}
    \exp\left(
        \frac{t}{\sqrt n}H_n(x)
    \right).
\]
The standard free-energy approximation
\cite[Eq.~(150)--(151)]{anschuetz2025bounds} gives
\[
    \frac1{\sqrt n}\lambda_{\max}(H_n(x))
    \le
    \Phi_t(x)
    \le
    \frac1{\sqrt n}\lambda_{\max}(H_n(x))
    +
    \frac{n\log2}{t}.
\]

We next verify the derivative estimate needed in the block replacement.
For \(I\in\mcI_p^n\), \(u\in\R^{3^p}\), and \(s\in\R\), define the
perturbed block array \(x^{(I,s,u)}\) by
\[
    x_J^{(I,s,u)}
    :=
    \begin{cases}
        x_I+su, & J=I,\\
        x_J, & J\neq I.
    \end{cases}
\]
Then
\[
    H_n\bigl(x^{(I,s,u)}\bigr)-H_n(x)
    =
    sN_p^{-1/2}B_I(u),
    \qquad
    B_I(u):=\sum_{a\in\{1,2,3\}^p}u_aP_I^a.
\]
Since every \(P_I^a\) is unitary,
\[
    \|B_I(u)\|_{\mathrm{op}}
    \le
    \sum_a|u_a|
    \le
    3^{p/2}\|u\|_2.
\]
For fixed \(x\) and \(I\), the map $(u,v,w)\longmapsto D_I^3\Phi_t(x)[u,v,w]$ is a symmetric trilinear form. The derivative argument of
\cite[Lemma~26]{anschuetz2025bounds} bounds its diagonal values
\(D_I^3\Phi_t(x)[u,u,u]\). The polarization identity for symmetric
trilinear forms therefore yields
\[
    \sup_{\substack{u,v,w\in\R^{3^p}\\
                    \|u\|_2,\|v\|_2,\|w\|_2\le1}}
    \left|
        D_I^3\Phi_t(x)[u,v,w]
    \right|
    \le
    C_p\frac{t^2}{(nN_p)^{3/2}}.
\]
The constant \(C_p\) absorbs the fixed block dimension \(3^p\).

We may therefore apply
Lemma~\ref{lem:blockwise-lindeberg} to the independent block arrays
\((\alpha_I)_{I\in\mcI_p^n}\) and
\((G_I)_{I\in\mcI_p^n}\). Since
\[
    \E\|\alpha_I\|_2^3
    =
    \kappa^{-1/2}\E\|G_I\|_2^3
    \le
    C_p\kappa^{-1/2},
\]
and \(\E\|G_I\|_2^3\le C_p\), we obtain
\[
    | \E\Phi_t(\alpha)-\E\Phi_t(G) |
    \le
    C_pN_p
    \frac{t^2}{(nN_p)^{3/2}}
    \left(1+\kappa^{-1/2}\right)
    \le
    C_p\frac{t^2}
    {n^{3/2}N_p^{1/2}\kappa^{1/2}},
\]
where we used \(\kappa\le1\). Taking \(t=An\) and using the free-energy
approximation on both sides gives
\begin{equation}
\label{eq:ground-state-expectation-transfer-bound}
\left|
    \E E_{n,\mathrm{gs}}^{(\kappa)}(p)
    -
    \E E_{n,\mathrm{gs}}(p)
\right|
\le
\frac{2\log2}{A}
+
C_pA^2
\sqrt{\frac{n}{\kappa N_p}}.
\end{equation}

We now turn to the product-state optimum. An additional discretization
is needed because the matrix free energy approximates the maximum over
all states rather than the maximum over product states.

Fix \(0<\eta<1/2\). There exists a finite set
\(\mathcal Q_\eta\subset\mcS^2\) such that
\[
    |\mathcal Q_\eta|
    \le
    \left(\frac C\eta\right)^2,
    \qquad
    (1-\eta)\B_2^3
    \subseteq
    \operatorname{conv}(\mathcal Q_\eta).
\]
Let \(\mathsf S_{n,\eta}\subset\mathsf S_{n,\mathrm{prod}}\) be the set
of product states whose one-site Bloch vectors belong to
\(\mathcal Q_\eta\). Then
\(\log|\mathsf S_{n,\eta}|\le Cn\log(C/\eta)\).

For a Hamiltonian \(H\) of the form considered here, write
\[
    \mathcal E_{\mathrm{prod}}(H)
    :=
    \frac1{\sqrt n}
    \sup_{\psi\in\mathsf S_{n,\mathrm{prod}}}
    \langle\psi|H|\psi\rangle,
    \qquad
    \mathcal E_\eta(H)
    :=
    \frac1{\sqrt n}
    \max_{\psi\in\mathsf S_{n,\eta}}
    \langle\psi|H|\psi\rangle.
\]
Since the product-state energy is multilinear in the one-site Bloch
vectors,
\[
    (1-\eta)^p\mathcal E_{\mathrm{prod}}(H)
    \le
    \mathcal E_\eta(H)
    \le
    \mathcal E_{\mathrm{prod}}(H).
\]
To see the first inequality, let \(x_1,\ldots,x_n\in\mcS^2\) be
arbitrary. For every \(i\), choose an independent random vector
\(Y_i\in\mathcal Q_\eta\) satisfying
\(\E Y_i=(1-\eta)x_i\). Since every interaction contains \(p\) distinct
sites, multilinearity gives
\[
    \E X_0(Y_1,\ldots,Y_n)
    =
    (1-\eta)^pX_0(x_1,\ldots,x_n).
\]
Hence some deterministic realization of
\((Y_1,\ldots,Y_n)\) attains at least the right-hand side.

We further use the uniform estimate
\(\E[n^{-1/2}\|H\|_{\mathrm{op}}]\le C_p\) for the mean-field and
diluted Hamiltonians. For the diluted model, conditionally on the
retained hypergraph, the Gaussian matrix-series inequality
\cite[Theorem~1.2]{tropp2012User} gives
\[
    \E_g\|H_{n,p}^{(\kappa)}\|_{\mathrm{op}}
    \le
    C_p\sqrt n
    \left(\frac{|\mcE_p|}{m}\right)^{1/2}.
\]
Taking expectation over \(\xi\), using
\(\E_\xi|\mcE_p|=m\), and applying Jensen's inequality proves the claim.
It follows that
\[
    \E\left|
        \mathcal E_{\mathrm{prod}}(H)-\mathcal E_\eta(H)
    \right|
    \le
    C_p\eta.
\]

For a block array \(x\), define $h_\psi(x)
    :=
    \frac1{\sqrt n}
    \langle\psi|H_n(x)|\psi\rangle$ and the soft maximum
\[
    \Phi_{t,\eta}(x)
    :=
    \frac1t
    \log
    \sum_{\psi\in\mathsf S_{n,\eta}}
    \exp\bigl(t h_\psi(x)\bigr).
\]
Then
\[
    \max_{\psi\in\mathsf S_{n,\eta}}h_\psi(x)
    \le
    \Phi_{t,\eta}(x)
    \le
    \max_{\psi\in\mathsf S_{n,\eta}}h_\psi(x)
    +
    \frac{\log|\mathsf S_{n,\eta}|}{t}.
\]

For \(I\in\mcI_p^n\), \(u\in\R^{3^p}\), and \(s\in\R\), replacing \(x\)
by \(x^{(I,s,u)}\) changes \(h_\psi(x)\) by $\frac{s}{\sqrt{nN_p}}
    \langle\psi|B_I(u)|\psi\rangle$. Differentiating the log-sum-exp expression shows that its third
derivative is \(t^2\) times the corresponding centered third moment.
Since
\[
    \bigg|
        \frac{1}{\sqrt{nN_p}}
        \langle\psi|B_I(u)|\psi\rangle
    \bigg|
    \le
    \frac{3^{p/2}}{\sqrt{nN_p}}\|u\|_2,
\]
polarization again gives
\[
    \sup_x
    \sup_{I\in\mcI_p^n}
    \sup_{\substack{u,v,w\in\R^{3^p}\\
                    \|u\|_2,\|v\|_2,\|w\|_2\le1}}
    \left|
        D_I^3\Phi_{t,\eta}(x)[u,v,w]
    \right|
    \le
    C_p\frac{t^2}{(nN_p)^{3/2}}.
\]
Applying Lemma~\ref{lem:blockwise-lindeberg} and taking \(t=An\)
therefore gives
\begin{align}
\label{eq:product-expectation-transfer-bound}
|
    \E E_{n,\mathrm{prod}}^{(\kappa)}(p)
    -
    \E E_{n,\mathrm{prod}}(p)
|
&\le
C_p\eta
+
\frac{C\log(C/\eta)}{A}
+
C_pA^2
\sqrt{\frac{n}{\kappa N_p}}.
\end{align}

Suppose first that \(\kappa=n^{-\beta}\), with
\(0<\beta<p-1\). Then $\kappa N_p/n
    =
    \Theta_p(n^{p-1-\beta})
    \to\infty$. Taking \(n\to\infty\) in
\eqref{eq:ground-state-expectation-transfer-bound} and then letting
\(A\to\infty\) proves the ground-state statement. Applying the same
limits to \eqref{eq:product-expectation-transfer-bound}, and then
letting \(\eta\downarrow0\), proves the product-state statement.

In the bounded-average-degree regime,
\(\kappa=\zeta n^{-(p-1)}\), and $\kappa N_p/n
    =
    \frac{\zeta}{p!}(1+o_n(1))$. Consequently, for every fixed \(A\),
\[
    \lim_{\zeta\to\infty}
    \limsup_{n\to\infty}
    A^2\sqrt{\frac{n}{\kappa N_p}}
    =0.
\]
The corresponding iterated-limit conclusions follow from
\eqref{eq:ground-state-expectation-transfer-bound} by first taking
\(n\to\infty\), then \(\zeta\to\infty\), and finally \(A\to\infty\).
For the product-state optimum, apply the same limits to
\eqref{eq:product-expectation-transfer-bound} and then let
\(\eta\downarrow0\).
\end{proof}

To pass from expectation transfer to high-probability transfer, we use
the following concentration estimate.

\begin{lemma}[Concentration of diluted optima]
\label{lem:diluted-optima-concentration}
There exists \(C_p<\infty\) such that, in either diluted regime,
\[
    \operatorname{Var}\left(
        E_{n,\mathrm{gs}}^{(\cdot)}(p)
    \right)
    \le
    \frac{C_p}{n},
    \qquad
    \operatorname{Var}\left(
        E_{n,\mathrm{prod}}^{(\cdot)}(p)
    \right)
    \le
    \frac{C_p}{n}.
\]
The same estimates hold for the mean-field optima.
\end{lemma}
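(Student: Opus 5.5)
We plan to prove the concentration bounds by Gaussian concentration, conditionally on the sparsification, and then handle the Bernoulli randomness with the Efron--Stein inequality. The law of total variance gives
\[
\operatorname{Var}(E)=\E_\xi\bigl[\operatorname{Var}_g(E\mid\xi)\bigr]+\operatorname{Var}_\xi\bigl(\E_g[E\mid\xi]\bigr),
\]
where \(E\) denotes either \(E_{n,\mathrm{gs}}^{(\cdot)}(p)\) or \(E_{n,\mathrm{prod}}^{(\cdot)}(p)\). We bound the two terms separately.

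\emph{Gaussian term.} Condition on \(\xi\). Both optima are suprema over states \(\psi\) of the linear functionals
\[
g\mapsto m^{-1/2}n^{-1/2}\sum_{I\in\mcE_p}\sum_a g_{I,a}\langle\psi|P_I^a|\psi\rangle .
\]
For the ground-state energy the supremum runs over all states, and for the product energy over product states. A supremum of linear functionals is Lipschitz in \(g\), with constant equal to the supremum of the coefficient norms. That constant squared is
\[
\frac1{mn}\sup_\psi\sum_{I\in\mcE_p}\sum_a\langle\psi|P_I^a|\psi\rangle^2 .
\]
For each fixed \(I\), the \(3^p\) nonidentity Pauli strings on \(I\) satisfy \(\sum_a\langle\psi|P_I^a|\psi\rangle^2\le 2^p\operatorname{Tr}(\rho_I^2)\le 2^p\), where \(\rho_I\) is the reduced state on \(I\) (Pauli expansion of \(\rho_I\)). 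Hence the Lipschitz constant squared is at most \(2^p|\mcE_p|/(mn)\). The Gaussian Poincaré inequality, or the Gaussian concentration inequality, then gives \(\operatorname{Var}_g(E\mid\xi)\le C_p|\mcE_p|/(mn)\). Taking \(\E_\xi\) yields \(C_p/n\). The mean-field case is the special case \(\xi\equiv1\), \(m=\binom np\).

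\emph{Bernoulli term.} Set \(f(\xi):=\E_g[E\mid\xi]\) and apply Efron--Stein in the independent variables \(\xi_I\):
\[
\operatorname{Var}_\xi f\le\frac12\sum_I\E\bigl(f(\xi)-f(\xi^{(I)})\bigr)^2 ,
\]
where \(\xi^{(I)}\) resamples \(\xi_I\). The difference is nonzero only if \(\xi_I\ne\xi_I'\), which has probability at most \(2\kappa\). When it is nonzero, toggling hyperedge \(I\) changes the Hamiltonian by \(\pm m^{-1/2}\sum_a g_{I,a}P_I^a\). Each optimum is a supremum of linear functions of \(H\), so it changes by at most
\[
(mn)^{-1/2}\sup_\psi\Bigl|\sum_a g_{I,a}\langle\psi|P_I^a|\psi\rangle\Bigr|\le (mn)^{-1/2}2^{p/2}\|G_I\|_2 .
\]
The last step uses Cauchy--Schwarz and the Pauli bound above. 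Taking \(\E_g\) inside, by Jensen, gives
\[
|f(\xi)-f(\xi^{(I)})|\le C_p(mn)^{-1/2}.
\]
Summing over \(I\) gives \(\operatorname{Var}_\xi f\le C_p\kappa N_p/(mn)=C_p/n\), since \(\kappa N_p=m\).

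Combining the two terms proves both bounds, uniformly in the regime, because only \(\kappa N_p=m\) was used. The main point requiring care is the Lipschitz constant. A crude bound \(|\langle\psi|P_I^a|\psi\rangle|\le1\) summed over \(a\) gives only the constant \(3^p\), which is still fine for \(C_p\). The real issue is ensuring no \(n\)-dependent loss, and that comes from the count \(|\mcE_p|/m\) having bounded expectation. So the obstacle is mild, and the argument is essentially routine.
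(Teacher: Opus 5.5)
Your proof is correct, but it is organized differently from the paper's.

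The paper never separates the two sources of randomness. It treats each $W_I:=\xi_I(g_{I,a})_{a}$ as one independent block and applies Efron--Stein once over these blocks. It bounds the effect of resampling $W_I$ by
\[
(nm)^{-1/2}\|H(W)-H(W^{(I)})\|_{\mathrm{op}}\le 3^{p/2}(nm)^{-1/2}\|W_I-W_I'\|_2 ,
\]
and then uses $\E\|W_I-W_I'\|_2^2=2\kappa 3^p$ to get $3^{2p}/n$ directly.

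You instead split the variance by the law of total variance. The conditional Gaussian part is handled by Gaussian Poincar\'e applied to a supremum of linear functionals. Its Lipschitz constant is controlled by the Pauli-expansion identity $\sum_{b}\operatorname{Tr}(\rho_I P^b)^2=2^p\operatorname{Tr}(\rho_I^2)$. The mask is then handled by Efron--Stein on $f(\xi)=\E_g[E\mid\xi]$, where the resampling difference is nonzero only with probability $2\kappa(1-\kappa)$.

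What each route buys:
\begin{itemize}
\item \textbf{Paper's route.} It is shorter and distribution-free. It uses only block independence and $\E\|W_I\|_2^2=\kappa 3^p$, so it applies verbatim to non-Gaussian disorder. It needs no Poincar\'e inequality and no a.e.-differentiability argument.
\item \textbf{Your route, constant.} It gives a slightly better constant, $(2^p+6^p)/n$ in place of $9^p/n$. This gain comes from the Pauli-expansion bound replacing the operator-norm triangle inequality, and it could equally be inserted into the paper's one-step argument.
\item \textbf{Your route, quenched bound.} More substantively, it gives a quenched estimate: $\operatorname{Var}_g(E\mid\xi)\le 2^p|\mcE_p|/(mn)$ for every realization of the hypergraph. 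Combined with Gaussian concentration, this yields conditional sub-Gaussian tails given $\xi$, which the paper's annealed Efron--Stein bound does not provide.
\end{itemize}
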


\begin{proof}
Write \(N_p:=\binom np\) and \(m:=\kappa N_p\). For each
\(I\in\mcI_p^n\), define the independent disorder block
\[
    W_I
    :=
    \xi_I
    \bigl(g_{I,a}\bigr)_{a\in\{1,2,3\}^p}
    \in\R^{3^p},
\]
and let \(W:=(W_I)_{I\in\mcI_p^n}\). In this notation, the diluted
Hamiltonian is
\[
    H(W)
    :=
    \frac1{\sqrt m}
    \sum_{I\in\mcI_p^n}
    \sum_{a\in\{1,2,3\}^p}
    W_{I,a}P_I^a.
\]

Define
\[
    F_{\mathrm{gs}}(W)
    :=
    \frac1{\sqrt n}\lambda_{\max}(H(W)),
    \quad
    F_{\mathrm{prod}}(W)
    :=
    \frac1{\sqrt n}
    \sup_{\psi\in\mathsf S_{n,\mathrm{prod}}}
    \langle\psi|H(W)|\psi\rangle.
\]
We prove the variance bound for either of these functions, denoted
generically by \(F(W)\).

For each \(I\in\mcI_p^n\), let \(W_I'\) be an independent copy of
\(W_I\), and let \(W^{(I)}\) be the block array obtained from \(W\) by
replacing \(W_I\) with \(W_I'\), while leaving all other blocks unchanged.
Since the blocks \(\{W_I\}\) are independent, the Efron--Stein inequality
gives
\[
    \operatorname{Var}(F(W))
    \le
    \frac12
    \sum_{I\in\mcI_p^n}
    \E[
        (F(W)-F(W^{(I)}))^2
    ].
\]

We next bound the effect of replacing one block. For the ground-state
optimum, the Rayleigh--Ritz formula gives
\[
    |
    F_{\mathrm{gs}}(W)-F_{\mathrm{gs}}(W^{(I)})
    |
    \le
    \frac1{\sqrt n}
    \|
        H(W)-H(W^{(I)})
    \|_{\mathrm{op}}.
\]
For the product-state optimum, the same bound follows directly by taking
the supremum over product states. Since
\[
    H(W)-H(W^{(I)})
    =
    \frac1{\sqrt m}
    \sum_{a\in\{1,2,3\}^p}
    (W_{I,a}-W_{I,a}')P_I^a,
\]
we obtain, for either choice of \(F\),
\[
    |F(W)-F(W^{(I)})|
    \le
    \frac1{\sqrt{nm}}
    \sum_{a\in\{1,2,3\}^p}
    |W_{I,a}-W_{I,a}'|
    \le
    \frac{3^{p/2}}{\sqrt{nm}}
    \|W_I-W_I'\|_2.
\]

Moreover, \(W_I\) is centered and
\[
    \E\|W_I\|_2^2
    =
    \E\xi_I
    \sum_{a\in\{1,2,3\}^p}
    \E g_{I,a}^2
    =
    \kappa 3^p.
\]
Therefore $\E\|W_I-W_I'\|_2^2
    =
    2\kappa 3^p$. Substituting into the Efron--Stein bound and using \(m=\kappa N_p\), we obtain
\[
    \operatorname{Var}(F(W))
    \le
    \frac12 N_p
    \frac{3^p}{nm}
    \bigl(2\kappa 3^p\bigr) 
    =
    \frac{3^{2p}}{n}.
\]
This proves both diluted-model variance bounds. The mean-field case is the same argument with \(\xi_I\equiv1\), \(\kappa=1\), and \(m=N_p\).
\end{proof}

\begin{corollary}[High-probability transfer of mean-field benchmark bounds]
\label{cor:sparse-mean-field-transfer-hp}
There exist a universal constant \(c>0\) and a deterministic sequence
\(\varepsilon_p\to0\) such that the following hold.

\begin{enumerate}
\item[\textup{(i)}] \textup{Growing-average-degree diluted regime.}
For every fixed \(p\ge3\) and every \(0<\beta<p-1\), with high
probability as \(n\to\infty\),
\begin{align*}
    c\frac{3^{p/2}}{p}
    \le
    E_{n,\mathrm{gs}}^{(\beta)}(p)
    &\le
    3^{p/2}\sqrt{2\log p}, \\
    E_{n,\mathrm{prod}}^{(\beta)}(p)
    &\le
    (1+\varepsilon_p)\sqrt{2\log p}.
\end{align*}

\item[\textup{(ii)}] \textup{Bounded-average-degree diluted regime.}
For every fixed \(p\ge3\), there exists
\(\zeta_0=\zeta_0(p)<\infty\) such that, for every fixed
\(\zeta\ge\zeta_0\), with high probability as \(n\to\infty\),
\begin{align*}
    c\frac{3^{p/2}}{p}
    \le
    E_{n,\mathrm{gs}}^{(\zeta)}(p)
    &\le
    3^{p/2}\sqrt{2\log p}, \\
    E_{n,\mathrm{prod}}^{(\zeta)}(p)
    &\le
    (1+\varepsilon_p)\sqrt{2\log p}.
\end{align*}
\end{enumerate}
\end{corollary}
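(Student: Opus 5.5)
The plan is to combine three ingredients already in place: the mean-field benchmarks of Theorem~\ref{thm:benchmarks}, the expectation transfer of Lemma~\ref{lem:mean-field-limit-transfer}, and the concentration estimate of Lemma~\ref{lem:diluted-optima-concentration}. The route is \emph{whp for the mean-field model} $\Rightarrow$ \emph{bounds on mean-field expectations} $\Rightarrow$ \emph{bounds on diluted expectations} $\Rightarrow$ \emph{whp for the diluted model}. At each stage I allow a small additive slack, and at the end I absorb the slacks into the constant $c$ and into the sequence $\varepsilon_p$.

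\emph{Step 1 (mean-field expectations).} The random variables $E_{n,\mathrm{gs}}(p)$ and $E_{n,\mathrm{prod}}(p)$ have variance at most $C_p/n$ by Lemma~\ref{lem:diluted-optima-concentration}. Their means are bounded by $\E[n^{-1/2}\|H_{n,p}\|_{\mathrm{op}}]\le C_p$, using the matrix-series estimate quoted in the proof of Lemma~\ref{lem:mean-field-limit-transfer}. By Chebyshev, each variable lies within $o_n(1)$ of its mean with high probability. Intersecting this event with the whp events of Theorem~\ref{thm:benchmarks} gives
\[
\liminf_{n\to\infty}\E E_{n,\mathrm{gs}}(p)\ge \frac{3^{p/2}}{Cp},
\qquad
\limsup_{n\to\infty}\E E_{n,\mathrm{prod}}(p)\le (1+o_p(1))\sqrt{2\log p},
\]
and likewise $\limsup_{n\to\infty}\E E_{n,\mathrm{gs}}(p)\le 3^{p/2}\sqrt{2\log p}$.

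\emph{Step 2 (transfer to the diluted means).} In the regime of part (i), Lemma~\ref{lem:mean-field-limit-transfer}(i) shows that the diluted means differ from the mean-field means by $o_n(1)$. In the regime of part (ii), the iterated limit is not enough, since the corollary is stated for each fixed $\zeta\ge\zeta_0$. I would therefore go back to the explicit bounds \eqref{eq:ground-state-expectation-transfer-bound} and \eqref{eq:product-expectation-transfer-bound}. With $\kappa N_p/n\to\zeta/p!$, they give
\[
\limsup_{n\to\infty}\bigl|\E E^{(\zeta)}_{n,\mathrm{gs}}(p)-\E E_{n,\mathrm{gs}}(p)\bigr|\le \frac{2\log 2}{A}+C_pA^2\sqrt{p!/\zeta}
\]
for every $A\ge1$, and an analogous bound, with the extra $\eta$-terms, for the product-state optimum. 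Optimizing over $A$ (and over $\eta$ in the product case) gives a deterministic error $\delta_p(\zeta)\to0$ as $\zeta\to\infty$. I then choose $\zeta_0(p)$ so that for $\zeta\ge\zeta_0$ this error is below the slack budget of Step 4, for instance $3^{p/2}/(4Cp)$ for the ground state and $\sqrt{2\log p}/p$ for the product state.

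\emph{Step 3 (back to high probability).} Lemma~\ref{lem:diluted-optima-concentration} gives variance at most $C_p/n$ for both diluted optima in either regime. Chebyshev therefore returns each diluted optimum to within $o_n(1)$ of its mean with high probability, for each fixed $\zeta$ in part (ii).

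\emph{Step 4 (constants, and the main obstacle).} For the lower bound, take $c:=1/(2C)$; the total slack $3^{p/2}/(4Cp)+o_n(1)$ then stays below $3^{p/2}/(2Cp)$. For the product bound, set $\varepsilon_p:=o_p(1)+2/p$, where $o_p(1)$ is the mean-field term; this absorbs the additive slack $\sqrt{2\log p}/p+o_n(1)$ multiplicatively. The delicate point is the upper bound $E^{(\cdot)}_{n,\mathrm{gs}}(p)\le 3^{p/2}\sqrt{2\log p}$, which is stated with no slack, so the additive errors cannot simply be absorbed. I would first check whether the mean-field estimate in \cite{anschuetz2025bounds} in fact holds with a strictly smaller constant, for example $(1-o_p(1))3^{p/2}\sqrt{2\log p}$ or a bound on the expectation, leaving room for the slack. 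Failing that, I would bound $\E_g\lambda_{\max}(H^{(\cdot)}_{n,p})$ directly, conditionally on the hypergraph, via the same union/Gaussian argument used for the mean-field model. If neither works, I would weaken the constant to $2\cdot 3^{p/2}\sqrt{2\log p}$. This is harmless, because only the lower bound on the ground state and the product-state upper bound enter the main theorems.
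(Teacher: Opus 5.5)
Your treatment of the ground-state lower bound and the product-state upper bound is the paper's argument. In both, Chebyshev with the $C_p/n$ variance bound of Lemma~\ref{lem:diluted-optima-concentration} turns the mean-field high-probability benchmarks into bounds on $\liminf_n\E E_{n,\mathrm{gs}}(p)$ and $\limsup_n\E E_{n,\mathrm{prod}}(p)$. Your a priori bound on the means is not needed for this step. Lemma~\ref{lem:mean-field-limit-transfer} then moves these bounds to the diluted means, and Chebyshev again returns to high probability. The slacks are absorbed exactly as you do, with $c=c_0/2$ and $\varepsilon_p=\eta_p+\rho_p$.

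One misstatement: the iterated limit in part (ii) of Lemma~\ref{lem:mean-field-limit-transfer} \emph{is} enough. The statement $\lim_{\zeta\to\infty}\limsup_{n\to\infty}|\E E^{(\zeta)}_{n,\mathrm{gs}}(p)-\E E_{n,\mathrm{gs}}(p)|=0$ says precisely that, for any prescribed budget, there is $\zeta_0$ such that every fixed $\zeta\ge\zeta_0$ has its $\limsup_n$ below that budget. This is how the paper chooses $\zeta_{\mathrm{gs}}(p)$ and $\zeta_{\mathrm{prod}}(p)$. Your detour through the explicit transfer bounds and optimization over $A$ and $\eta$ is equivalent, just longer.

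The one step your proposal does not actually carry out is the ground-state upper bound $E^{(\cdot)}_{n,\mathrm{gs}}(p)\le 3^{p/2}\sqrt{2\log p}$. You correctly see that additive transfer errors cannot be absorbed into a bound stated without slack. However, you leave this as a menu of options, and your last fallback (doubling the constant) proves a weaker statement than the corollary. The paper takes your second option, using the matrix Gaussian series inequality you already quote in Step 1, but with its explicit constant tracked.

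Write $m$ for $m_n$ or $m_{n,\zeta}$. Conditionally on the hypergraph, $H^{(\cdot)}_{n,p}$ is a Gaussian series of the Hermitian $2^n\times 2^n$ matrices $m^{-1/2}P_I^a$, $I\in\mcE_p$. Their squares sum to $3^p|\mcE_p|m^{-1}\,\mathrm{Id}$. Hence $\E_g\|H^{(\cdot)}_{n,p}\|_{\mathrm{op}}\le\sqrt{2(n+1)\log 2}\,\bigl(3^p|\mcE_p|/m\bigr)^{1/2}$. Dividing by $\sqrt n$, averaging over $\xi$, and applying Jensen with $\E|\mcE_p|=m$ gives $\E E^{(\cdot)}_{n,\mathrm{gs}}(p)\le 3^{p/2}\sqrt{2(1+1/n)\log 2}$. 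Because $\log 2<\log p$ for $p\ge 3$, this sits strictly below $3^{p/2}\sqrt{2\log p}$, by a margin bounded away from zero uniformly in large $n$. Chebyshev with the $C_p/n$ variance bound then yields the high-probability upper bound in both regimes.

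This route never uses the mean-field upper bound, so no transfer error arises. With this step filled in, your proof is complete and coincides with the paper's.
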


\begin{proof}
Let \(c_0>0\) be a universal constant such that, for every fixed
\(p\ge3\),
\[
    \P\left(
        E_{n,\mathrm{gs}}(p)
        \ge
        c_0\frac{3^{p/2}}{p}
    \right)
    \to 1.
\]
Also, let \(\eta_p\ge0\) be a deterministic sequence with
\(\eta_p\to0\) such that
\[
    \P(
        E_{n,\mathrm{prod}}(p)
        \le
        (1+\eta_p)\sqrt{2\log p}
    )
    \to 1.
\]
These are the mean-field benchmark bounds from
\cite{anschuetz2025bounds,anschuetz2025strongly}. By enlarging finitely
many terms of \(\eta_p\), if necessary, we may take the second statement
to hold for every \(p\ge3\).

Set
\[
    c:=\frac{c_0}{2},
    \qquad
    A_p:=c\frac{3^{p/2}}{p},
    \qquad
    B_p:=(1+\eta_p)\sqrt{2\log p}.
\]
Choose any deterministic sequence \(\rho_p>0\) such that
\(\rho_p\to0\), and set $\varepsilon_p:=\eta_p+\rho_p$. Thus the mean-field ground-state lower bound is \(2A_p\).

We first record a consequence of the mean-field concentration estimate.
If a sequence \(Z_n\) satisfies
\(\operatorname{Var}(Z_n)\to0\) and
\(\P(Z_n\ge a)\to1\), then
\(\liminf_n\E Z_n\ge a\); otherwise, there would exist \(\rho>0\) and a
subsequence along which
\(\E Z_n\le a-\rho\), and Chebyshev's inequality would give
\[
    \P(Z_n\ge a)
    \le
    \frac{\operatorname{Var}(Z_n)}{\rho^2}
    \to0,
\]
which would be a contradiction. Similarly, if
\(\P(Z_n\le b)\to1\), then
\(\limsup_n\E Z_n\le b\).

Applying this observation and
Lemma~\ref{lem:diluted-optima-concentration} to the mean-field optima
gives
\begin{equation}
    \liminf_{n\to\infty}
    \E E_{n,\mathrm{gs}}(p)
    \ge
    2A_p,
    \qquad
    \limsup_{n\to\infty}
    \E E_{n,\mathrm{prod}}(p)
    \le
    B_p.
\label{eq:mean-field-expectation-benchmarks}
\end{equation}

We now prove part \textup{(i)}. Suppose that
\(\kappa_n=n^{-\beta}\), where \(0<\beta<p-1\).
Lemma~\ref{lem:mean-field-limit-transfer} and
\eqref{eq:mean-field-expectation-benchmarks} imply $\liminf_{n\to\infty}
    \E E_{n,\mathrm{gs}}^{(\beta)}(p)
    \ge
    2A_p$. Consequently, for every \(\varepsilon>0\), the distance between
\(\E E_{n,\mathrm{gs}}^{(\beta)}(p)\) and
\(A_p-\varepsilon\) is bounded below by a positive constant for all
sufficiently large \(n\). Hence, by
Lemma~\ref{lem:diluted-optima-concentration}, $\P(
        E_{n,\mathrm{gs}}^{(\beta)}(p)
        <
        A_p)
    \to0$.
    
For the product-state bound,
Lemma~\ref{lem:mean-field-limit-transfer} and
\eqref{eq:mean-field-expectation-benchmarks} imply
\[
    \limsup_{n\to\infty}
    \E E_{n,\mathrm{prod}}^{(\beta)}(p)
    \le
    (1+\eta_p)\sqrt{2\log p}.
\]
Since \(\varepsilon_p-\eta_p=\rho_p>0\), the threshold
\((1+\varepsilon_p)\sqrt{2\log p}\) lies a positive distance above the
diluted expectation for all sufficiently large \(n\). Therefore,
Lemma~\ref{lem:diluted-optima-concentration} gives
\[
    \P(
        E_{n,\mathrm{prod}}^{(\beta)}(p)
        >
        (1+\varepsilon_p)\sqrt{2\log p}
    )
    \to0.
\]
This proves part \textup{(i)}.

We next prove part \textup{(ii)}. Fix \(p\ge3\). By
Lemma~\ref{lem:mean-field-limit-transfer},
\[
    \lim_{\zeta\to\infty}
    \limsup_{n\to\infty}
    |
        \E E_{n,\mathrm{gs}}^{(\zeta)}(p)
        -
        \E E_{n,\mathrm{gs}}(p)
    |
    =0.
\]
Hence there exists
\(\zeta_{\mathrm{gs}}(p)<\infty\) such that, for every
\(\zeta\ge\zeta_{\mathrm{gs}}(p)\),
\[
    \limsup_{n\to\infty}
    |
        \E E_{n,\mathrm{gs}}^{(\zeta)}(p)
        -
        \E E_{n,\mathrm{gs}}(p)
    |
    \le
    \frac{A_p}{2}.
\]
Together with \eqref{eq:mean-field-expectation-benchmarks}, this gives
\[
    \liminf_{n\to\infty}
    \E E_{n,\mathrm{gs}}^{(\zeta)}(p)
    \ge
    \frac{3A_p}{2}
\]
for every fixed \(\zeta\ge\zeta_{\mathrm{gs}}(p)\). Therefore
Lemma~\ref{lem:diluted-optima-concentration} implies $\P\left(
        E_{n,\mathrm{gs}}^{(\zeta)}(p)
        <
        A_p
    \right)
    \to0$ for every such fixed \(\zeta\).

For the product-state bound, recall that \(\varepsilon_p=\eta_p+\rho_p\).
Again by Lemma~\ref{lem:mean-field-limit-transfer}, there exists
\(\zeta_{\mathrm{prod}}(p)<\infty\) such that, for every
\(\zeta\ge\zeta_{\mathrm{prod}}(p)\),
\[
    \limsup_{n\to\infty}
    |
        \E E_{n,\mathrm{prod}}^{(\zeta)}(p)
        -
        \E E_{n,\mathrm{prod}}(p)
    |
    \le
    \frac{\rho_p}{4}\sqrt{2\log p}.
\]
Using \eqref{eq:mean-field-expectation-benchmarks}, we obtain
\[
    \limsup_{n\to\infty}
    \E E_{n,\mathrm{prod}}^{(\zeta)}(p)
    \le
    \left(
        1+\eta_p+\frac{\rho_p}{4}
    \right)
    \sqrt{2\log p}.
\]
Since $\eta_p+\rho_p/4
    <
    \varepsilon_p$, the threshold
\((1+\varepsilon_p)\sqrt{2\log p}\) lies a positive distance above the
expectation. Thus
Lemma~\ref{lem:diluted-optima-concentration} gives
\[
    \P\left(
        E_{n,\mathrm{prod}}^{(\zeta)}(p)
        >
        (1+\varepsilon_p)\sqrt{2\log p}
    \right)
    \to0
\]
for every fixed
\(\zeta\ge\zeta_{\mathrm{prod}}(p)\).

Set $\zeta_0(p)
    :=
    \max\{
        \zeta_{\mathrm{gs}}(p),
        \zeta_{\mathrm{prod}}(p)
    \}$.
Then, for every fixed \(\zeta\ge\zeta_0(p)\), both claimed bounds hold
with high probability as \(n\to\infty\). Since
\(\eta_p\to0\) and \(\rho_p\to0\), we also have
\(\varepsilon_p\to0\).

It remains to prove the ground-state upper bounds. Conditionally on the
retained hypergraph, the Gaussian matrix-series
inequality~\cite[Theorem~1.2]{tropp2012User} gives
\[
    \E_g\|H_{n,p}^{(\zeta)}\|_{\mathrm{op}}
    \le
    \sqrt{2(n+1)\log2}
    \left(
        \frac{3^p|\mcE_p|}{m_{n,\zeta}}
    \right)^{1/2}.
\]
After dividing by \(\sqrt n\), taking expectation over the retained
hypergraph, and using Jensen's inequality and
\(\E|\mcE_p|=m_{n,\zeta}\), we obtain
\[
    \E E_{n,\mathrm{gs}}^{(\zeta)}(p)
    \le
    3^{p/2}
    \sqrt{2\left(1+\frac1n\right)\log2}.
\]
For \(p\ge3\), the right-hand side is strictly smaller than
\(3^{p/2}\sqrt{2\log p}\) for all sufficiently large \(n\). The
concentration estimate therefore gives
\[
    \P(
        E_{n,\mathrm{gs}}^{(\zeta)}(p)
        >
        3^{p/2}\sqrt{2\log p}
    )
    \to0.
\]
\end{proof}

\section{Background and additional context}
\label{sec:background}

This section collects background material for readers who may be less familiar with quantum spin glasses or with the NLTS literature. In Section~\ref{ssec:classical-quantum-p-spin-background}, we recall the classical \(p\)-spin model and explain how the quantum \(p\)-spin Hamiltonian studied in this paper can be viewed as its noncommutative analogue. In Section~\ref{ssec:ABN}, we provide additional context on the NLTS construction of Anshu, Breuckmann, and Nirkhe~\cite{anshu2023nlts}, emphasizing the ways in which
their code-based Hamiltonian differs from the random spin-glass Hamiltonians considered here. The material in this section is not needed for the proofs, but it is intended to situate our results within the broader landscape of spin-glass models, quantum Hamiltonian complexity, and state-preparation lower
bounds.

\subsection{Classical and quantum \(p\)-spin models}
\label{ssec:classical-quantum-p-spin-background}
The classical \(p\)-spin model is a central mean-field model in spin glass theory \cite{MezardMontanariBook, TalagrandBook, panchenko2013Sherrington}. It assigns a random energy to each spin configuration \(\eta\in\{\pm1\}^n\), where the randomness comes from Gaussian couplings among \(p\)-tuples of spins. Concretely, the classical Hamiltonian is the random
function
\[
H_{n,p}^{\mathrm{cl}}(\eta)
=
\binom{n}{p}^{-1/2}
\sum_{1\le i_1<\cdots<i_p\le n}
g_{i_1,\ldots,i_p}
\eta_{i_1}\cdots \eta_{i_p},
\qquad
\eta\in\{\pm1\}^n,
\]
where the coefficients \(g_{i_1,\ldots,i_p}\) are independent standard
Gaussian random variables. Thus \(H_{n,p}^{\mathrm{cl}}\) is a Gaussian process
indexed by the hypercube. Its covariance is determined by the overlap between
spin configurations: for \(\eta,\tau\in\{\pm1\}^n\),
\[
    \mathbb E\bigl[
        H_{n,p}^{\mathrm{cl}}(\eta)
        H_{n,p}^{\mathrm{cl}}(\tau)
    \bigr]
    =
    \binom np^{-1}
    \sum_{1\le i_1<\cdots<i_p\le n}
    \prod_{s=1}^p \eta_{i_s}\tau_{i_s}.
\]
If $R(\eta,\tau):=\frac1n\sum_{i=1}^n \eta_i\tau_i$ denotes the \textit{overlap}, then this covariance is \(R(\eta,\tau)^p+O_p(n^{-1})\). Thus configurations with large overlap have
strongly correlated energies, while configurations with small overlap have
much weaker energy correlation.

A central problem is to understand the extremal energy of this random
landscape, for instance
\[
    \max_{\eta\in\{\pm1\}^n} H_{n,p}^{\mathrm{cl}}(\eta).
\]
Extremal energies describe the most favorable configurations of the system,
up to the sign convention for the Hamiltonian, and determine the low-temperature
behavior of the associated spin glass. From the perspective of random
optimization, this is the problem of maximizing a strongly correlated random
objective over exponentially many configurations.

The quantum \(p\)-spin model \cite{das2008colloquium, sachdev1999quantum, anschuetz2025bounds} is the natural quantum analogue of the classical
\(p\)-spin model. Whereas the classical model describes disordered systems of classical spins, the quantum model describes disordered many-body systems of interacting qubits, where
noncommutativity and entanglement play an essential role. Such models arise as
idealized mean-field models of quantum spin glasses and have been studied in
connection with quantum annealing, quantum optimization, and the statistical
mechanics of disordered quantum systems.

In the quantum model, the energy is no longer a random function on classical
configurations, but a random self-adjoint operator acting on the Hilbert space $\mcH_n=(\C^2)^{\otimes n}$. The mean-field Gaussian quantum \(p\)-spin Hamiltonian is defined in \eqref{eq:mean-field-Hamiltonian}. For a normalized quantum state \(|\psi\rangle\in\mcH_n\), its energy is $\langle \psi|H_{n,p}|\psi\rangle$. Thus the unrestricted quantum optimization problem is
\[
    \sup_{\psi\in\mcH_n:\,\|\psi\|=1}
    \langle \psi|H_{n,p}|\psi\rangle .
\]
This formulation contains the classical \(p\)-spin model as a special
commuting case. To see this, suppose that only the \(Z\)-Pauli strings are
retained, namely consider
\[
    H_{n,p}^{Z}
    :=
    \binom np^{-1/2}
    \sum_{I=(i_1,\dots,i_p)\in\mcI_p^n}
    g_I
    \prod_{s=1}^p \sigma_{i_s}^3 .
\]
Then all summands commute and \(H_{n,p}^{Z}\) is diagonal in the computational
basis. For a basis vector \(|x\rangle\), \(x\in\{0,1\}^n\), set $\eta_i:=(-1)^{x_i}\in\{\pm1\}$. Since \(\sigma_i^3|x\rangle=\eta_i|x\rangle\), we have
\[
    \langle x|H_{n,p}^{Z}|x\rangle
    =
    \binom np^{-1/2}
    \sum_{I=(i_1,\dots,i_p)\in\mcI_p^n}
    g_I
    \prod_{s=1}^p \eta_{i_s}.
\]
Thus the diagonal entries of \(H_{n,p}^{Z}\) are exactly the classical
\(p\)-spin Hamiltonian values. Classical spin glasses are already random and frustrated: there need not exist a spin configuration that simultaneously optimizes all summands of the Hamiltonian. The genuinely quantum feature of
\eqref{eq:mean-field-Hamiltonian} is therefore not randomness or frustration
alone, but the presence of many Pauli strings which generally do not commute.

Since the Pauli-string summands in \(H_{n,p}\) generally do not commute, the
Hamiltonian is not diagonal in any fixed product basis and cannot be identified
with a random function on classical spin configurations. The optimization is
therefore over the full quantum state space, and the comparison between
unrestricted states, product states, and low-depth circuit states measures the
role of entanglement and circuit complexity in attaining large energy.

\subsection{Comparison with the ABN construction}
\label{ssec:ABN}
In this section, we compare our setting with the code-based NLTS construction of Anshu,
Breuckmann, and Nirkhe~\cite{anshu2023nlts}. 

A stabilizer Hamiltonian is built from mutually commuting Hermitian Pauli strings. Each such Pauli string \(S\)
defines a stabilizer check: the desired states are those satisfying the
constraint \(S|\psi\rangle=|\psi\rangle\). Equivalently, they lie in the
\(+1\)-eigenspace of \(S\). Since \(S^2=\mathrm{Id}_n\), the only possible eigenvalues of
\(S\) are \(\pm1\). The operator \((\mathrm{Id}_n-S)/2\) is therefore the projector onto
the \((-1)\)-eigenspace of \(S\). Indeed, if
\(S|\psi\rangle=-|\psi\rangle\), then
\[
    \frac{\mathrm{Id}_n-S}{2}|\psi\rangle=|\psi\rangle,
\]
whereas if \(S|\psi\rangle=|\psi\rangle\), then
\[
    \frac{\mathrm{Id}_n-S}{2}|\psi\rangle=0.
\]
Thus \((\mathrm{Id}_n-S)/2\) penalizes violation of the stabilizer check, and a sum of such
projectors penalizes states that violate one or more of the checks. When the
checks commute and have a common \(+1\)-eigenspace, the Hamiltonian is
frustration-free: its ground states simultaneously minimize every local term.

The construction of~\cite{anshu2023nlts} is based on Calderbank--Shor--Steane (CSS) quantum low-density parity-check (LDPC) codes, a class of quantum error-correcting codes. In this setting, CSS means that the stabilizer checks split into two families: \(\sigma^1\)-type checks and \(\sigma^3\)-type checks.
The LDPC condition means that each check acts on only \(O(1)\) qubits and that
each qubit participates in only \(O(1)\) checks. For binary
vectors \(v,w\in\mathbb F_2^n\), write
\[
    \sigma^{1,v}:=\prod_{i:v_i=1}\sigma_i^1,
    \qquad
    \sigma^{3,w}:=\prod_{i:w_i=1}\sigma_i^3 .
\]
The CSS compatibility condition ensures that the \(\sigma^1\)-type and
\(\sigma^3\)-type checks commute. The associated stabilizer Hamiltonian has
the form
\[
    H_{\mathrm{ABN}}
    =
    \sum_{v\in \mathcal R_1} \frac{\mathrm{Id}_n-\sigma^{1,v}}{2}
    +
    \sum_{w\in \mathcal R_3} \frac{\mathrm{Id}_n-\sigma^{3,w}}{2},
\]
where \(\mathcal R_1\) and \(\mathcal R_3\) are the sets of
\(\sigma^1\)- and \(\sigma^3\)-checks. The LDPC condition means that the
checks have bounded weight and that each qubit participates in only a bounded number of checks. Hence \(H_{\mathrm{ABN}}\) is an \(O(1)\)-local, bounded-degree, commuting, frustration-free Hamiltonian with \(\Theta(n)\) terms and ground energy zero. The result of~\cite{anshu2023nlts} shows that, for a suitable family of such Hamiltonians, every state of energy at most \(\epsilon n\), for some constant \(\epsilon>0\), requires circuit depth \(\Omega(\log n)\).

The Hamiltonians studied in this paper are structurally different. Unlike the code Hamiltonians above, they are random, disordered, and generally
noncommuting. Indeed, if two Pauli strings overlap on a qubit where their Pauli labels anticommute, then the two strings themselves anticommute; for example, $\sigma_i^1\sigma_i^3=-\sigma_i^3\sigma_i^1$. Thus these Hamiltonians do not have a commuting stabilizer-check description or a planted code space.

This distinction is not merely formal. In the code-based setting, low-energy states are constrained by the geometry of the underlying quantum code. The proof in~\cite{anshu2023nlts} exploits this structure by relating low-energy states to approximate codewords obtained from measurements in the
\(\sigma^1\)- and \(\sigma^3\)-bases, and then using clustering properties of those approximate codewords to rule out shallow-circuit preparation. Our model has no analogous codeword structure. The obstruction instead comes from the interaction between spin-glass disorder and the limited light-cones of shallow circuits.

In this sense, the quantum \(p\)-spin model provides a complementary testing approach for NLTS-type questions. It is much less structured than the known code-based constructions, but it is more directly tied to random quantum
optimization and disordered many-body physics.

\end{appendix}

\section*{Acknowledgements}
OAG was supported by funding from the Eric and Wendy Schmidt Center at the Broad Institute of MIT and Harvard. DG was funded by ONR grant N000142512545.

\bibliographystyle{amsplain}
\bibliography{bibliography-05.2026}

\end{document}